\title{\Large The Polyhedron-Hitting Problem\thanks{This research 
was partially supported by EPSRC.}}
\author{Ventsislav Chonev \\ 
Department of Computer Science \\
Oxford University, UK
\and 
Jo\"el Ouaknine \\ 
Department of Computer Science \\
Oxford University, UK
\and
James Worrell \\
Department of Computer Science \\
Oxford University, UK
}
\date{}
\begin{document}
\bibliographystyle{plain}

\maketitle
\begin{abstract}
We consider polyhedral versions of Kannan and Lipton's Orbit
Problem~\cite{KL80,orbit}---determining whether a target polyhedron
$V$ may be reached from a starting point $x$ under repeated
applications of a linear transformation $A$ in an ambient vector space
$\mathbb{Q}^m$.  In the context of program verification, very similar
reachability questions were also considered and left open by Lee and
Yannakakis in~\cite{LY92}, and by Braverman in~\cite{Bra06}.  We
present what amounts to a complete characterisation of the
decidability landscape for the Polyhedron-Hitting Problem, expressed
as a function of the dimension $m$ of the ambient space, together with
the dimension of the polyhedral target $V$: more precisely, for each
pair of dimensions, we either establish decidability, or show hardness
for longstanding number-theoretic open problems.
\end{abstract}

\newcommand{\ein}[1]{2^{#1^{O(1)}}} \newcommand{\pin}[1]{#1^{O(1)}}
\newcommand{\ra}{\mathbb{R}\cap\mathbb{A}}
\newcommand{\re}{\mathit{Re}} \newcommand{\im}{\mathit{Im}}
\newcommand{\ess}{\mathcal{S}} \newcommand{\spa}{\mathit{span}}

\setcounter{page}{1}

\section{Introduction}

Given a linear transformation $A$ over the vector space
$\mathbb{Q}^m$, together with a starting point $x$, the \textbf{orbit}
of $x$ under $A$ is the infinite sequence $\langle x, Ax, A^2x,
\ldots, A^jx, \ldots \rangle$. A natural decision problem in
discrete linear dynamical systems is whether the orbit of $x$ ever
hits a particular target set $V$.

An early instance of this problem was raised by Harrison in
1969~\cite{Har69} for the special case in which $V$ is simply a point
in $\mathbb{Q}^m$. Decidability remained open for over ten years, and
was finally settled in a seminal paper of Kannan and Lipton, who
moreover gave a polynomial-time decision procedure~\cite{KL80}. In
subsequent work~\cite{orbit}, Kannan and Lipton noted that the Orbit
Problem becomes considerably harder when the target $V$ is replaced by
a subspace of $\mathbb{Q}^m$: indeed, if $V$ has dimension $m-1$, the
problem is equivalent to the \emph{Skolem Problem}, known to be
\textbf{NP}-hard but whose decidability has remained open for over 80
years~\cite{SKOLEMREF}. Nevertheless, Kannan and Lipton speculated
in~\cite{orbit} that instances of the Orbit Problem with
low-dimensional subspaces as target might be decidable. This was
finally substantiated in~\cite{COW13}, which showed decidability for
vector-space targets of dimension at most 3, with polynomial-time
complexity for one-dimensional targets, and complexity in
$\mathbf{NP}^\mathbf{RP}$ for two- and three-dimensional targets.

In this paper, we study a natural generalisation of the Orbit Problem,
which we call the \textbf{Polyhedron-Hitting Problem}, in which the
target $V$ is allowed to be an arbitrary (bounded or unbounded)
polyhedron.\footnote{This problem was also considered in~\cite{TV11}
  under the appellation of \emph{Chamber-Hitting Problem}. However
  that paper focused on connections with formal language theory
  rather than on establishing decidability.} We present what amounts
to a complete characterisation of the decidability landscape for this
problem, expressed as a function of the dimension $m$ of the ambient
space $\mathbb{Q}^m$, together with the dimension $k$ of the
polyhedral target $V$; more precisely, for each pair of dimensions, we
either establish decidability, or show hardness for longstanding
number-theoretic open problems. Our results are summarised in
Fig.~\ref{fig-main-results}. As our algorithms rely on symbolic
manipulation of algebraic numbers of unbounded degree and height, all
decidable instances lie in $\mathbf{PSPACE}$.

\begin{figure*}
\label{fig-main-results}
\newcommand{\tick}{$\mathbf{PSPACE}$}
\newcommand{\tickp}{$\mathbf{P}$}

\begin{center}
\begin{tabular}{|c|c|c|c|c|c|c|}
\hline
& $m=1$ & $m=2$ & $m=3$ & $m=4$ & $m=k$ & $m\geq k+1$ \\
\cline{1-7}
$k=0$ & \tickp & \tickp & \tickp & \tickp & \tickp & \tickp \\
\cline{1-7}
$k=1$ & \tick & \tick & \tick & \tick & \tick & \tick \\
\cline{1-1}\cline{2-7}
$k=2$ & \cellcolor{gray} & \tick & \tick & \tick & \tick & \tick \\
\cline{1-1}\cline{2-7}
$k=3$ & \cellcolor{gray} & \cellcolor{gray} & \tick & $S_{5}$ & \tick & $S_{5}$ \\
\cline{1-1}\cline{2-7}
$k=4$ & \cellcolor{gray} & \cellcolor{gray} & \cellcolor{gray} & $D$ & $D$ & $D\ \&\ S_5$ \\
\cline{1-1}\cline{2-7}
$k\geq5$ & \cellcolor{gray} & \cellcolor{gray} & \cellcolor{gray} & \cellcolor{gray} & $D$ & $D\ \&\ S_{k+1}$ \\
\hline
\end{tabular}
\end{center}

\caption{Upper and lower complexity bounds for instances of the
  Polyhedron-Hitting Problem in ambient dimension $m$ with a
  $k$-dimensional target. The row $k=0$ corresponds to Kannan and
  Lipton's Orbit Problem~\cite{KL80,orbit}.  Upper complexity bounds
  are denoted by \textbf{P} and \textbf{PSPACE}, indicating membership
  in these classes, whereas lower bounds are denoted by $D$
  (indicating reduction from certain Diophantine-approximation
  problems detailed precisely in Sec.~\ref{subsec: PHPhard}) and $S_d$
  (indicating reduction from Skolem's Problem of order $d$, defined in
  Sec.~\ref{subsec: PHPhard}).}
\end{figure*}

A key motivation for studying the Polyhedron-Hitting Problem comes
from the area of program verification, and in particular the problem
of determining whether a simple while loop with linear (or affine)
assignments and guards will terminate or not. Very similar reachability
questions were considered and left open by Lee and Yannakakis
in~\cite{LY92} for what they termed ``real affine transition
systems''. Similarly, decidability for the special case of the
Polyhedron-Hitting Problem in which the polyhedral target consists of
a \emph{single} halfspace (rather than an intersection of several
halfspaces) was mentioned as an open problem by Braverman in Sec.~6
of~\cite{Bra06}.

It should be noted, however, that the problem considered in the
present paper differs in one fundamental respect from what is
traditionally termed the `Termination Problem' in the program
verification literature (see, e.g.,~\cite{BG14}). The latter studies
termination of while loops for \emph{all} possible initial starting
points (valuations of the variables), rather than for a \emph{fixed}
starting point as we consider in this paper. This distinction
drastically transforms the nature of the problem at hand.

In~\cite{OPW14}, the traditional Termination Problem is solved over
the integers for while loops under certain restrictions (chiefly,
diagonalisability of the associated linear transformation). That paper
relies on markedly different techniques from the present one,
eschewing Baker's Theorem and relying instead on non-constructive
lower bounds on sums of $S$-units (which in turn follow from deep
results in Diophantine approximation), as well as real algebraic
geometry.

The present paper vastly extends our earlier results
from~\cite{COW13}, in which only vector-space targets were
considered. Polyhedra, defined as intersections of (affine)
halfspaces, pose substantial new challenges, as evidenced among others
by the Diophantine-approximation lower bounds that arise for
polyhedral targets of dimension 4 or greater. In addition to classical
tools from algebraic and transcendental number theory such as
Baker's Theorem, the present paper relies crucially on
several tools not invoked in~\cite{COW13}
or~\cite{OPW14}, including techniques from Diophantine approximation,
convex geometry, as well as decision procedures for the existential
fragment of the first-order theory of the reals.

In terms of future work, either establishing complexity lower bounds,
or improving the $\mathbf{PSPACE}$ membership of the decidable problem
instances, stand out as challenging open questions.

\section{Polyhedron-Hitting Problem}

The focus of this paper is the \textbf{Polyhedron-Hitting Problem}: given a
square matrix $A\in\mathbb{Q}^{m\times m}$, a vector $x\in\mathbb{Q}^m$ and
polyhedron $P$ (represented as the intersection of halfspaces),
determine whether there exists a natural number $n$ such that
$A^nx\in P$.  We will denote by $\mathit{PHP}(m,k)$ the version
of the problem in which the ambient space is $\mathbb{Q}^m$ and the target
polyhedron has dimension $k \leq m$.

We begin this section with our decidability results for low-dimensional
versions of the problem. We define two related problems to which we reduce the
Polyhedron-Hitting Problem in order to obtain our complexity upper bounds: the
Extended Orbit Problem and the Simultaneous Positivity Problem. Then we proceed
to give hardness results for higher-dimensional cases by reducing from Skolem's
Problem and from Diophantine approximation.

\subsection{Decidability results}

Our effectiveness result on the Polyhedron-Hitting Problem is the following: 
\begin{theorem}\label{thm: PHPdecidability}
If $k\leq2$ or $m=k=3$, then $\mathit{PHP}(m, k)$ is in $\mathbf{PSPACE}$.
\end{theorem}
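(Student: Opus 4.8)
The plan is to decompose the Polyhedron-Hitting Problem into two more manageable sub-problems — an \emph{Extended Orbit Problem}, used for the low-dimensional target cases $k\le 2$, and a \emph{Simultaneous Positivity Problem}, used for the full-dimensional case $m=k=3$ — and to solve each in $\mathbf{PSPACE}$. First I would write the target as $P = H\cap\bigcap_{j=1}^{r}\{y\in\mathbb{R}^m:\langle a_j,y\rangle\le b_j\}$, where $H$ is the affine hull of $P$, an affine subspace of dimension $k$. Then $A^nx\in P$ holds iff $A^nx\in H$ and $\langle a_j,A^nx\rangle\le b_j$ for every $j$. Expanding $\langle a_j,A^nx\rangle$ and the equations cutting out $H$ via the Jordan/eigenvalue decomposition of $A$ turns each of these conditions into a statement about an exponential polynomial $u^{(j)}_n=\sum_i p^{(j)}_i(n)\,\lambda_i^n$, equivalently about a linear recurrence sequence whose characteristic roots lie among the eigenvalues of $A$.

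For $k\le 2$: the condition $A^nx\in H$ asks whether the orbit hits an affine subspace of dimension at most $2$, which sits firmly inside the decidable regime of the Orbit Problem with subspace targets from~\cite{COW13}; the Extended Orbit Problem augments this with the finitely many halfspace constraints, which — once the orbit is confined to the at-most-$2$-dimensional $H$ — involve linear recurrence sequences of small order. I would split into cases according to the dominant-eigenvalue structure of $A$. When a single term eventually dominates (real dominant eigenvalue, or a separation of moduli), the truth value of each inequality is ultimately constant and determined by signs of leading coefficients, so it suffices to inspect a prefix of the orbit whose length is computable from lower bounds on linear forms in logarithms (Baker's Theorem); this residual finite search lies in $\mathbf{PSPACE}$ since all the algebraic numbers involved have polynomially bounded degree and can be manipulated, and first-order queries over them answered, in polynomial space. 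When several eigenvalues of maximal modulus compete, I would exploit equidistribution (Weyl/Kronecker) of the arguments $n\theta_i\bmod 2\pi$ on a torus together with a decision procedure for the existential first-order theory of the reals to test whether the limiting region carved out by the inequalities is nonempty, treating the degenerate case where that region is lower-dimensional (e.g.\ a tangency) by a further Baker-type bound on the relevant~$n$.

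For $m=k=3$: here $H=\mathbb{R}^3$, so $A^nx\in P$ reduces purely to the Simultaneous Positivity Problem — deciding whether finitely many linear recurrence sequences, each of order at most $4$ with characteristic roots among the roots of a degree-$3$ rational polynomial, are simultaneously nonnegative at some index $n$. The possible configurations of the moduli and arguments of these roots are very limited, and I would enumerate them: an all-real dominant eigenvalue reduces, as above, to a bounded search plus an ultimate-sign analysis; a complex conjugate pair of maximal modulus reduces the question to whether a region on a circle (or annulus) defined by the inequalities is visited by an orbit whose angle equidistributes — the rational-angle case handled by finitely many residues, the irrational case by Weyl equidistribution combined with the existential theory of the reals — with subdominant contributions controlled by Baker's Theorem. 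In every branch the computations stay in polynomial space, yielding the $\mathbf{PSPACE}$ bound.

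The step I expect to be the main obstacle is the oscillatory regime: when the dominant dynamics is rotation by an angle that is an irrational multiple of~$2\pi$, one must decide not only whether the polyhedral inequalities cut out a set with nonempty interior on the limiting torus — an existential-theory-of-the-reals query — but also whether a measure-zero boundary event can occur, such as the orbit being asymptotically tangent to a face of $P$ without ever entering~$P$. Ruling such events in or out requires carefully pairing Baker's Theorem (to bound the window over which subdominant terms can matter) with a quantitative equidistribution or Diophantine-approximation estimate, and arranging this argument to terminate with a definite answer under the dimension restrictions $k\le 2$ or $m=k=3$ — precisely the thresholds beyond which, as Fig.~\ref{fig-main-results} indicates, the problem becomes as hard as Skolem's Problem or open Diophantine-approximation questions — is the crux of the proof.
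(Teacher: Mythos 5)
Your reduction for the full-dimensional case is essentially the paper's: taking the halfspace description, forming the sequences $v_i^TA^nx-c_i$, and noting they satisfy a common recurrence of order at most $4$ with $1$ as a characteristic root is exactly how $\mathit{PHP}(3,3)$ is reduced to Simultaneous Positivity here. But two genuine gaps remain. First, for $k\le 2$ your key claim---that once the orbit is confined to the affine hull $H$ the halfspace constraints ``involve linear recurrence sequences of small order''---is unjustified: the sequences $\langle a_j,A^nx\rangle-b_j$ satisfy a recurrence of order up to $m+1$, not of order governed by $\dim H$, and the set of $n$ with $A^nx\in H$ may be infinite but sparse, so a dominant-eigenvalue sign analysis of those sequences plus a Baker-bounded prefix does not decide anything. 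Indeed, if that style of argument worked for arbitrary-order inequality sequences it would also decide order-$4$ Simultaneous Positivity, contradicting the paper's own Diophantine-approximation hardness result. What actually exploits $k\le2$ in the paper is different machinery: a parametric decomposition of the polyhedron (segment/ray/line for $k=1$; triangles, strips and cones via Carath\'eodory for $k=2$), homogenisation by an extra coordinate forcing the affine coefficient $u_1=1$, reduction to the Extended Orbit Problem with a target \emph{space} of dimension $\le 3$, and then the Master System analysis: when the eigenvalues fall into sufficiently many equivalence classes under ``ratio is a root of unity'' one gets an explicit bound on $n$ from membership alone, and otherwise, after guessing $n \bmod L$, the system collapses so that the coefficient sequences are LRS of order at most $3$, which is where low order genuinely enters. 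None of this is present in your sketch.

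Second, the oscillatory case that you yourself flag as ``the main obstacle'' is left unresolved, and it is precisely the content of the paper's technical lemmas. The resolution does not require quantitative equidistribution: Baker--W\"ustholz gives an effective inverse-polynomial lower bound on $|b_j+\cos(\alpha_j+n\varphi)|$ for $n$ beyond an exponential threshold (Lemma~\ref{lem:cosExpShrink}), which beats the exponentially decaying remainder, so past that threshold the problem becomes a finite conjunction of arc-membership conditions $\cos(\alpha_j+n\varphi)\ge -b_j$; these are then decided by mere density of $\{\lambda^n\}$ on the unit circle when the intersection of arcs is a nontrivial arc, and by a Skolem-type bound (Theorem~\ref{thm: skolem}) when it degenerates to at most two points---which disposes of your ``asymptotic tangency'' worry. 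Without these two ingredients (the Extended-Orbit/Master-System route for $k\le2$, and the Baker-plus-arc argument for the dominant complex pair), the proposal does not yield a proof.
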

The strategy for $\mathit{PHP}(m, k)$ when $k\leq2$ is to reduce to the related
\textbf{Extended Orbit Problem}: given a linear transformation
$A\in\mathbb{Q}^{m\times m}$, a vector $x\in\mathbb{Q}^m$, a target
$\mathbb{Q}$-vector space $V$ defined by a basis $\{ y_1, \dots, y_d
\}\subseteq\mathbb{Q}^m$ and a constraint matrix $B\in(\ra)^{k\times d}$,
determine whether there exists some exponent $n\in\mathbb{N}$ such that
$A^nx\in V$ and the coordinates $u=(u_1,\dots,u_d)^T$ of $A^nx$ with respect to
the basis $\{y_1,\dots,y_d\}$ satisfy $Bu\geq0$. This problem essentially
specialises the target of $\mathit{PHP}$ to a cone and assumes a particular
parametric representation.

We focus first on $\mathit{PHP}(m, 1)$. By Lemma \ref{lem: poly1} in
Appendix \ref{sec: polyhedra},
a one-dimensional polyhedron is of the form
\[ P = \{ v_1 + \alpha v_2 : \alpha\in I \} \]
where $I$ is one of $\mathbb{R}$, $[0,1]$ and $[0,\infty)$.  Moreover, this
parametric representation is computable in polynomial time from the halfspace
description of $P$. Now suppose we wish to find $n\in\mathbb{N}$ and $u_1,u_2
\in\mathbb{Q}$ such that 
\[
\left[\begin{array}{cc}
	A & 0 \\
	0 & 1 \\
\end{array}\right]^n
\left[\begin{array}{c}
x \\ 1
  \end{array}\right]
=
u_1\left[\begin{array}{c} v_1 \\ 1 \end{array}\right]
+
u_2\left[\begin{array}{c} v_2 \\ 0 \end{array}\right]
\]
The $(m+1)$-th component forces any witness to this
problem instance to have $u_1=1$. Therefore, requiring $u_2\geq0$ and
$u_1-u_2 \geq 0$ gives an Extended Orbit instance with a
two-dimensional target space which is positive if
and only if the segment $\{v_1+u_2v_2:u_2\in[0,1]\}$
intersects the orbit $\{A^nx:n\in\mathbb{N}\}$. Requiring instead only
$u_2\geq0$ gives the half-line $\{v_1+u_2v_2: u_2
\in[0,\infty)\}$, whereas setting no restriction 
gives the whole line $\{v_1+u_2v_2:u_2\in\mathbb{R}\}$.
In all cases, the resulting Extended Orbit instance has target space of
dimension two, so by Theorem \ref{thm: EOdecidability} in Section \ref{sec: EO},
$\mathit{PHP}(m, 1)$ is in $\mathbf{PSPACE}$.

Now we move to $\mathit{PHP}(m, 2)$. By Lemma \ref{lem: poly2} in
Appendix \ref{sec: polyhedra}, any
two-dimensional polyhedron can be decomposed into a finite union of simple
shapes: $P=\bigcup_{i=1}^s S_i$ where
\[ 
S_i = 
\{ v_{i_1} + \alpha v_{i_2} + \beta v_{i_3} : 
\alpha\geq0\mbox{ and }\beta\geq0
\mbox{ and }T(\alpha, \beta) \} 
\]
where the predicate $T$ is either $\alpha+\beta\leq1$, or $\beta\leq1$ or
$\mbox{true}$. In fact, it is easy to see from the proof of Lemma \ref{lem:
poly2} that $s\in\ein{\Vert P\Vert}$. For each $i$, the problem of whether
there exists $n$ such that $A^n x\in S_i$ reduces to the Extended Orbit Problem
with a three-dimensional target. For instance, if the predicate $T_i$ is 
$\alpha+\beta\leq1$, that is, $S_i$ is a triangle, then
$A^nx\in S_i$ if and only if there exist $u_1,u_2,u_3\in\mathbb{Q}$ such that
$u_2\geq0$, $u_3\geq0$, $u_1-u_2-u_3\geq0$ and
\[ 
\left[\begin{array}{cc}
	A & 0 \\
	0 & 1 \\
\end{array}\right]^n 
\left[\begin{array}{c}
x\\ 1
  \end{array}\right]
=
\left[
\begin{array}{ccc}
v_{i_1} & v_{i_2} & v_{i_3} \\
1 & 0 & 0 \\
\end{array}
\right]
\left[
\begin{array}{c}
u_1 \\ u_2 \\ u_3
\end{array}
\right]
\]
As in the reduction from $\mathit{PHP}(m, 1)$, the $(m+1)$-th component forces
$u_1=1$ and allows us to express the constraint $u_2+u_3\leq1$ with a
homogeneous inequality. The remaining possible choices of predicate 
$T$ reduce similarly. By Theorem \ref{thm: EOdecidability} in Section
\ref{sec: EO}, the Extended Orbit
Problem with target space of dimension three is in $\mathbf{PSPACE}$. 
Therefore, to solve $\mathit{PHP}(m, 2)$ in $\mathbf{PSPACE}$, it suffices
to choose nondeterministically a simple two-dimensional target $S_i$ and
proceed to solve an Extended Orbit instance.

Finally, consider the Polyhedron-Hitting Problem in the case when the
target polyhedron $P$ has dimension $m$, matching the dimension of the
ambient space $\mathbb{Q}^m$. Consider the halfspace description of~
$P$:
\[ P =\bigcap_{i=1}^s H_i = \bigcap_{i=1}^s 
\{ p\in\mathbb{Q}^m : v_i^Tp \geq c_i \} \] Define the linear
recurrence sequences $\ess_i(n)=v_i^TA^nx$.  By the Cayley-Hamilton
Theorem, the sequences $\ess_i$ satisfy a common recurrence equation
with characteristic polynomial the minimal polynomial $f_A(x)$ of
$A$. Define also the sequences $\ess'_i(n)=\ess_i(n)-c_i$.  It is not
difficult to show that the latter also satisfy a common recurrence
equation, with characteristic polynomial $(x-1)f_A(x)$.  Since $f_A$
has degree at most $m$, the order of the recurrence equation shared by
the sequences $\ess'_i(n)$ is at most $m+1$.  Moreover $A^nx \in P$
iff $\ess'_i(n) \geq 0$ for $i=1,\ldots,s$.

Thus, $\mathit{PHP}(m,m)$ reduces to the \textbf{Simultaneous Positivity
Problem}: given a family of linear recurrence sequences $\ess'_i(n)$,
$i=1,\ldots,s$, which satisfy a common recurrence relation of order $m+1$, does
there exist an index $n$ such that $\ess'_i(n)\geq0$ for all $i$?  This problem
is the focus of Section \ref{sec: simpos}, where we place it in
$\mathbf{PSPACE}$ in the case of LRS over $\ra$ whose shared recurrence
relation is of order at most three, or of order four but with $1$ as a
characteristic root. This immediately shows that $\mathit{PHP}(3,3)$ is in
$\mathbf{PSPACE}$, completing the proof of Theorem \ref{thm:
PHPdecidability}.\footnote{In fact we can solve the problem in greater
generality. One can show a $\mathbf{PSPACE}$ bound in the case
of a \emph{simple} shared recurrence with at most four dominant
complex roots. This in turn entails membership in $\mathbf{PSPACE}$ for 
$\mathit{PHP}(4,4)$ and $\mathit{PHP}(5,5)$ in the case of a diagonalisable 
matrix. We omit this from the present paper for lack of space.}

\subsection{Hardness results}\label{subsec: PHPhard}

Now we proceed to give hardness results for the Polyhedron-Hitting Problem.
First, observe that lower-dimensional versions of $\mathit{PHP}$ reduce to
higher-dimensional ones:
\begin{lemma}\label{lem: propagation}
For all $m,k$ such that $m\geq k$, $\mathit{PHP}(m,k)$ reduces to
$\mathit{PHP}(m+1,k)$ and to $\mathit{PHP}(m+1, k+1)$. 
\end{lemma}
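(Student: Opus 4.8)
The plan is to exhibit two explicit reductions, each of which takes an instance $(A, x, P)$ of $\mathit{PHP}(m,k)$ and produces an instance of the target problem in one higher ambient dimension, preserving the dimension bookkeeping and the yes/no answer. For the reduction $\mathit{PHP}(m,k)\to\mathit{PHP}(m+1,k)$, first I would embed everything into $\mathbb{Q}^{m+1}$ by augmenting the matrix as $A' = \left[\begin{smallmatrix} A & 0 \\ 0 & 1 \end{smallmatrix}\right]$ and the starting point as $x' = \left[\begin{smallmatrix} x \\ 1 \end{smallmatrix}\right]$. Since $(A')^n x' = \left[\begin{smallmatrix} A^n x \\ 1 \end{smallmatrix}\right]$, the last coordinate is pinned to $1$ along the whole orbit. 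Then I would take $P' = (P \times \{1\}) = \{ p' \in \mathbb{Q}^{m+1} : (p'_1,\dots,p'_m)\in P \text{ and } p'_{m+1}=1\}$, which is obtained from the halfspace description of $P$ by appending the two inequalities $p'_{m+1}\geq 1$ and $-p'_{m+1}\geq -1$. Since $P'$ is an affine image of $P$ under the injective affine map $p\mapsto(p,1)$, it has the same dimension $k$; and $(A')^n x'\in P'$ iff $A^n x\in P$, giving the equivalence. Everything is clearly polynomial-time computable.

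For the reduction $\mathit{PHP}(m,k)\to\mathit{PHP}(m+1,k+1)$, the idea is the same embedding $A', x'$ as above, but now I would ``thicken'' the target by one dimension so that its dimension rises to $k+1$ while the intersection with the hyperplane $\{p'_{m+1}=1\}$ is exactly $P\times\{1\}$. Concretely I would take $P'' = \{ (p, t) : p \in \mathbb{Q}^m,\ t\in\mathbb{R},\ (p,1)\in P'\ \text{when}\ t=1 \}$ — more cleanly, let $w\in\mathbb{Q}^m$ be any point of $P$ (computable, or if $P=\emptyset$ the instance is trivially ``no''), and set $P'' = \mathrm{conv}\big((P\times\{1\}) \cup \{(w,0)\}\big)$ augmented appropriately, or even more simply take $P'' = \{(p,t) : t\in[0,1],\ (p + (1-t)w') \in \text{(something)}\}$. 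The cleanest choice: let $P'' = \{ (p,t) \in \mathbb{Q}^{m+1} : (p,1) \in P' \text{ relaxed by } t \}$, i.e. replace each halfspace $v_i^T p \geq c_i$ of $P$ by $v_i^T p \geq c_i t$ and add $0\leq t\leq 1$. Then $P'' \cap \{t=1\} = P\times\{1\}$, the cone-like relaxation makes $\dim P'' = k+1$ (adding the free-ish direction in $t$), and since the orbit stays in $\{t=1\}$ we get $(A')^n x'\in P''$ iff $A^n x\in P$.

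The main obstacle — really the only subtle point — is verifying the dimension claim $\dim P'' = k+1$ for the second reduction: one must check that the homogenised/relaxed polyhedron genuinely gains exactly one dimension rather than zero or more, and handle degenerate cases (e.g. $P$ empty, $P$ not full-dimensional in its own affine hull, or $0$ lying on boundaries). I would argue this by noting that $P''$ contains the segment from $(w,1)$ to $(w,0)$ (for $w\in P$) together with the full $k$-dimensional body $P\times\{1\}$, and these span a $(k+1)$-dimensional affine subspace, while $P''$ is contained in the $(k+1)$-dimensional affine hull of $\mathrm{aff}(P)\times\mathbb{R}$; one should also double-check that when $k=m$ the relaxation does not exceed dimension $m+1$, which it cannot since the ambient space is $\mathbb{Q}^{m+1}$. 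The transitive closure of these two reductions then yields $\mathit{PHP}(m,k)\to\mathit{PHP}(m',k')$ for all $m'\geq m$, $k'\geq k$ with $m'-k'\geq m-k$ handled by chaining, but the lemma as stated only needs the two single-step cases.
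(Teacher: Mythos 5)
Your two constructions are in the right spirit, but the one step you yourself single out as the only subtle point---that the thickened target $P''$ has dimension $k+1$---is argued incorrectly on both sides. For the lower bound, the segment from $(w,1)$ to $(w,0)$ need not lie in $P''$: at height $t$ the constraint reads $v_i^Tw\geq c_i t$, which can fail for $t<1$ when $c_i<0$ (take $P=\{p\in\mathbb{R}:p\geq -1\}$ and $w=-1$; then $(w,t)\in P''$ only for $t=1$). For the upper bound, $P''$ is \emph{not} contained in $\mathrm{aff}(P)\times\mathbb{R}$: for $0<t<1$ the slice of $P''$ at height $t$ is $tP\times\{t\}$, which leaves the affine hull of $P$ unless that hull passes through the origin (take $P=\{2\}\subset\mathbb{R}$). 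Both defects are repairable within your construction: the origin of $\mathbb{Q}^{m+1}$ always satisfies the homogenised constraints at $t=0$, and together with $P\times\{1\}$ it affinely spans a set of dimension $k+1$; conversely, every point of $P''$ with $t>0$ has the form $t\cdot(p/t,1)$ with $p/t\in P$, and the $t=0$ slice is the recession cone of $P$, each of whose elements is a difference of two points of $P$, so $P''$ lies in the linear span of $P\times\{1\}$, which has dimension $k+1$. Your equivalence arguments (the orbit of $(x,1)$ under $\mathrm{diag}(A,1)$ stays in the hyperplane $t=1$, where $P''$ slices to exactly $P\times\{1\}$) are fine. So the construction works, but the dimension verification must be redone along the above lines.

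You should also compare with the paper, which gets both reductions in one line and without homogenisation: with the same block matrix it uses starting point $(x,0)$ and target $P\times\{0\}$ for the dimension-preserving reduction, and starting point $(x,1)$ with target $P\times\{1\}$ for the dimension-raising one, relying on the convention of Appendix A.1 that $\dim(P)$ is the dimension of the \emph{linear} span of $P$, so that lifting $P$ to height $1$ adds one to the dimension. Your first reduction instead tacitly uses the affine-hull notion of dimension (under the span convention, $P\times\{1\}$ may already have dimension $k+1$), which is precisely why you then had to work harder for the second reduction; embedding at height $0$ versus height $1$ sidesteps all of this.
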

\begin{proof}
Given $A\in\mathbb{Q}^{m\times m}$, $x\in\mathbb{Q}^m$ and a polyhedron
$P\subseteq\mathbb{Q}^m$ with $\dim(P)=k$, we define the polyhedra $P' =
\{ (t, 0) \in\mathbb{Q}^{m+1} : t\in P\}$ and $P'' = \{ (t, 1) \in
\mathbb{Q}^{m+1} : t \in P \}$. Note that $\dim(P')= k$ and $\dim(P'') = k+1$.
Then
\begin{eqnarray*}
A^nx\in P
& \iff &
\left[\begin{array}{cc}
	A & 0 \\
	0 & 1 \\
\end{array}\right]^n
\left[\begin{array}{c} x \\ 0 \end{array}\right] \in P' \\
& \iff &
\left[\begin{array}{cc}
	A & 0 \\
	0 & 1 \\
\end{array}\right]^n
\left[\begin{array}{c} x \\ 1 \end{array}\right] \in P''
\end{eqnarray*}
which shows both reductions.
\end{proof}

Next, recall that \textbf{Skolem's Problem} is the problem of determining,
given a linear recurrence sequence $\ess(n)$ over $\mathbb{Q}$, whether it has
a zero, that is, an index $n\in\mathbb{N}$ such that $\ess(n)=0$. The
decidability of Skolem's Problem for sequences of order $5$ or greater has been
open for decades.

It is easy to show that Skolem's Problem for LRS of order $m$ reduces to
$\mathit{PHP}(m, m-1)$. For a linear recurrence sequence $\ess(n)=y^TA^nx$, we
have $\ess(n)=0$ if and only if $A^nx \in P$, where $P$ is the polyhedron $\{ t
\in \mathbb{Q}^m : y^Tt \geq 0\mbox{ and }y^Tt \leq 0 \}$. In fact,
$P=(\spa\{y\})^\perp$, so $\dim(P)=m-1$ and this is an instance of
$\mathit{PHP}(m, m-1)$. By Lemma \ref{lem: propagation}, it follows that 
whenever $m>k$, decidability of $\mathit{PHP}(m,k)$ would imply 
decidability of Skolem's Problem for LRS of order $k+1$.

In fact, we can show that even $\mathit{PHP}(4,3)$ is hard for
Skolem's Problem for linear recurrence sequences of order $5$.
\begin{lemma}\label{lem: skolem5hardness}
Skolem's Problem for LRS of order $5$ reduces to $\mathit{PHP}(4,3)$.
\end{lemma}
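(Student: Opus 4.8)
The plan is to realise every order-$5$ Skolem instance as a $\mathit{PHP}(4,3)$ instance whose target is an \emph{affine} hyperplane of $\mathbb{Q}^4$ not containing the origin, i.e.\ a set $\{t\in\mathbb{Q}^4 : y^Tt=d\}$ with $y\neq 0$, which is the intersection of the halfspaces $y^Tt\ge d$ and $-y^Tt\ge -d$ and has dimension exactly $3$. The starting point is that for $B\in\mathbb{Q}^{4\times4}$ and $y,z\in\mathbb{Q}^4$ the sequence $y^TB^nz$ is an LRS of order at most $4$, so $y^TB^nz-d$ is an LRS of order at most $5$ having $1$ as a characteristic root, and ``$B^nz\in\{y^Tt=d\}$'' says precisely ``$y^TB^nz-d=0$''. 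Conversely, the zero set of any order-$5$ LRS of the form (order-$4$ LRS) minus a nonzero constant is captured by such a $\mathit{PHP}(4,3)$ instance. Hence it is enough to reduce Skolem's Problem for order-$5$ LRS to the special case of those LRS whose minimal characteristic polynomial has a simple nonzero rational root.

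For that special case the reduction is an elementary scaling argument. Let $\ess(n)$ have minimal recurrence of order exactly $5$ and characteristic polynomial $g(x)=(x-r)h(x)$ with $r\in\mathbb{Q}\setminus\{0\}$ simple and $\deg h=4$. Then $\ess(n)=cr^n+\widetilde{\ess}(n)$, where $c\neq 0$ (otherwise $\ess$ would have order at most $4$) and $\widetilde{\ess}$ satisfies the recurrence with characteristic polynomial $h$; write $\widetilde{\ess}(n)=y^TM^nz$ with $M\in\mathbb{Q}^{4\times4}$ the companion matrix of $h$ and $y,z\in\mathbb{Q}^4$. Then $\ess(n)=0$ iff $\widetilde{\ess}(n)=-cr^n$ iff $y^T(M/r)^nz=-c$ iff $(M/r)^nz$ lies on the affine hyperplane $\{t\in\mathbb{Q}^4 : y^Tt=-c\}$. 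This gives a $\mathit{PHP}(4,3)$ instance over $M/r\in\mathbb{Q}^{4\times4}$ that is positive exactly when $\ess$ has a zero; the decisive point is that dividing through by $r^n$ replaces the moving target $-cr^n$ by the fixed target $-c$, so that no fifth ambient coordinate is needed to track $r^n$.

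What remains --- and what I expect to be the main obstacle --- is to reduce an arbitrary order-$5$ Skolem instance to this special case. I would argue by a case analysis on the factorisation of $g$ over $\mathbb{Q}$, preceded by the customary non-degeneracy reduction to the subsequences $\ess(\ell n+j)$. If some such subsequence presents a simple nonzero rational root, we are done by the construction above. Otherwise $g$ is irreducible of degree $5$, a product of an irreducible quadratic and an irreducible cubic, or has a repeated rational root; in each of these cases a modulus and conjugacy argument on the characteristic roots shows that any two conjugate pairs among them have distinct moduli, so at most one conjugate pair is dominant, which --- together with the subcases where a real characteristic root dominates --- places the instance in a class for which Skolem's Problem is already known to be decidable, and it can then be mapped to a trivial $\mathit{PHP}(4,3)$ instance with the correct fixed answer. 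Carrying out this dispatch carefully, in particular verifying that no genuinely hard order-$5$ instance evades the simple-rational-root case, is the delicate ingredient; once it is in place, the reduction itself is just the scaling construction of the second paragraph.
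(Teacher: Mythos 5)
Your scaling construction is, in substance, the paper's own reduction: the paper likewise writes the order-5 sequence as an order-4 part plus $c\rho^n$, where $\rho$ is the real characteristic root, divides through by $\rho^n$, and asks whether the orbit of the companion matrix of the resulting order-4 sequence hits the affine hyperplane ``first coordinate $=-c$'', a three-dimensional polyhedron in ambient dimension $4$. The only difference lies in how a general order-5 instance is brought into this shape: the paper invokes the classification from \cite{rp}, by which the only outstanding case has five simple roots $\lambda_1,\overline{\lambda_1},\lambda_2,\overline{\lambda_2},\rho$ with $|\lambda_1|=|\lambda_2|>|\rho|>0$, and then divides by that $\rho^n$ without asking whether $\rho$ is rational; you insist on a simple nonzero rational root so as to stay over $\mathbb{Q}$, and defer everything else to a ``dispatch''.

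That dispatch is where your proof has a genuine gap: the assertion that in the irreducible-quintic and quadratic-times-cubic cases ``any two conjugate pairs have distinct moduli'' is the crux, and it is left unproven; note it is even false without non-degeneracy, since $x^5-2$ is irreducible with two complex pairs of equal modulus. The gap is closable. For an irreducible quintic with $|\lambda_1|=|\lambda_2|=m$, apply a Galois automorphism sending $\lambda_1$ to $\rho$ (transitivity) to the relation $\lambda_1\overline{\lambda_1}=\lambda_2\overline{\lambda_2}$: the other three roots appearing in the image all have modulus $m$, so $|\rho|\,m=m^2$, i.e.\ $|\rho|=m$, contradicting the hard configuration $|\rho|<m$ (and when all five moduli are equal, the product of the roots forces the quintic to be $x^5-q$, whose root ratios are fifth roots of unity, so the sequence is degenerate and removed by your preprocessing). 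For an irreducible quadratic times an irreducible cubic, equal pair-moduli give $m^2\in\mathbb{Q}$ from the quadratic's constant term and $\rho m^2\in\mathbb{Q}$ from the cubic's, forcing $\rho\in\mathbb{Q}$, a contradiction. Combined with the known decidability facts you are relying on anyway (the same classification the paper cites from \cite{rp}), this shows the outstanding configuration always has a simple nonzero rational real root, so your reduction goes through --- and it then produces a $\mathit{PHP}(4,3)$ instance with genuinely rational data, a point the paper passes over by dividing by $\rho^n$ without comment. As submitted, however, the argument is incomplete precisely at the step you yourself flag as the main obstacle.
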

\begin{proof}
As discussed in reference \cite{rp}, the only outstanding case of Skolem's
Problem of order $5$ is when the LRS has five characteristic roots: two pairs
of complex conjugates $\lambda_1,\overline{\lambda_1}$,
$\lambda_2,\overline{\lambda_2}$ and a real root $\rho$, such that
$|\lambda_1|=|\lambda_2|>|\rho|>0$.  Therefore, let $\ess_1(n)$ be such a
sequence, given by
\[ 
\ess_1(n) = 
a\lambda_1^n + \overline{a\lambda_1^n} + 
b\lambda_2^n + \overline{b\lambda_2^n} + c\rho^n 
\]
Define the order-4 sequence $\ess_2(n)$ by
\[ 
\ess_2(n) = 
\frac{
a\lambda_1^n + 
\overline{a\lambda_1^n} + 
b\lambda_2^n + 
\overline{b\lambda_2^n}}{\rho^n}
\]
Let $A$ be the $4\times4$ companion matrix of $\ess_2$, and let
$x$ be the vector of initial terms of $\ess_2$, so that
\[ A^n x = \left[ \begin{array}{c} \ess_2(n) \\ \ess_2(n+1) \\ \ess_2(n+2) \\ \ess_2(n+3)\end{array} \right] \]
Then $\ess_1(n)=0$ if and only if $\ess_2(n)=-c$, or equivalently, 
if there exist $u_1,u_2,u_3$ such that
\[ 
A^n x = 
\left[ \begin{array}{c} -c \\ 0 \\ 0 \\ 0 \\ \end{array} \right] 
+ u_1\left[\begin{array}{c} 0 \\ 1 \\ 0 \\ 0 \\ \end{array} \right]
+ u_2\left[\begin{array}{c} 0 \\ 0 \\ 1 \\ 0 \\ \end{array} \right]
+ u_3\left[\begin{array}{c} 0 \\ 0 \\ 0 \\ 1 \\ \end{array} \right]
\]
which is an instance of $\mathit{PHP}(4,3)$.
\end{proof}

Finally, in Section \ref{sec: simpos} we show that solving $\mathit{PHP}(m,k)$
for $m\geq k\geq4$ is highly unlikely without major breakthroughs in analytic
number theory. For any real number $x$, the \emph{homogeneous Diophantine
approximation type} $L(x)$ is a measure of the extent to which $x$ can be
well-approximated by rationals. It is defined by: 
\[ L(x) = \inf\left\{c\in\mathbb{R}:
\exists n,m\in\mathbb{Z}.\left|x-\frac{n}{m}\right|< \frac{c}{m^2}
\right\} 
\] 
Much effort has been devoted to the study of the possible values of the
approximation type, see for instance \cite{lagrangeBook}. Nonetheless, very
little is known about the approximation type of the vast majority of
transcendental numbers.  In Section \ref{sec: simpos} we prove that a decision
procedure for the Simultaneous Positivity Problem for rational recurrences
order at most $4$ would entail the computability of $L(\arg\lambda/2\pi)$ for
any complex number $\lambda\in \mathbb{Q}(i)$ of absolute value
$1$.\footnote{Recall that a real number $x$ is \emph{computable} if there
exists an algorithm which, given any rational $\varepsilon>0$ as input,
computes a rational $q$ such that $|q-x|<\varepsilon$.} Therefore, a decision
procedure for $\mathit{PHP}(4,4)$ is extremely unlikely without significant
advances in Diophantine approximation. By Lemma \ref{lem: propagation}, the
same hardness result holds for $\mathit{PHP}(m,k)$ with $m\geq k\geq4$.  A
similar result has been shown in~\cite{positivity} concerning the Positivity
Problem for single linear recurrence sequences of order at most 6.


Our results are summarised in tabular form in the figure presented in the Introduction.

\section{Extended Orbit Problem}\label{sec: EO}

In this section, we give an overview of the
\emph{Extended Orbit Problem}.  
We are given a matrix $A\in\mathbb{Q}^{m\times m}$, an
initial point $x\in\mathbb{Q}^m$, and a \emph{target cone} specified
by a set of vectors $\{ y_1, \dots, y_d \}\subseteq\mathbb{Q}^m$ and a
constraint matrix $B\in(\ra)^{k\times d}$.  The question is whether
there exists an exponent $n\in\mathbb{N}$ and coordinates
$u=(u_1,\dots,u_d)^T$ such that $A^nx = \sum_{i=1}^d u_i y_i$ and
$Bu\geq0$.  We refer to the space $V$ spanned by
$y_1,\ldots,y_d$, which contains the target cone, as the \emph{target
  space}.

Our main decidability result concerning the Extended Orbit Problem
is the following: 
\begin{theorem}\label{thm: EOdecidability} 
The Extended Orbit Problem is in $\mathbf{PTIME}$ in the case of a
one-dimensional target space, and in $\mathbf{PSPACE}$ in the case of a two- or
three-dimensional target space.  
\end{theorem}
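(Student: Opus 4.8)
The plan is to reduce the Extended Orbit Problem (for target space of dimension $d\le 3$) to two pieces already available in this development: in the ``trapped'' case, where the orbit lies entirely inside the target space, to the Simultaneous Positivity Problem of Section~\ref{sec: simpos}; and in the complementary case, where the orbit meets the target space only finitely often, to the subspace-Orbit analysis of~\cite{COW13}. I would begin with preprocessing. Since the kernel of $A$ only influences a bounded prefix of the orbit, I would peel that prefix off, test it by brute force, and thereafter assume $A$ is invertible; I would also compute the eigenvalues of $A$ as algebraic numbers and a rational basis of $W:=\spa\{x,Ax,\dots,A^{m-1}x\}$, the smallest $A$-invariant subspace containing $x$, all within polynomial space. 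As $A^nx\in W$ for every $n$, I may replace the target space $V$ by $V\cap W$ without increasing its dimension; let $V_0\subseteq V\cap W$ be the largest $A$-invariant subspace contained in $V\cap W$, obtained by standard linear algebra.

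Next I would split into two cases. If $x\notin V_0$, then invertibility of $A$ shows $A^nx\in V_0$ would force $x\in V_0$, so $A^nx$ never lies in $V_0$ and hence $A^nx\in V$ only when $A^nx\in(V\cap W)\setminus V_0$. Passing to the quotient $W/V_0$, the image of $V\cap W$ contains no nonzero invariant subspace, so exactly as in the subspace-Orbit argument of~\cite{COW13}, which rests on a quantitative form of Baker's theorem on linear forms in logarithms of algebraic numbers, there are at most $N_0$ indices $n$ with $A^nx\in V$, where $N_0$ is effectively computable and in fact singly exponential in the size of the instance. It then suffices to enumerate these candidates, compute $A^nx$ by repeated squaring, read off its coordinates $u$ in the basis $\{y_1,\dots,y_d\}$, and test $Bu\ge 0$ directly; this stays in $\mathbf{PSPACE}$ because $N_0$ is at most singly exponential and all arithmetic is symbolic manipulation of bounded-degree algebraic numbers.

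In the complementary case $x\in V_0$, minimality of $W$ forces $W=V_0\subseteq V$, so $A^nx\in V$ for all $n$ and, crucially, $\dim W\le\dim V\le 3$. Since $x$ is a cyclic vector for $A|_W$, the minimal polynomial of $A|_W$ has degree $\dim W\le 3$; expressing the coordinates $u_j(n)$ of $A^nx$ with respect to $\{y_1,\dots,y_d\}$ through a rational basis of $W$, each $u_j(n)$ is a linear recurrence sequence of order at most $3$, whence every entry of $Bu(n)$ is an LRS over $\ra$ of order at most $3$. The question ``does there exist $n$ with $Bu(n)\ge 0$'' is then precisely an instance of the Simultaneous Positivity Problem for LRS over $\ra$ of order at most $3$, placed in $\mathbf{PSPACE}$ in Section~\ref{sec: simpos}. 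For a one-dimensional target space everything collapses: in the trapped case $\dim W=1$, $A|_W$ is multiplication by some rational $\lambda\ne 0$, and $u_1(n)=c\lambda^n$ for a rational $c$, so $Bu(n)\ge 0$ is decided by inspecting at most two sign patterns according to the parity of $n$; in the finite case the methods of~\cite{COW13} bound and enumerate the relevant $n$ in polynomial time. Hence the one-dimensional case lies in $\mathbf{PTIME}$.

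The main obstacle is the finiteness-with-effective-bound step when $x\notin V_0$: one must extract from the linear-forms-in-logarithms machinery a bound $N_0$ that is only singly exponential, so that the ensuing search fits in $\mathbf{PSPACE}$, while correctly treating all the degenerate eigenvalue configurations---repeated moduli, eigenvalue ratios that are roots of unity, and nontrivial Jordan blocks---along the lines of~\cite{COW13}. A secondary delicate point is confirming that the reduction in the $x\in V_0$ case genuinely lands in the order-at-most-$3$ regime of the Simultaneous Positivity Problem, which is exactly why it is essential to pass first to $W$ and exploit $\dim W\le\dim V$.
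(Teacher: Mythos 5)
There is a genuine gap: your central dichotomy---either the orbit is trapped ($x\in V_0$, hence $A^nx\in V$ for all $n$) or $A^nx\in V$ for at most $N_0$ effectively bounded indices---is false. The intermediate situation, where the orbit hits $V$ infinitely often without any invariant subspace of $V$ containing $x$, genuinely occurs and is exactly the hard case. Take $A=\mathrm{diag}(1,-1)$, $x=(1,1)^T$, $V=\spa\{(1,1)^T\}$: the largest $A$-invariant subspace of $V$ is $\{0\}$, so $x\notin V_0$, yet $A^nx\in V$ for every even $n$. Hence no bound $N_0$ exists in your ``complementary'' case, and the enumerate-and-check step collapses; moreover the inequality constraints $Bu\geq0$ must then be decided along an infinite arithmetic progression of membership times, which cannot be done by testing finitely many candidates. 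The source of such examples is degeneracy---ratios of eigenvalues that are roots of unity---and this is not a technicality to be ``treated along the lines of~\cite{COW13}'' while extracting a single-exponential bound: in these configurations the finiteness statement you rely on is simply not available, and the quotient-by-$V_0$ manoeuvre does not remove the degeneracy.

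The paper's proof is organised around precisely this obstruction. It reduces to a Master System over the eigenvalues and considers the relation $\alpha\sim\beta$ iff $\alpha/\beta$ is a root of unity; the bound on $n$ (Theorem~\ref{thm: orbit}, imported from the Orbit Problem analysis via Baker's theorem) is invoked \emph{only} when the multiplicities of the $\sim$-classes sum to at least $d+1$, and then a nondeterministically guessed $n\leq N\in\ein{\Vert I\Vert}$ is verified with a $\mathbf{PosSLP}$ oracle. When there are few classes---your missing case---the algorithm fixes the residue of $n$ modulo the exponentially bounded period $L$, collapses the Master System, and either derives a Skolem-type bound (Theorem~\ref{thm: skolem}), solves directly via density of $\lambda^n$ on the unit circle and the existential theory of the reals, or reduces to Simultaneous Positivity for LRS of order at most $3$ with the constraints $Bu\geq0$ built into the sequences themselves. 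Your trapped case ($x\in V_0$, so $W\subseteq V$ and the coordinates are LRS of order at most $3$) is sound as far as it goes, but it covers only one extreme of the problem; without an argument for the infinitely-many-hits, non-trapped instances, the claimed $\mathbf{PSPACE}$ (and $\mathbf{PTIME}$ for $d=1$) bounds are not established.
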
 
Notice that these complexity bounds depend only on the dimension of the target
space $V$, not on the dimension of the ambient space $\mathbb{Q}^m$. 


We now give an overview of the strategy for proving Theorem \ref{thm:
EOdecidability}, and consign the full proof 
to Appendix
\ref{app: EO3} in the interest of clarity.
The decision method constructs a `Master System' consisting of
equations in $n$ and $u=(u_1,\dots,u_d)$ together with the inequalities
$Bu\geq0$ given as part of the input. The solutions of the Master System
are in one-to-one correspondence with the solutions of the problem instance.

When the Master System contains sufficiently many equations, it was
shown in reference \cite{orbitArxiv} that a bound $N$ can be derived
such that if $n>N$, then $A^nx \not\in V$. Writing $||I||$ for the
size of the input, we have $N\in\pin{||I||}$ when $\dim(V)=1$ and
$N\in\ein{||I||}$ when $\dim(V)\leq3$. 

With a one-dimensional target,
it is sufficient to try all exponents $n\leq N$ to get a
polynomial-time algorithm. In the two- and three-dimensional case, the
algorithm is a guess-and-check procedure. An exponent $n\leq N$ is
nondeterministically chosen as a possible witness. Then $A^nx\in V$ is
verified by checking whether the determinant of the matrix with
columns $A^nx,y_1,\dots,y_d$ equals $0$. If it does, then $A^nx\in V$,
and we proceed to calculate the coefficients $u_1,\dots,u_d$
witnessing this membership and to verify the inequalities $Bu\geq0$
which they must satisfy. 

In the verification procedure, all numbers
are expressed as arithmetic circuits and exponentiation is performed
using repeated squaring. Recall that $\mathbf{PosSLP}$ is the class of
problems which reduce in polynomial time to checking whether an
arithmetic circuit evaluates to a positive number. The described
operations may all be carried out using an oracle for
$\mathbf{PosSLP}$, so the algorithm gives a complexity upper bound of
$\mathbf{NP}^\mathbf{PosSLP}$ for the case of a large Master System.
The work of Allender et al.~\cite{allender} places $\mathbf{PosSLP}$
in the counting hierarchy, which shows the algorithm runs in
polynomial space, as Theorem \ref{thm: EOdecidability} claims.

On the other hand, when the Master System contains `few' equations, we
do not have such a bound $N$ beyond which membership in $V$ is
impossible. These cases are the main focus of Appendix \ref{app: EO3},
where we show how to solve such small Master Systems. The procedure
invokes a decision method for the Simultaneous Positivity Problem,
which is discussed in Section \ref{sec: simpos} and is shown to be in
$\mathbf{PSPACE}$ for all orders which arise in the reduction.

\section{Simultaneous Positivity}\label{sec: simpos}

In this section, we consider the \textbf{Simultaneous Positivity problem}:
given linear recurrence sequences $\ess_1,\dots,\ess_k$ over $\ra$ which
satisfy a common recurrence equation, are they ever simultaneously positive,
that is, does there exist $n$ such that $\ess_i(n) \geq 0$ for all
$i\in\{1,\dots,k\}$? Solving this problem is instrumental in our decision
procedures for both the Extended Orbit Problem and the Polyhedron-Hitting
Problem.

The asymptotic behaviour of a linear recurrence sequence $\ess$ is
closely linked to its dominant characteristic roots, that is, the
characteristic roots of greatest magnitude. If
$\lambda_1,\dots,\lambda_s$ are the dominant roots, we can write \[
\frac{\ess(n)}{|\lambda_1|^n} =
P_1(n)\left(\frac{\lambda_1}{|\lambda_1|}\right)^n + \dots +
P_s(n)\left(\frac{\lambda_s}{|\lambda_1|}\right)^n + r(n) \] where
$r(n)$ tends to $0$ exponentially quickly. We can use the polynomial
root-separation bound (\ref{eq:root separation}) in Appendix \ref{sec: 
algnum} to bound the absolute
value of the quotient $\lambda/\lambda_1$, where $\lambda$ is a
non-dominant characteristic root.  Thus we can show:
\begin{lemma}\label{lem: lrsShrinking}
Suppose we are given an LRS $\ess$ as above. Then there exist constants
$\varepsilon\in(0,1)$ and $N\in\mathbb{N}$ such that $N\in\ein{||\ess||}$,
$\varepsilon^{-1}\in\ein{||\ess||}$, and $|r(n)| < (1-\varepsilon)^n$
for all $n>N$.
\end{lemma}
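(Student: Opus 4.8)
The plan is to make explicit the contribution of the non-dominant roots to $r(n)$ and then bound it using the root-separation estimate~(\ref{eq:root separation}). Write $\mu_1,\dots,\mu_t$ for the characteristic roots of $\ess$ that are \emph{not} dominant; if $t=0$ then $r\equiv 0$ and there is nothing to prove, so assume $t\ge 1$. The exponential-polynomial representation of $\ess$ gives
\[ r(n) \;=\; \sum_{j=1}^{t} Q_j(n)\left(\frac{\mu_j}{|\lambda_1|}\right)^{n}, \]
where each $Q_j$ is a polynomial whose degree is strictly less than the multiplicity of $\mu_j$ and whose coefficients, together with the $\mu_j$ themselves, are algebraic numbers of degree at most the order of $\ess$ and of logarithmic height in $\pin{||\ess||}$: they are obtained from the initial values of $\ess$ and from the characteristic roots by solving a fixed linear system, and the roots in turn are zeros of a rational polynomial of size $\pin{||\ess||}$. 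Two quantities must be controlled: the moduli $|\mu_j|/|\lambda_1|<1$ must be pushed away from $1$, and the polynomial factors $|Q_j(n)|$ must be bounded.

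First I would deal with the moduli, which is the crux. Fix a non-dominant root $\mu_j$ and consider
\[ \gamma_j \;=\; \frac{|\mu_j|^{2}}{|\lambda_1|^{2}} \;=\; \frac{\mu_j\,\overline{\mu_j}}{\lambda_1\,\overline{\lambda_1}} \;\in\;(0,1). \]
This is a real algebraic number whose degree and logarithmic height are polynomially bounded in those of $\mu_j$ and $\lambda_1$, hence in $||\ess||$. Since $\mu_j$ is non-dominant we have $\gamma_j\neq 1$, so applying the root-separation bound~(\ref{eq:root separation}) to $1-\gamma_j$ yields a single constant $\eta\in(0,1)$ with $\eta^{-1}\in\ein{||\ess||}$ such that $1-\gamma_j\ge\eta$ for every non-dominant $\mu_j$. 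Consequently $|\mu_j|/|\lambda_1| = \sqrt{\gamma_j} \le \sqrt{1-\eta}\le 1-\eta/2$.

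For the polynomial factors, the height bound on the coefficients of $Q_j$ and the degree bound $D\le ||\ess||$ give $|Q_j(n)|\le C n^{D}$ for all $n\ge 1$, with $C\in\ein{||\ess||}$, so that $|r(n)|\le t\,C\,n^{D}(1-\eta/2)^{n}$. Now set $\varepsilon=\eta/4$, so that $\varepsilon^{-1}\in\ein{||\ess||}$. Since $(1-\eta/4)/(1-\eta/2)\ge 1+\eta/4$, the desired inequality $|r(n)|<(1-\varepsilon)^{n}$ reduces to $t\,C\,n^{D}\le(1+\eta/4)^{n}$, which holds for all $n$ exceeding a threshold $N$ obtained by a routine logarithmic estimate in terms of $\log(tC)$, $D$ and $\eta^{-1}$; as each of these is at most singly exponential in $||\ess||$, so is $N$, i.e.\ $N\in\ein{||\ess||}$. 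The main obstacle is the first step: one must check that passing from $\mu_j,\lambda_1$ to $\gamma_j=|\mu_j|^2/|\lambda_1|^2$ keeps the degree and height under polynomial control (in particular when $\mu_j$ and $\lambda_1$ are Galois conjugates), so that~(\ref{eq:root separation}) produces a spectral gap whose reciprocal is only singly exponential.
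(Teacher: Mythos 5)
Your proposal is correct and follows essentially the same route as the paper, which only sketches the argument: use the Mignotte root-separation bound to force the moduli of non-dominant roots a singly-exponentially small gap below the dominant modulus, then absorb the (at most singly exponential) polynomial coefficients into the resulting exponential decay to get $N\in\ein{||\ess||}$ and $\varepsilon^{-1}\in\ein{||\ess||}$. The only cosmetic point is that the separation bound applies to two roots of a common polynomial, so one formally applies it to $\gamma_j$ and $1$ as roots of $f_{\gamma_j}(x)(x-1)$, which is exactly the standard fix you implicitly rely on.
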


\subsection{Decidability results}

In this section we prove the following result:
\begin{theorem}\label{thm: simposDecidability} 
The Simultaneous Positivity Problem is in $\mathbf{PSPACE}$ for sequences over
$\ra$ whose common recurrence equation has order at most $3$, or order $4$ but
with at least one real root.  
\end{theorem}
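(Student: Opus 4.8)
The plan is to reduce to a finite, decidable case analysis driven by the dominant characteristic roots of the common recurrence equation, exploiting Lemma \ref{lem: lrsShrinking} to bound the ``transient'' part $r(n)$ and a Diophantine-approximation / Baker-type argument to control the oscillatory behaviour of the dominant part. First I would normalise: write each $\ess_i(n)/|\lambda|^n$, where $\lambda$ is a dominant root common to the recurrence, as a sum $P_{i,1}(n)(\lambda_1/|\lambda|)^n + \dots + P_{i,s}(n)(\lambda_s/|\lambda|)^n + r_i(n)$ with $|r_i(n)| < (1-\varepsilon)^n$ for $n > N$, both $N$ and $\varepsilon^{-1}$ of magnitude $\ein{||\ess||}$. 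Since positivity of $\ess_i(n)$ is unaffected by dividing by $|\lambda|^n > 0$, the question ``is there $n$ with $\ess_i(n)\geq0$ for all $i$'' splits into: (i) check all $n \leq N$ explicitly, which can be done in $\mathbf{PSPACE}$ by expressing $A^n$ (equivalently the $\ess_i(n)$) as arithmetic circuits, using repeated squaring, and invoking a $\mathbf{PosSLP}$ oracle exactly as in the Extended Orbit Problem discussion; and (ii) analyse the asymptotic regime $n > N$, where each normalised sequence is within $(1-\varepsilon)^n$ of its dominant-root part.

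For the asymptotic regime I would case-split on the order ($\leq 3$, or $4$ with a real characteristic root) and, within each, on the configuration of dominant roots. The easy subcases are when there is a unique dominant root $\rho$: if $\rho > 0$ the normalised sequences converge (after factoring the leading polynomial coefficient) to a fixed sign pattern, and one reads off the answer; if $\rho < 0$ one treats even and odd $n$ separately, reducing to the positive case. The genuinely interesting subcases have a pair of complex conjugate dominant roots $\lambda = |\lambda| e^{2\pi i \theta}$ (possibly together with a dominated real root when the order is $4$ with a real root). Here the normalised sequence behaves like $2\re(c_i e^{2\pi i n\theta}) + o(1)$ (modulo polynomial factors, which in the order-$\leq 4$ regime with complex dominant pair are constant). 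The set of $n$ for which this is simultaneously $\geq -\,(\text{small})$ for all $i$ is governed by where the point $\{n\theta\}$ lands in the torus $\mathbb{R}/\mathbb{Z}$. If $\theta$ is rational the orbit $\{n\theta\}$ is periodic and one checks one period; if $\theta$ is irrational, the orbit is equidistributed, so whenever the ``target'' region (the intersection of the arcs $\re(c_i e^{2\pi i t}) > 0$) has nonempty interior the answer is yes, and the only delicate situation is when the closure of that region is a single point or the region is empty — then one needs an effective lower bound, via Baker's Theorem on linear forms in logarithms, on how close $n\theta$ can come to the critical phase, compared against the $(1-\varepsilon)^n$ error term, to conclude that for large $n$ the inequality fails (or to reduce to finitely many $n$).

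I expect the main obstacle to be precisely this last point: handling the ``boundary'' configurations where the limiting inequality $\re(c_i e^{2\pi i t}) \geq 0$ holds only on a measure-zero set (a tangency), so that simultaneous non-strict positivity at infinity hinges on a race between an exponentially small error $|r_i(n)| < (1-\varepsilon)^n$ and the distance from $\{n\theta\}$ to the critical phase. Establishing that this distance is not too small requires a quantitative transcendence estimate — Baker's Theorem giving $\|n\theta - \text{(critical phase)}\| > n^{-C}$ for an effective $C$ — so that the error wins for all sufficiently large $n$, leaving only finitely many candidates to test. The bookkeeping needed to enumerate the finitely many dominant-root configurations that can arise from an order-$\leq 3$ (or order-$4$-with-real-root) recurrence over $\ra$, and to verify that in every configuration either a period/interior argument or a Baker-type argument applies, is the bulk of the work; the $\mathbf{PSPACE}$ bound throughout follows because all algebraic-number manipulations reduce to $\mathbf{PosSLP}$ and $\mathbf{PosSLP}$ lies in the counting hierarchy by Allender et al.~\cite{allender}.
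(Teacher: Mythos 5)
Your overall strategy is the same as the paper's: normalise by the dominant modulus, use Lemma~\ref{lem: lrsShrinking} to make the tail negligible beyond an exponential threshold, guess-and-check small witnesses, and in the complex-dominant-pair case reduce to arc conditions on the unit circle, settled by density when the arc intersection has interior, by a Baker-type lower bound against the $(1-\varepsilon)^n$ error at tangencies (the paper's Lemma~\ref{lem:cosExpShrink}), and by a Skolem-type bound (Theorem~\ref{thm: skolem}) when the intersection degenerates to finitely many points. However, there is a genuine gap in your handling of sequences whose coefficients at the dominant roots of the \emph{common} recurrence vanish: dominance is a property of the shared recurrence, not of each individual sequence. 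In your ``unique real dominant root'' case you claim the normalised sequences converge to a fixed sign pattern so one ``reads off the answer''; this fails when some $\ess_i$ has identically zero coefficient at the dominant root and the subdominant roots are a complex pair (e.g.\ a common order-$3$ recurrence with roots $7$ and $3\pm4i$: a sequence supported only on $3\pm4i$ oscillates in sign forever). The paper closes this by induction on the order: such sequences are replaced by their lower-order tails $r_j$ and one solves a lower-order Simultaneous Positivity instance, where the oscillatory analysis is then applied at the subdominant level. Symmetrically, in the complex-dominant-pair case you must deal with sequences having $c_i=0$, whose sign is eventually that of $b_j+r_j(n)$ (ultimately constant, with an effective threshold up to which one searches before discarding or cutting off); your proposal silently assumes every sequence genuinely ``sees'' the dominant pair.

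A secondary issue is your claim that the verification of a guessed witness and all algebraic-number manipulations reduce to $\mathbf{PosSLP}$: the sequences here are over $\ra$, not $\mathbb{Q}$, and $\mathbf{PosSLP}$ concerns integer arithmetic circuits. The paper instead encodes each real algebraic constant by its minimal polynomial and isolating data, computes $M_j^n$ by repeated squaring inside a sentence of the existential first-order theory of the reals, and decides that sentence in $\mathbf{PSPACE}$ via Theorem~\ref{thm: fo}; you would need either this route or an explicit justification that your circuits can be kept over the rationals. Neither repair changes your architecture, but both steps are needed for the proof to go through as stated.
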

We will restrict our attention to non-degenerate LRS. As outlined in
Appendix \ref{subsec: LRSprelims}, a degenerate sequence can be
partitioned into non-degenerate subsequences. Then the Simultaneous
Positivity instance is equivalent to the disjunction of all
Simultaneous Positivity instances where each degenerate sequence has
been replaced by one of its non-degenerate subsequences.  In general,
this leads to exponentially many non-degenerate problem instances.
However, this leaves Theorem \ref{thm: simposDecidability} unaffected,
as a non-degenerate problem instance may simply be guessed
nondeterministically by a $\mathbf{PSPACE}$ algorithm.

The assumption of non-degeneracy guarantees that there can be at most one real
root among the dominant roots of the sequences. We can assume without loss of
generality that any real root of the sequence is positive (otherwise we
separately consider the cases of even and odd $n$).

The algorithm for Simultaneous Positivity is similar to the one for
Extended Orbit. We search for witnesses up to some computable bound
$N\in\ein{\Vert I\Vert}$.  To this end, we will choose a witness $n$
nondeterministically and then verify $\ess_j(n)=v_j^TM_j^nw_j\geq0$
for all $j$.  Recalling that the entries of $M_j$ are algebraic
numbers, we can verify this family of inequalities by constructing a
sentence $\tau$ in the existential first-order theory of the reals
which is true if and only if $v_j^TM_j^nw_j\geq0$ for all $j$. We
specify each real algebraic number with description $(f_\alpha, x_0,
y_0, R)$ using the first-order formula $\exists z. f_\alpha(z) = 0
\wedge (z-x_0)^2 + y_0^2 \leq R^2$. To ensure that
$\Vert\tau\Vert\in\pin{\Vert I\Vert}$, we use repeated squaring to
calculate $M_j^n$. Finally, we check the validity of $\tau$ in
$\mathbf{PSPACE}$, as per Theorem \ref{thm: fo} in Appendix \ref{sec: 
FOreals}.

We now consider two cases, according to the number of dominant complex
roots of the shared recurrence equation.

\textbf{No dominant complex roots.} Suppose the dominant
characteristic roots do not include a pair of complex conjugates. Then
by the assumption of non-degeneracy, there is one real dominant root
$\rho>0$.  Then the $j$-th sequence is given by \[
\frac{\ess_j(n)}{\rho^n} = P_j(n) + r_j(n) \] where $r_j$ is itself a
linear recurrence of lower order which converges to $0$ exponentially
quickly, and $P_j\in(\ra)[x]$.  Each polynomial $P_j(n)$ is either
identically zero or is ultimately positive or ultimately negative as
$n$ tends to infinity.   In the
latter two cases, there is an effective threshold
$N_j\in\ein{||\ess_j(n)||}$ beyond which the sign of $\ess_j$ does not
change.  If some $\ess_j$ is ultimately negative, then any witness to
the problem instance must be bounded above by $N_j$. Since $N_j$ is at
most exponentially large in the size of the input, we use a
guess-and-check procedure and are done.  Similarly, for each sequence
$\ess_j$ for which $P_j$ is ultimately positive we can search for
witnesses up to the threshold $N_j$ and if none are found, we discard
$\ess_j$ as if it were uniformly positive. Finally, we are left only
with sequences $\ess_j$ for which $P_j$ is identically zero. Then the
problem instance is equivalent to Simultaneous Positivity on the
sequences $r_j$. These sequences satisfy a common recurrence equation
of lower order, so we proceed inductively.

\textbf{Two simple dominant complex roots.}
Suppose now that the dominant roots of the shared recurrence equation
include exactly two complex roots $\lambda,\overline{\lambda}$ and possibly
a real dominant root $\rho_1>0$. Moreover,
assume that the roots are all simple, so the $j$-th sequence 
is given by
\[
\ess_j(n) = a_j\lambda^n + \overline{a_j}\overline{\lambda}^n + 
b_j\rho_1^n + c_j\rho_2^n
\]
that is,
\[ \frac{\ess_j(n)}{|\lambda|^n}
= 2|a_j|\cos(\alpha_j+n\varphi) + b_j + r_j(n) \] where $\alpha_j =
\arg(a_j)$ and $\varphi = \arg(\lambda)$. Moreover, $r_j$ is a linear
recurrence sequence of order at most $2$ with real characteristic
roots.  Observe that for all $j$, $b_j+r_j(n)$ is either ultimately
positive or ultimately negative as $n$ tends to infinity. Furthermore,
a threshold beyond which the sign does not change is effectively
computable and at most exponential in $||\ess_j||$. Following the
reasoning of the previous case, we see that we can dismiss sequences
$\ess_j$ which have $a_j=0$.

Assume therefore that $a_j\neq 0$ for all $j$. By Lemma \ref{lem:cosExpShrink}
in Appendix \ref{app: baker},
for each sequence $\ess_j$ there exists an effective
threshold $N_j\in\ein{||\ess_j||}$ such that for $n>N_j$, $r_j(n)$ is too
small to influence the sign of $\ess_j(n)$. That is, for all $n>N_j$, we have 
\[  \ess_j(n) \geq 0 \iff b_j+\cos(\alpha_j+n\varphi) \geq 0  \]
Therefore, for $n>N=\max_j\{N_j\}$, the problem instance is equivalent 
to a conjunction of inequalities in $n$:
\[ \forall j . \cos(\alpha_j + n\varphi) \geq -b_j \]
We use guess-and-check to look for witnesses $n\leq N$. If none are found,
the problem instance is then decidable in $\mathbf{PSPACE}$ 
by Lemma \ref{lem: arcInequalities} in Appendix \ref{app: baker}.

\subsection{Hardness}

We now proceed to show our main hardness result for Simultaneous Positivity and
hence for $\mathit{PHP}(m, m)$. Recall that the homogeneous Diophantine
approximation type $L(x)$ of a real number $x$, 
defined in Section \ref{subsec: PHPhard}, is a
measure of how well $x$ can be approximated by rationals. Very little progress
has been made on calculating the approximation type for the vast majority of
transcendental numbers. In this section, we show that a decision procedure for
Simultaneous Positivity for LRS with shared recurrence equation of order $4$ 
would entail the computability of the approximation type of all Gaussian 
rationals:

\begin{theorem}\label{thm: simposHardness}
Suppose that Simultaneous Positivity is decidable for rational linear
recurrence sequences. Then for any $\lambda\in \mathbb{Q}(i)$
on the unit circle, $L(\arg\lambda/2\pi)$ is a computable number.
\end{theorem}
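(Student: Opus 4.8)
The plan is to reduce the computation of $L(\arg\lambda/2\pi)$ to finitely many decidable instances of Simultaneous Positivity, by engineering linear recurrence sequences whose simultaneous positivity encodes a Diophantine-approximation inequality of the form $|{\arg\lambda}/{2\pi} - n/m| < c/m^2$. Fix $\lambda \in \mathbb{Q}(i)$ with $|\lambda|=1$ and write $\varphi = \arg\lambda$, so that $\theta := \varphi/2\pi$ is the number whose approximation type we want to compute. First I would observe that, since $\lambda$ and $\overline\lambda$ are Gaussian rationals on the unit circle, the sequence $n \mapsto \lambda^n$ (and hence $n \mapsto \cos(n\varphi)$, $\sin(n\varphi)$) is a rational LRS of order $2$, and more generally $\cos(n\varphi + \psi)$ for any $\psi$ expressible in terms of Gaussian rationals remains an order-$2$ rational LRS. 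The key trick is that ``$m\theta$ is close to an integer $n$'' is equivalent to ``$\cos(2\pi m\theta) = \cos(m\varphi)$ is close to $1$'', i.e. to $\cos(m\varphi) \geq 1 - \delta$ for a suitable $\delta$ depending on the target precision; and the factor $1/m^2$ in the definition of $L$ is what forces us to let $m$ range unboundedly, which is exactly why a single positivity query over a growing family does not suffice and we instead need a cleverer encoding.

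The core step is therefore to set up, for a given candidate value $c$ (a rational) and a given $\varepsilon > 0$, a family of rational LRS that are simultaneously positive if and only if there exist integers $n, m$ with $m \neq 0$ and $|\theta - n/m| < (c+\varepsilon)/m^2$ but no such pair with the bound $(c-\varepsilon)/m^2$ — this would let a decision procedure for Simultaneous Positivity perform a binary search on $c$ to approximate $L(\theta)$ to within $\varepsilon$, and then let $\varepsilon \to 0$. Concretely, I would use an auxiliary index variable to play the role of $m$: introduce a sequence whose $n$-th term grows like $n$ (an order-$2$ LRS with characteristic root $1$ of multiplicity $2$, e.g. the arithmetic progression), so that along the single free index $n$ we can read off both the ``numerator/denominator'' data and the oscillating term $\cos(n\varphi)$. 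The inequality $|\theta - n/m| < c/m^2$, after multiplying through by $m^2$ and using $\cos(2\pi(m\theta - n)) = \cos(m\varphi) \approx 1 - \tfrac12(2\pi(m\theta-n))^2$, translates — up to controlled error terms that are themselves small LRS — into a polynomial-in-$m$ inequality comparing $1 - \cos(m\varphi)$ against $2\pi^2 c^2/m^2$, and clearing denominators turns this into the statement that a certain rational LRS (built from $\cos(m\varphi)$, $\sin(m\varphi)$, and powers/polynomials of $m$, all of which lie in the order-$4$ bracket after taking a product of an order-$2$ oscillatory recurrence with an order-$2$ polynomial recurrence) is nonnegative infinitely often, or equivalently that some finite conjunction of such sequences is simultaneously positive.

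The main obstacle — and where I would spend the most care — is twofold. The quantitative one is controlling the quadratic Taylor approximation $\cos t \approx 1 - t^2/2$: the error is $O(t^4)$, and since $t = 2\pi(m\theta - n)$ can be as small as $\Theta(1/m)$, the error is $O(1/m^4)$, which is genuinely smaller than the main term $\Theta(1/m^2)$, but making this rigorous requires showing the error term is itself (bounded by) an explicit rational LRS so that it can be folded into the Simultaneous Positivity instance without breaking the order-$4$ budget; I would handle this by replacing the transcendental $\cos$ with the honest order-$2$ rational recurrence for $\cos(m\varphi)$ throughout and never Taylor-expanding inside the final instance, using the expansion only heuristically to see what the right inequality is. The structural obstacle is the order bound: I must verify that all the sequences appearing — products of the oscillatory order-$2$ recurrence with characteristic roots $\lambda, \overline\lambda$ and the polynomial order-$2$ recurrence with double root $1$ — satisfy a common recurrence of order at most $4$, which holds precisely because the combined characteristic roots are $\{\lambda, \overline\lambda, 1, 1\}$ (four roots counted with multiplicity, with $1$ a repeated real root), matching exactly the hypothesis of Theorem~\ref{thm: simposDecidability}. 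Finally, assembling these pieces, a decision oracle for Simultaneous Positivity answers each query ``is $L(\theta) < c$?'' up to the slack $\varepsilon$, and driving $\varepsilon \to 0$ and $c$ through a dense set of rationals yields an algorithm that, on input rational precision, outputs a rational within that precision of $L(\arg\lambda/2\pi)$, i.e. shows $L(\arg\lambda/2\pi)$ is computable. By Lemma~\ref{lem: propagation} this hardness propagates to $\mathit{PHP}(m,k)$ for all $m \geq k \geq 4$, as claimed.
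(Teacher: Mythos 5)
Your high-level plan---use the single sequence index to play the role of the denominator $m$, detect that $m\varphi$ is close to a multiple of $2\pi$ through the oscillatory term, and turn a Simultaneous Positivity oracle into an interval-refinement procedure for $\inf_m m\cdot\mathrm{dist}(m\varphi,2\pi\mathbb{Z}) = 2\pi L(\varphi/2\pi)$---is exactly the paper's strategy. But the concrete construction has a genuine gap, and it is in the step you flag as ``structural.'' Your root bookkeeping is wrong: a \emph{product} of a polynomial in the index with the order-$2$ oscillation, such as $n\sin(n\varphi)$ or $m^2\cos(m\varphi)$, has characteristic roots $\lambda,\overline{\lambda}$ with \emph{increased multiplicities} (characteristic polynomial $(x-\lambda)^{d+1}(x-\overline{\lambda})^{d+1}$ for a degree-$d$ polynomial factor), not the root multiset $\{\lambda,\overline{\lambda},1,1\}$; the latter arises from a \emph{sum}, not a product. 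Consequently your claim that the instance ``matches exactly the hypothesis of Theorem~\ref{thm: simposDecidability}'' is false---and it must be false: if your constructed instances lay inside the decidable fragment (order $4$ with a real root), your own reduction would prove \emph{unconditionally} that $L(\arg\lambda/2\pi)$ is computable, which is precisely the open problem this hardness result is designed to exploit. A hardness reduction has to land outside the decidable class, here in the order-$4$ case with two repeated complex roots and no real root.

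There is a second, quantitative problem: your proposed encoding compares $m^2\bigl(1-\cos(m\varphi)\bigr)$ against $2\pi^2c^2$, which requires the sequence $m^2\cos(m\varphi)$---of order $6$, with characteristic polynomial $(x-\lambda)^3(x-\overline{\lambda})^3$---and introduces the transcendental constant $\pi^2$, so the sequences are neither rational nor within the order-$4$ budget; at best this yields hardness for order-$6$ Simultaneous Positivity and hence for $\mathit{PHP}(6,6)$, not $\mathit{PHP}(4,4)$. The paper sidesteps both issues by keeping the polynomial factor \emph{linear}: it uses the two rational sequences $A\cos(n\varphi)\pm n\sin(n\varphi) = \tfrac12\bigl((A\mp in)\lambda^n+(A\pm in)\overline{\lambda}^n\bigr)$ (rational because $\lambda\in\mathbb{Q}(i)$), which share the order-$4$ recurrence with characteristic polynomial $(x-\lambda)^2(x-\overline{\lambda})^2$, and whose simultaneous positivity says $n|\sin(n\varphi)|\leq A\cos(n\varphi)$. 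Elementary estimates near multiples of $2\pi$ ($(1-\varepsilon)|x|\leq|\sin x|\leq|x|$ and $\cos x\geq 1-\varepsilon$) then sandwich the quantity $n\cdot\mathrm{dist}(n\varphi,2\pi\mathbb{Z})$ between $A(1-\varepsilon)$ and $A/(1-\varepsilon)$ depending on the oracle's answer on the tails of the two sequences, and varying $A,\varepsilon$ pins down the infimum to any desired precision. To repair your proof you should replace the quadratic Taylor comparison by this linear sine/cosine comparison (or accept an order-$6$ construction and a correspondingly weaker conclusion), and delete the assertion that the instance falls under Theorem~\ref{thm: simposDecidability}; the propagation to $\mathit{PHP}(m,k)$ for $m\geq k\geq 4$ via Lemma~\ref{lem: propagation} is fine once the order-$4$ construction is in place.
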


Suppose we wish to calculate $L(\varphi/2\pi)$, where $\varphi=\arg
\lambda$ for some $\lambda$ of magnitude $1$.  Consider the
following two sequences for some fixed rational number $A$:
\begin{align*}
\ess_1(n) = & \frac{1}{2}\left((A-in)\lambda^n + (A+in) 
\overline{\lambda}^n\right)\\[2pt]
\ess_2(n) = & \frac{1}{2}\left((A+in)\lambda^n + (A-in) 
\overline{\lambda}^n\right)
\end{align*}

It is straightforward to verify that $\ess_1(n)$ and $\ess_2(n)$ are
both rational sequences satisfying a common order-4 recurrence with
characteristic polynomial $(x-\lambda)^2(x-\overline{\lambda})^2$.
Moreover we have
\begin{eqnarray*}
\ess_1(n) & = & n\cos(n\varphi - \pi/2) + A\cos(n\varphi) \\
 & = &  A\cos(n\varphi) + n\sin(n\varphi)
\end{eqnarray*}
\begin{eqnarray*}
\ess_2(n) & = & n\cos(n\varphi + \pi/2) + A\cos(n\varphi) \\
 & = &  A\cos(n\varphi) - n\sin(n\varphi) 
\end{eqnarray*}
Let $w_n = n|\sin(n\varphi)| - A\cos(n\varphi)$. It is clear
that $\ess_1(n)\geq0$ and $\ess_2(n)\geq0$ if and only if $w_n \leq 0$.
We will
show that a Simultaneous Positivity oracle may be used on these sequences
for different choices of $A$ to compute arbitrarily good approximations
of $L(\varphi/2\pi)$. Throughout this section, write $[x]$ to denote the
distance from $x$ to the closest integer multiple of $2\pi$, that is,
$[x] = \min\{|x-2\pi j| : j\in\mathbb{Z}\}$.

Given $\varepsilon\in(0, 1)$, there exists $\delta>0$ such that
for all $x\in[-\delta, \delta]$, the following hold:
\begin{equation}\label{eq:sineIneq}
(1-\varepsilon)|x| \leq |\sin x| \leq |x|
\end{equation}
\begin{equation}\label{eq:cosIneq}
1-\varepsilon \leq \cos x
\end{equation}
Moreover, there exists $N\in\mathbb{N}$ such that $A/N\leq\delta$ and also,
\begin{equation}\label{eq:withinDelta}
\mbox{ if $|\sin x| \leq A/N$, then $|x|\leq\delta$. }
\end{equation}

\begin{lemma}\label{lem:hardness1}
Suppose that $n\geq N$ is such that $w_n \leq 0$.  Then
$n[n\varphi]<A/(1-\varepsilon)$. 
\end{lemma}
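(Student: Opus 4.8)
The plan is to unwind the definitions and then chain together the elementary estimates (\ref{eq:sineIneq})--(\ref{eq:withinDelta}) stated just before the lemma. First I would rewrite the hypothesis $w_n\le 0$ as $n|\sin(n\varphi)|\le A\cos(n\varphi)$. Since the left-hand side is nonnegative and $A>0$, this already forces $\cos(n\varphi)\ge 0$. In order to invoke the near-zero inequalities I would reduce the angle modulo $2\pi$: pick $j\in\mathbb{Z}$ realising $[n\varphi]=|n\varphi-2\pi j|$ and set $\theta:=n\varphi-2\pi j$, so that $|\theta|=[n\varphi]\le\pi$ and, by $2\pi$-periodicity, $\sin(n\varphi)=\sin\theta$ and $\cos(n\varphi)=\cos\theta$.

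Next I would bound $\theta$ itself. From $n|\sin\theta|\le A\cos\theta\le A$ and $n\ge N$ we obtain $|\sin\theta|\le A/n\le A/N$, so (\ref{eq:withinDelta}) yields $|\theta|\le\delta$. Hence (\ref{eq:sineIneq}) applies and gives $(1-\varepsilon)|\theta|\le|\sin\theta|$. Substituting back into the rewritten hypothesis, $(1-\varepsilon)\,n\,[n\varphi]=(1-\varepsilon)\,n\,|\theta|\le n|\sin\theta|\le A\cos\theta\le A$. If $\theta\neq 0$ then $\cos\theta<1$ (as $0<|\theta|\le\delta<2\pi$), so the last inequality is strict and we conclude $n[n\varphi]<A/(1-\varepsilon)$; if $\theta=0$ then $[n\varphi]=0$ and the bound holds trivially since $A/(1-\varepsilon)>0$.

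There is no serious obstacle here: the argument is a direct chaining of the inequalities introduced immediately before the statement. The only points requiring a little care are the reduction of $n\varphi$ modulo $2\pi$, so that (\ref{eq:sineIneq})--(\ref{eq:withinDelta}), which are phrased for arguments near $0$, may legitimately be applied, and the degenerate case $\theta=0$, which must be handled separately to recover the strict inequality claimed in the statement.
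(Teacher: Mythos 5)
Your proof is correct and follows essentially the same route as the paper: use (\ref{eq:withinDelta}) to force the angle close to a multiple of $2\pi$, then chain (\ref{eq:sineIneq}) and $\cos x\leq 1$ against the hypothesis $w_n\leq 0$. You are in fact slightly more careful than the paper's own argument, which performs the mod-$2\pi$ reduction only implicitly and concludes with the non-strict bound $n[n\varphi]\leq A/(1-\varepsilon)$, whereas your case split on $\theta=0$ versus $\theta\neq0$ recovers the strict inequality actually claimed in the statement.
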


\begin{proof}
\begin{align*}
& |\sin(n\varphi)| \leq \frac{A}{n}\cos(n\varphi) \leq \frac{A}{N} & \mbox{[as $w_n\leq0, n\geq N$]} \\
\\
\Rightarrow & [n\varphi] \leq \delta & \mbox{ [by (\ref{eq:withinDelta})]}\\
\end{align*}
But from the definition of $w_n$, inequality (\ref{eq:sineIneq}) and $\cos x\leq 1$,
we have
\[
w_n = n|\sin(n\varphi)| - A\cos(n\varphi)
\geq n(1-\varepsilon)[n\varphi] - A 
\]
Therefore, $n[n\varphi]\leq A/(1-\varepsilon)$.
\end{proof}

\begin{lemma}\label{lem:hardness2}
Let $n\geq N$ be such that $n[n\varphi] \leq A(1-\varepsilon)$.  Then
$w_n \leq 0$.
\end{lemma}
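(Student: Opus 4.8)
The plan is to run the argument of Lemma \ref{lem:hardness1} in reverse. First I would extract from the hypothesis that $n\varphi$ lies close to a multiple of $2\pi$: since $n \geq N$ and $n[n\varphi] \leq A(1-\varepsilon) \leq A$, we get $[n\varphi] \leq A/n \leq A/N \leq \delta$. Hence, writing $n\varphi = 2\pi j + x$ with $|x| = [n\varphi] \leq \delta$, we have $\sin(n\varphi) = \sin x$ and $\cos(n\varphi) = \cos x$, and the local estimates (\ref{eq:sineIneq}) and (\ref{eq:cosIneq}) apply at $x$.

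Next I would bound the two terms of $w_n = n|\sin(n\varphi)| - A\cos(n\varphi)$ in the favourable direction. The upper bound $|\sin x| \leq |x|$ from (\ref{eq:sineIneq}) gives $n|\sin(n\varphi)| \leq n[n\varphi]$, while (\ref{eq:cosIneq}) gives $A\cos(n\varphi) \geq A(1-\varepsilon)$. Combining these,
\[
w_n \;\leq\; n[n\varphi] - A(1-\varepsilon) \;\leq\; 0,
\]
where the last inequality is exactly the hypothesis. This completes the proof.

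The only step requiring care — and it is hardly an obstacle — is the first one: one must genuinely invoke $n \geq N$ to conclude $[n\varphi] \leq \delta$, since the sine and cosine estimates (\ref{eq:sineIneq})–(\ref{eq:cosIneq}) were asserted only on $[-\delta,\delta]$. One also has to pick the correct half of the two-sided bound (\ref{eq:sineIneq}): here it is the upper bound $|\sin x| \leq |x|$ that is used, whereas Lemma \ref{lem:hardness1} used the lower bound $(1-\varepsilon)|x| \leq |\sin x|$. No transcendence or Diophantine input is needed; this is a purely elementary estimate which, paired with Lemma \ref{lem:hardness1}, pins down the condition $w_n \leq 0$ in terms of the quantity $n[n\varphi]$ up to the slack factor $(1-\varepsilon)$, and that slack is precisely what will let the oracle locate $L(\varphi/2\pi)$ to arbitrary precision.
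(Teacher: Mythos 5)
Your proof is correct and follows the same route as the paper: use $n\geq N$ and the hypothesis to get $[n\varphi]\leq A/N\leq\delta$, then apply the upper half of (\ref{eq:sineIneq}) and (\ref{eq:cosIneq}) to bound $w_n \leq n[n\varphi] - A(1-\varepsilon) \leq 0$. Nothing is missing.
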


\begin{proof} Notice that
\[ [n\varphi] \leq \frac{A(1-\varepsilon)}{n} \leq \frac{A}{N} \leq \delta \]
so for $w_n$ we have
\begin{align*}
w_n = & n|\sin(n\varphi)| - A\cos(n\varphi) & \mbox{[definition of $w_n$] } \\
\\
    \leq & n[n\varphi] - A(1-\varepsilon) & \mbox{[by (\ref{eq:sineIneq})(\ref{eq:cosIneq})] } \\
\\
    \leq & A(1-\varepsilon) - A(1-\varepsilon) = 0 & \mbox{[by premise] } \\
\end{align*}

\end{proof}

Letting $t=\varphi/2\pi$, we see that
\[
2\pi L(t) = \inf_{m\in\mathbb{N}}m[m\varphi]
\]
Thus to show computability of $L(t)$ it is enough to show that
$\inf_{m\in\mathbb{N}}m[m\varphi]$ is computable.  For this in turn it
suffices to provide a procedure that, given $a,b \in \mathbb{Q}$ with
$a<b$, computes a threshold $N \in \mathbb{N}$ and either outputs that
$\inf_{m\geq N}m[m\varphi] < b$ or $\inf_{m\geq N}m[m\varphi] >
a$.  (Clearly $\inf_{m< N}m[m\varphi]$ can be computed to any desired
precision.)

Given $a<b$ as above, compute $\varepsilon$ and $A$ such that
\[ a < A(1-\varepsilon) < \frac{A}{1-\varepsilon} < b \, .\]
Calculate also the constant $N$ in the statement of
Lemmas~\ref{lem:hardness1} and \ref{lem:hardness2} for this choice of
$\varepsilon$ and $A$.  Then run a Simultaneous Positivity oracle on
the $N$-th tails of the two sequences $\ess_1(n)$ and $\ess_2(n)$ to
determine whether $w_n\leq0$ for some $n\geq N$.  If the oracle accepts,
then $\inf_{m\in\mathbb{N}}m[m\varphi] \leq \frac{A}{1-\varepsilon} < b$
by Lemma~\ref{lem:hardness1}.  If the oracle rejects, then
$\inf_{m\in\mathbb{N}}m[m\varphi] \geq A(1-\varepsilon) > a$ by
Lemma~\ref{lem:hardness2}.

\begin{appendix}
\section{Preliminaries}

\subsection{Polyhedra and their representations}\label{sec: polyhedra}

Here we state some basic properties of polyhedra. For more
details we refer the reader to, for example
\cite{convex1,convex2,convex3}.
A \emph{halfspace} in $\mathbb{R}^d$ is the set of points
$x\in\mathbb{R}^d$ satisfying $v^Tx\geq c$ for some fixed vector
$v\in\mathbb{R}^d$ and real number $c$. A \emph{polyhedron} in
$\mathbb{R}^d$ is the intersection of finitely many halfspaces:
\begin{equation}\label{eq: halfspace rep}
P = \left\{ x\in\mathbb{R}^d :
\begin{array}{ccc}
v_1^Tx & \geq & c_1 \\
 &  & \vdots \\
v_m^Tx & \geq & c_m \\
\end{array}
\right\} 
\end{equation}
We call the set $\{(v_1,c_1),\dots,(v_m,c_m)\}$ a \emph{halfspace
  description} of a polyhedron, or simply an \emph{H-polyhedron}.  The
problem of determining a minimal subset of the inequalities (\ref{eq:
  halfspace rep}) that define the same polyhedron is called the
\emph{H-redundancy removal problem} and is solvable in polynomial time
by reduction to linear programming.  Thus, we may freely assume that
there are no redundant constraints in the descriptions of H-polyhedra.

The \emph{dimension} of a polyhedron $P$, denoted $\dim(P)$, is the dimension
of the subspace of $\mathbb{R}^d$ spanned by $P$.  The task of calculating the
dimension of an H-polyhedron, called the \emph{H-dimension problem}, can be
done in polynomial time by solving polynomially many linear programs.  If
$\dim(P) = d$, we call $P$ \emph{full-dimensional}. The minimal halfspace
representation of a full-dimensional polyhedron is unique, up to scaling of the
inequalities in (\ref{eq: halfspace rep}).

The \emph{convex cone} of a finite set of vectors $v_1,\dots,v_m$ is defined as
\[
\mbox{cone}(\{v_1,\dots,v_m\}) =
\{ \lambda_1v_1 + \dots + \lambda_mv_m : 
\forall i.\lambda_i\geq 0 \}
\]
If the vectors $v_1,\dots,v_m$ are linearly independent, the cone is called
\emph{simplicial}. A classical result, due to Carath{\'e}odory, states that
each finitely generated cone can be written as a finite union of simplicial
cones:
\begin{theorem}\label{thm: cara}
(Carath{\'e}odory)
Let $v_1,\dots,v_m\in\mathbb{R}^d$. If $v\in\mbox{cone}(v_1,\dots,v_m)$, then
$v$ belongs to the cone generated by a linearly independent subset of $\{v_1,
\dots,v_m\}$.
\end{theorem}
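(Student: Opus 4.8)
The plan is to argue by minimality of the support of a conic representation. First I would invoke the hypothesis to fix some representation $v = \sum_{i=1}^m \lambda_i v_i$ with all $\lambda_i \geq 0$, and among all such representations choose one in which the number of strictly positive coefficients is as small as possible. Discarding the vectors whose coefficient is $0$, we may assume $v = \sum_{i \in S} \lambda_i v_i$ with $\lambda_i > 0$ for every $i$ in some index set $S \subseteq \{1,\dots,m\}$. It then suffices to show that $\{v_i : i \in S\}$ is linearly independent, for then $v$ lies in the cone generated by this linearly independent subset.

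Suppose, for contradiction, that $\{v_i : i \in S\}$ is linearly dependent. Then there are reals $\mu_i$, $i \in S$, not all zero, with $\sum_{i \in S} \mu_i v_i = 0$; replacing $\mu$ by $-\mu$ if necessary, we may assume $\mu_i > 0$ for at least one $i$. For $t \geq 0$ we have $v = \sum_{i \in S} (\lambda_i - t\mu_i)\, v_i$, and as $t$ grows the coefficients vary linearly. Let $t^\ast = \min\{\lambda_i / \mu_i : i \in S,\ \mu_i > 0\}$, which exists and is positive. At $t = t^\ast$ every coefficient $\lambda_i - t^\ast \mu_i$ is still $\geq 0$: for indices with $\mu_i \leq 0$ the coefficient has not decreased, and for indices with $\mu_i > 0$ it is $\geq 0$ precisely by the choice of $t^\ast$; moreover, the index attaining the minimum now has coefficient exactly $0$. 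This gives a conic representation of $v$ with strictly fewer strictly positive coefficients, contradicting the minimality of $S$. Hence $\{v_i : i \in S\}$ is linearly independent.

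I do not expect any genuine obstacle: this is the standard ``pumping down the support'' argument. The only point requiring a moment's care is verifying that nonnegativity of \emph{all} coefficients is preserved at $t = t^\ast$, which is exactly why the minimum defining $t^\ast$ is taken only over indices with $\mu_i > 0$. (The same scheme, tracking in addition the normalisation $\sum \lambda_i = 1$, yields the sharper affine bound of $d+1$ vectors, but that refinement is not needed for the statement as phrased.)
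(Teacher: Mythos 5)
Your argument is correct: this is the standard support-minimisation (``pumping down'') proof of Carath\'eodory's theorem for cones, and the key step --- checking that taking $t^\ast$ only over indices with $\mu_i>0$ preserves nonnegativity of all coefficients while killing at least one --- is handled properly. The paper itself offers no proof of this statement (it is quoted as a classical result and used as a black box in the decomposition of two-dimensional polyhedra), so there is nothing to compare against; your proof is exactly the textbook argument one would supply.
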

We use this to prove that any two-dimensional polyhedron decomposes
into a finite union of simple two-dimensional shapes:
\begin{lemma}\label{lem: poly2}
Suppose $P\subseteq\mathbb{R}^d$ is a two-dimensional polyhedron. Then
$P = \bigcup_{i=1}^m A_i$, where $m$ is finite and each of $A_i$ is of
the form
\[ A_i =
\{ u_i + \alpha v_i + \beta w_i : T_i(\alpha, \beta) \} \]
for vectors $u_i,v_i,w_i\in\mathbb{R}^d$ and predicates $T_i(\alpha,\beta)$
chosen from the following:
\begin{itemize}
\item $T_i(\alpha,\beta)\equiv\alpha\geq0\wedge\beta\geq0$ ($A_i$ is
an infinite cone)
\item $T_i(\alpha,\beta)\equiv\alpha\geq0\wedge\beta\geq0\wedge\alpha+\beta
\leq1$ ($A_i$ is a triangle)
\item $T_i(\alpha,\beta)\equiv\alpha\geq0\wedge\beta\geq0\wedge\beta\leq1$
($A_i$ is an infinite strip)
\end{itemize}
Furthermore, if we are given a halfspace description of $P$ with 
length $\Vert P\Vert$, the size of the representation of each vector $u_i,v_i,w_i$ 
is at most $\pin{\Vert P\Vert}$.
\end{lemma}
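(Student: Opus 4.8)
The plan is to reduce the problem to a planar polyhedron and then dissect a two-dimensional polyhedron explicitly, by a case analysis on the shape of its recession cone. For the reduction: since $\dim(P)=2$, one can compute (essentially the H-dimension computation) three affinely independent points $p_0,p_1,p_2\in P$ of bit-size $\pin{\Vert P\Vert}$ spanning the affine hull of $P$, and the injective affine map $\phi(s,t)=p_0+s(p_1-p_0)+t(p_2-p_0)$ maps $\mathbb{R}^2$ onto this affine hull. Substituting $x=\phi(s,t)$ into the halfspace description of $P$ yields a two-dimensional polyhedron $P'\subseteq\mathbb{R}^2$ with halfspace description of size $\pin{\Vert P\Vert}$. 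Since $\phi$ is affine with polynomial-size coefficients and is injective, hence dimension-preserving, any decomposition $P'=\bigcup_i A'_i$ into sets of the three prescribed forms maps under $\phi$ to a decomposition $P=\bigcup_i\phi(A'_i)$ of the same forms, with all defining vectors still of size $\pin{\Vert P\Vert}$. So from now on I assume $P\subseteq\mathbb{R}^2$ with $\dim(P)=2$.

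Next I would compute the vertex set $V$ and the recession cone $\mathrm{rec}(P)$: each vertex is the solution of a $2\times2$ linear system drawn from two of the defining lines (Cramer's rule) and each extreme ray is a direction along a defining line, so all of these are of size $\pin{\Vert P\Vert}$. By Carath\'eodory's theorem (Theorem~\ref{thm: cara}) and the classification of cones in the plane, $\mathrm{rec}(P)$ is one of: $\{0\}$; a ray $\mathrm{cone}(r)$; a line $\mathbb{R}r$; a simplicial $2$-cone $\mathrm{cone}(r_1,r_2)$; a half-plane; or $\mathbb{R}^2$. If $\mathrm{rec}(P)=\{0\}$, then $P$ is a bounded polygon and a fan triangulation from any vertex uses only triangles. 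If $\mathrm{rec}(P)$ equals $\mathbb{R}^2$, a half-plane, or a line, then $P$ is respectively all of $\mathbb{R}^2$, a half-plane $\{x:\nu^Tx\ge c\}$, or a strip $\{x:c_1\le\nu^Tx\le c_2\}$, each of which splits into at most four infinite cones, resp.\ two infinite cones, resp.\ two infinite strips, by cutting along a line through a convenient point. If $\mathrm{rec}(P)=\mathrm{cone}(r)$ is a ray, then $P$ has at least two vertices and its boundary consists of a convex polygonal chain $v_0,\dots,v_k$ through all vertices together with two rays of direction $r$ emanating from $v_0$ and $v_k$; since every facet of $P$ other than these two rays has outward normal $\nu$ with $\nu^Tr\le0$, a ray of direction $r$ from a point of $\mathrm{conv}(V)$ can only leave $\mathrm{conv}(V)$ through the chord $[v_0,v_k]$, which gives $P=\mathrm{conv}(v_0,\dots,v_k)\cup\bigl([v_0,v_k]+\mathrm{cone}(r)\bigr)$, where the first set is triangulated and the second is exactly the infinite strip $\{v_0+\alpha r+\beta(v_k-v_0):\alpha\ge0,\ 0\le\beta\le1\}$.

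The remaining case is $\mathrm{rec}(P)=\mathrm{cone}(r_1,r_2)$, a simplicial $2$-cone; then $P$ is pointed, and I would pick the vertex $v_0$ of $P$ whose incident unbounded edge has direction $r_1$ and set $C:=v_0+\mathrm{cone}(r_1,r_2)\subseteq P$, an infinite cone. Writing $C=\{x:\mu_1^T(x-v_0)\ge0\}\cap\{x:\mu_2^T(x-v_0)\ge0\}$ for the two inward facet normals of $C$, we get $P=C\cup\bigl(P\cap\{x:\mu_1^T(x-v_0)\le0\}\bigr)\cup\bigl(P\cap\{x:\mu_2^T(x-v_0)\le0\}\bigr)$. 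The key observation is that each of the two extra pieces is a polyhedron whose recession cone has dimension at most one: the facet of $P$ through $v_0$ in direction $r_1$ forces $\mu_1^Tr_1=0$ and $\mu_1^Tr_2>0$, whence $\mathrm{cone}(r_1,r_2)\cap\{d:\mu_1^Td\le0\}=\mathrm{cone}(r_1)$, and symmetrically for $\mu_2$. Hence each extra piece is covered by the ray case (or the bounded case) above, yielding finitely many triangles and strips. In every case the pieces have one of the three prescribed forms — allowing $v_i$ or $w_i$ to vanish for degenerate (segment or point) sub-polygons, which the predicates permit — there are finitely many of them, and every vector $u_i,v_i,w_i$ is produced by polynomially many Cramer-rule evaluations and linear programs applied to the halfspace description of $P$, hence has size $\pin{\Vert P\Vert}$.

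I expect the main obstacle to be the simplicial $2$-cone case, and specifically the claim that after deleting the cone $C=v_0+\mathrm{rec}(P)$ the two leftover polyhedra have strictly smaller recession dimension, so that the procedure bottoms out in the ray case; the supporting fact is that the unbounded edge of $P$ at $v_0$ in direction $r_1$ confines $P$ to one closed half-plane, which collapses the $r_1$-component of the leftover's recession cone. The chord decomposition in the ray case and the bit-size bookkeeping throughout are routine given the standard facts about H-polyhedra recalled at the start of this appendix.
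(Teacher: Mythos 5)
Your proof is correct, but it takes a genuinely different route from the paper. The paper's proof never leaves the ambient space and never does a case analysis: it homogenises $P=\{x:Ax\geq b\}$ to the three-dimensional cone $P'=\{y\in\mathbb{R}^{d+1}:[A\ \ -b]\,y\geq0\}$, writes $P'=\mbox{cone}(V)$ with the generators scaled so their last coordinate is $0$ or $1$, applies Carath\'eodory's Theorem to split $P'$ into simplicial cones on at most three generators, and then intersects with the hyperplane $x_{d+1}=1$; whether one, two, or three of the generators of a simplicial piece have last coordinate $1$ directly yields the infinite cone, the infinite strip, or the triangle, and the size bound falls out because each generator solves a subsystem of the defining inequalities. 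You instead parametrise the affine hull to reduce to a full-dimensional planar polyhedron and dissect it by hand via a case analysis on the recession cone (bounded: fan triangulation; ray: hull of the vertices plus the strip over the chord $[v_0,v_k]$; pointed two-dimensional cone: a translate of the recession cone plus two leftovers whose recession cones collapse to rays; plus the line/half-plane/plane cases). Both arguments are sound and give vectors of size $\pin{\Vert P\Vert}$; yours is more elementary and incidentally produces only polynomially many pieces, whereas the paper only needs (and only claims) an exponential bound on the number of pieces for the nondeterministic guess, and its homogenisation-plus-Carath\'eodory argument is shorter and avoids your planar case analysis and the recursion in the pointed-cone case. Two spots in your write-up are asserted rather than proved and deserve a sentence each if written out in full: in the ray case, the claim that a ray in direction $r$ can only leave $\mathrm{conv}(V)$ through the chord needs either a supremum argument at the exit point or the observation that no bounded edge of $P$ can be parallel to $r$ (otherwise it would merge with an unbounded edge); and in the line/half-plane/plane cases one should note explicitly that the recession cone being a half-plane (resp.\ a line, the plane) forces all constraint normals to be parallel, so that $P$ really is a half-plane (resp.\ a strip, all of $\mathbb{R}^2$). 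Both are routine for planar H-polyhedra.
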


\begin{proof}
Let 
\[P = \{ x\in\mathbb{R}^d : Ax\geq b\}\]
for some $A\in\mathbb{R}^{m\times d}$, $b\in\mathbb{R}^d$ and define
the polygon 
\[ P' = 
\{ y\in\mathbb{R}^{d+1} : 
[\begin{array}{cc}A & -b\end{array}]\,y \geq 0 \}
\]
so that $\dim(P')=3$ and
\[
P=\{ x\in\mathbb{R}^d : (\begin{array}{cc} x & 1 \end{array})^T\in P' \}
\]
Notice that $P'$ is specified using only homogeneous inequalities, so 
there exist vectors $V=\{v_1,\dots,v_s\}$ such that $P'=\mbox{cone}(V)$.
By scaling if necessary, we can assume the $(d+1)$-th component of each $v_i$
is either $0$ or $1$. Let $\mathcal{H}$ denote the hyperplane in 
$\mathbb{R}^{d+1}$ where the $(d+1)$-th coordinate is $1$. 
By Carath{\'e}odory's Theorem, $P'$ may be written as the union of finitely 
many cones generated from linearly independent subsets of $V$. Let $u_i$ 
be the projection of $v_i$ to the first $d$ coordinates. Since
$\dim(P')=3$, no more than three elements of $V$ can be linearly independent,
so
\[
P' = \bigcup_{(i_1,i_2,i_3)\in I} \mbox{cone}(v_{i_1},v_{i_2},v_{i_3})
\]
The intersection $\mathcal{H}\cap\mbox{cone}(v_{i_1},v_{i_2},v_{i_3})$ 
is non-empty if and only if at least one of $v_{i_1},v_{i_2},v_{i_3}$ has $1$
in the $(d+1)$-th coordinate. Therefore,
$P$ is the finite union of shapes $A_i$ with only two
degrees of freedom:
\[
A_i = \{ \alpha u_{i_1} + \beta u_{i_2} + \gamma u_{i_3} : 
\alpha,\beta,\gamma\geq0\wedge T_i(\alpha,\beta,\gamma) \}
\]
where each predicate $T_i$ is $\alpha=1$, or $\alpha+\beta=1$, or 
$\alpha+\beta+\gamma=1$. These are precisely the desired three types of 
parametric shapes. The descriptions of the vectors involved is polynomially
large because each vector $v_i$ is the intersection of $d$ of the 
halfspaces in $\mathbb{R}^{d+1}$ which define $P'$.
\end{proof}
A simpler version of the above result gives a similar parametric
form in the case $\dim(P)=1$:
\begin{lemma}\label{lem: poly1}
Suppose $P\subseteq\mathbb{R}^d$ is a one-dimensional polyhedron. Then 
\[ P = \{ v_1 + \alpha v_2 : T(\alpha)\} \]
where the predicate $T(\alpha)$ is one of $\alpha\in\mathbb{R}$, $\alpha\geq0$
and $\alpha\in[0,1]$.
Furthermore, if we are given a halfspace description of $P$ with 
length $\Vert P\Vert$, the size of the representation of $v_1,v_2$ 
is at most $\pin{\Vert P\Vert}$.
\end{lemma}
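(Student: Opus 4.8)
The plan is to follow the proof of Lemma~\ref{lem: poly2} \emph{mutatis mutandis}, exploiting the fact that a one-dimensional polyhedron, being convex, is a single interval inside its affine hull. Write $P = \{x\in\mathbb{R}^d : Ax\geq b\}$ and homogenise: set $P' = \{y\in\mathbb{R}^{d+1} : [\,A \mid -b\,]\,y\geq 0\}$, so that $P'=\mathrm{cone}(V)$ for some finite set $V$, and $P = \{x : (x,1)^T\in P'\}$. Since $\dim(P)=1$ we have $\dim(P')=2$, so by Carath\'eodory's Theorem (Theorem~\ref{thm: cara}) $P'$ is a finite union of cones each generated by at most two linearly independent members of $V$; after scaling we may assume every generator has $(d+1)$-th coordinate $0$ or $1$.

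Next I would intersect with the hyperplane $\mathcal{H}=\{y : y_{d+1}=1\}$ and project onto the first $d$ coordinates. A cone on a single generator meets $\mathcal{H}$ in at most a point; a cone on two generators meets $\mathcal{H}$ in a ray (if exactly one generator has last coordinate $1$), a closed segment (if both do), or $\emptyset$ otherwise. Hence $P$ is a finite union of points, segments, and rays, all contained in the line $\mathrm{aff}(P)$. Since $P$ is convex and one-dimensional, this union is a single closed interval of $\mathrm{aff}(P)$ that is not a point: either all of $\mathrm{aff}(P)$, a half-line, or a bounded segment. In each case a straightforward affine reparametrisation $\alpha\mapsto v_1+\alpha v_2$ — taking $v_1$ to be the finite endpoint when there is one, and $v_2$ to be the direction of $\mathrm{aff}(P)$ (possibly negated, and rescaled by the length of the interval in the bounded case) — puts $P$ in the claimed form with $T(\alpha)$ equal to $\alpha\in\mathbb{R}$, $\alpha\geq 0$, or $\alpha\in[0,1]$ respectively.

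For the size bound, each generator $v_i\in V$ arises as the solution of a system selecting $d$ of the $\leq m$ defining hyperplanes of $P'$, so its entries are ratios of $(d\times d)$-subdeterminants of $[\,A\mid -b\,]$ and hence have bit-size $\pin{\Vert P\Vert}$; the interval endpoints, and therefore $v_1,v_2$, are obtained from these by bounded arithmetic, so they too have size $\pin{\Vert P\Vert}$. The only point requiring slight care is the merging step — verifying that the union produced by Carath\'eodory collapses to a single interval and identifying which of the three types it is — but this is immediate from convexity of $P$, and in any case can be bypassed by substituting $x = p_0 + \alpha w$ (for $p_0$ a point of $P$ and $w$ a direction spanning $\mathrm{aff}(P)$, both computable in polynomial time from the H-description) directly into the constraints $v_i^Tx\geq c_i$: each becomes a linear inequality in $\alpha$, exhibiting $P$ as an interval $\{\alpha : a\leq\alpha\leq b\}$ outright, after which the same reparametrisation applies.
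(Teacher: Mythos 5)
Your proposal is correct and takes essentially the same route as the paper, which proves Lemma~\ref{lem: poly1} only implicitly, as ``a simpler version'' of the homogenisation-plus-Carath\'eodory argument of Lemma~\ref{lem: poly2}, with the same Cramer-style bound on the generators' bit-size. Your additions---the explicit convexity argument collapsing the union of points, segments and rays into a single interval, and the alternative shortcut of substituting the parametrisation of $\mathrm{aff}(P)$ directly into the constraints---merely make explicit (and, in the second case, streamline) what the paper leaves unstated.
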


\subsection{Algebraic numbers}\label{sec: algnum}
In this section we briefly review relevant notions in algebraic number
theory. See, e.g.,~\cite{cohen} for more details.

A complex number $\alpha$ is \emph{algebraic} if there exists a
polynomial $p\in\mathbb{Q}[x]$ such that $p(\alpha)=0$.  The set of
algebraic numbers, denoted by $\mathbb{A}$, is a subfield of
$\mathbb{C}$. The \emph{minimal polynomial} of $\alpha$, denoted
$f_\alpha(x)$, is the unique monic polynomial with rational
coefficients of least degree which vanishes at $\alpha$.  The
\emph{degree} of $\alpha\in\mathbb{A}$ is defined as the degree of its
minimal polynomial and is denoted by $n_{\alpha}$. The \emph{height}
of $\alpha$ is defined as the maximum absolute value of a numerator or
denominator of a coefficient of the minimal polynomial of $\alpha$,
and is denoted by $H_{\alpha}$. The roots of $f_{\alpha}(x)$
(including $\alpha$) are called the \emph{Galois conjugates} of
$\alpha$. An \emph{algebraic integer} is an algebraic number $\alpha$
such that $f_{\alpha}\in\mathbb{Z}[x]$. The set of algebraic integers,
denoted $\mathcal{O}_\mathbb{A}$, is a ring under the usual addition
and multiplication.

The \emph{canonical representation} of an algebraic number $\alpha$ is its
minimal polynomial $f_{\alpha}(x)$, along with a numerical approximation of
$\mathit{Re}(\alpha)$ and $\mathit{Im}(\alpha)$ of sufficient precision to
distinguish $\alpha$ from its Galois conjugates. More precisely, we represent
$\alpha$ by the tuple 
\[
(f_{\alpha},x,y,R)
\in\mathbb{Q}[x]\times\mathbb{Q}^3
\]
meaning that $\alpha$ is the unique root of $f_{\alpha}$ inside the circle
centred at $(x,y)$ in the complex plane with radius $R$. A bound due to
Mignotte \cite{mignotteRootSep} states that for roots $\alpha_i\neq\alpha_j$ of
a polynomial $p(x)$,
\begin{equation}\label{eq:root separation}
|\alpha_i-\alpha_j|>\frac{\sqrt{6}}{n^{(n+1)/2}H^{n-1}}
\end{equation}
where $n$ and $H$ are the degree and height of $p$, respectively.  Thus, if $R$
is restricted to be less than a quarter of the root separation bound, the
representation is well-defined and allows for equality checking. Observe that
given $f_{\alpha}$, the remaining data necessary to describe $\alpha$ is
polynomial in the length of the input. It is known how to obtain polynomially
many bits of the roots of any $p\in\mathbb{Q}[x]$ in polynomial time
\cite{panApproximatingRoots}.

When we say an algebraic number $\alpha$ is given, we
assume we have a canonical description of $\alpha$. We will denote by
$\Vert\alpha\Vert $ the length of this description, assuming that integers are
expressed in binary and rationals are expressed as pairs of integers. Observe
that $|\alpha|$ is an exponentially large quantity in $\Vert\alpha\Vert$
whereas $\ln|\alpha|$ is polynomially large. Notice also that $1/\ln|\alpha|$
is at most exponentially large in $\Vert\alpha\Vert$.  For a rational $a$,
$\Vert a\Vert$ is just the sum of the lengths of its numerator and denominator
written in binary. For a polynomial $p\in\mathbb{Q}[x]$, $\Vert p\Vert$ will
denote $\sum_{i=0}^n\Vert p_i\Vert$ where $n$ is the degree of the polynomial
and $p_i$ are its coefficients. Using the resultant method, operations may be
performed efficiently on algebraic numbers. Specifically, techniques from 
algebraic number theory \cite{cohen} yield the following lemma:
\begin{lemma}\label{lem:operations on algebraic numbers}
Given canonical representations of $\alpha,\beta\in\mathbb{A}$ and a
polynomial $p\in\mathbb{Q}[x]$, it is possible to compute canonical
descriptions of $\alpha\pm\beta$, $\alpha\beta^{\pm1}$,
$\sqrt{\alpha}$ and $p(\alpha)$, to check the equality $\alpha=\beta$
and $\alpha$'s membership in $\mathbb{N}, \mathbb{Z}, \mathbb{Q}$, and
finally to determine whether $\alpha$ is a root of unity, and if so,
to calculate its order and argument.  All of these procedures have
polynomial running time.
\end{lemma}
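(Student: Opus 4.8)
The plan is to give one template and instantiate it for each operation. The canonical data of an algebraic number $\gamma$ is its minimal polynomial $f_\gamma$ together with an isolating disk, so for every construction I would proceed in three stages: (i) use elimination theory to produce a polynomial $g\in\mathbb{Q}[x]$ of degree and bit-size polynomial in the input and known to satisfy $g(\gamma)=0$; (ii) factor $g$ into $\mathbb{Q}$-irreducibles, which is possible in polynomial time and, by Mignotte's bound on the coefficients of factors \cite{mignotteRootSep}, produces factors of polynomial bit-size; and (iii) obtain a numerical approximation of $\gamma$ from those of $\alpha,\beta$ by ordinary interval arithmetic together with \cite{panApproximatingRoots}, precise enough both to single out the unique irreducible factor of $g$ having a root nearby --- this factor is $f_\gamma$ --- and, after further refinement, to enclose $\gamma$ in a disk of radius below a quarter of the Mignotte separation bound (\ref{eq:root separation}) for $f_\gamma$, which is exactly the condition making the isolating data well defined.

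For stage (i) the usual eliminants work: $\mathrm{Res}_x(f_\alpha(x),f_\beta(t\mp x))$ vanishes at $\alpha\pm\beta$; the homogenised resultant $\mathrm{Res}_x(f_\alpha(x),x^{n_\beta}f_\beta(t/x))$ vanishes at $\alpha\beta$ (the cases $\alpha\beta=0$ or $\alpha=0$ being handled by inspection of $f_\alpha$); the reciprocal polynomial $x^{n_\alpha}f_\alpha(1/x)$ vanishes at $\alpha^{-1}$; $f_\alpha(x^2)$ vanishes at $\sqrt{\alpha}$; and $\mathrm{Res}_x(f_\alpha(x),t-p(x))$ vanishes at $p(\alpha)$. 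Each resultant is the determinant of a Sylvester matrix whose entries are rationals (or, in the $t\mp x$ case, polynomials in $t$) of polynomial bit-size, so by Hadamard's inequality the eliminant has degree at most $n_\alpha n_\beta$ (respectively $n_\alpha$ or $n_\alpha\deg p$) and polynomial bit-size. Equality $\alpha=\beta$ is then decided by computing the canonical description of $\alpha-\beta$ and testing whether $f_{\alpha-\beta}=x$; membership in $\mathbb{Q}$ is the test $n_\alpha=1$, after which the unique root of $f_\alpha$ is a rational that one checks for integrality and nonnegativity to decide membership in $\mathbb{Z}$ and $\mathbb{N}$.

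For the root-of-unity test I would use that $\alpha$ is a root of unity exactly when $f_\alpha$ equals a cyclotomic polynomial $\Phi_k$, in which case $\alpha$ is a primitive $k$-th root of unity; since $\deg\Phi_k=\varphi(k)$ grows at least like $\sqrt{k/2}$, any such $k$ is at most quadratic in $n_\alpha$, so one enumerates the polynomially many candidates $k$, builds each $\Phi_k$ (of polynomial bit-size), and compares with $f_\alpha$. On a match the order is $k$, and the argument is $2\pi j/k$ for the unique $j$ coprime to $k$ with $e^{2\pi i j/k}$ lying in the isolating disk of $\alpha$, which is read off from its approximation $(x,y)$.

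The one genuinely delicate point --- and the step I expect to be the real obstacle --- is stage (iii): after a construction one must rebuild a \emph{valid} isolating disk, and the numerical precision this requires is governed by the root-separation bound (\ref{eq:root separation}) for the \emph{new} minimal polynomial, whose degree and height are known only a posteriori; the argument is that these remain polynomially bounded, so the separation bound is only exponentially small, hence polynomially many bits of $\gamma$ suffice and \cite{panApproximatingRoots} delivers them in polynomial time. Combined with the polynomial bounds on resultants and on rational polynomial factorisation, this yields the claimed polynomial running time; the remaining details are routine bookkeeping along the lines of \cite{cohen}.
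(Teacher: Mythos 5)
The paper gives no explicit proof of this lemma, merely invoking ``the resultant method'' and standard algebraic-number-theory techniques from \cite{cohen}, and your proposal is exactly that standard argument --- resultant eliminants, polynomial-time factorisation over $\mathbb{Q}$, root approximation via \cite{panApproximatingRoots}, an isolating disk below a quarter of the Mignotte separation bound, and the $\varphi(k)\geq\sqrt{k/2}$ bound (which the paper itself uses elsewhere) for the root-of-unity test --- so it is correct and matches the paper's intended approach. One trivial nit: with your sign convention $\mathrm{Res}_x\bigl(f_\alpha(x),f_\beta(t+x)\bigr)$ vanishes at the differences $\beta'-\alpha'$ rather than at $\alpha-\beta$, which is fixed by negating $t$ (or swapping the roles of $\alpha$ and $\beta$).
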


\subsection{Linear recurrence sequences}\label{subsec: LRSprelims}

We now recall some basic properties of linear recurrence sequences.
For more details, we refer the reader to \cite{everest,tucs}. 
A \emph{real linear recurrence sequence (LRS)} is an infinite sequence
$\ess=\langle \ess(0),\ess(1),\ess(2),\dots\rangle$ over $\mathbb{R}$
such that there exists a natural number $k$ and real numbers $a_1,\dots,a_k$
such that $a_k\neq0$ and $\ess$ satisfies the linear recurrence equation
\begin{equation}\label{eq: recurrence}
\ess(n+k) = a_1\ess(n+k-1) + a_2\ess(n+k-2) + \dots + a_k\ess(n)
\end{equation}
The recurrence (\ref{eq: recurrence}) is said to have order $k$.  Note
that the same sequence can satisfy different recurrence relations, but
it satisfies a unique recurrence of minimum order.

The \emph{characteristic polynomial} of $\ess$ is 
\[ p(x) = x^k - a_1x^{k-1} - a_2x^{k-2} - \dots - a_k \]
and its roots are called the \emph{characteristic roots} of the sequence. For
real LRS, the set of characteristic roots is closed under complex conjugation.
If $\rho_1,\dots,\rho_l\in\mathbb{R}$ are the real roots of $p(x)$ and
$\gamma_1,\overline\gamma_1,\dots,\gamma_m,\overline\gamma_m\in\mathbb{C}$ are 
the complex ones, the sequence is given by
\[ 
\ess(n) = \sum_{i=1}^l A_i(n)\rho_i^n + 
\sum_{j=1}^m\left( C_j(n)\gamma_j^n + 
\overline{C_j}(n)\overline{\gamma_j}^n \right)
\]
for all $n\geq0$,
where $A_i\in\mathbb{R}[x]$ and $C_j\in\mathbb{C}[x]$ are
univariate polynomials whose degrees are at most the multiplicity
of the corresponding roots of $p(x)$. The coefficients of $A_i,C_i$
are effectively computable algebraic numbers.

If $M\in\mathbb{R}^{k\times k}$ is a real square matrix and $v,w\in
\mathbb{R}^k$ are real column vectors, then it can be shown using the
Cayley-Hamilton Theorem that the sequence $\ess(n)=v^TM^nw$ satisfies
a linear recurrence of order $k$.  Conversely, any LRS may be
expressed in this way: it is sufficient to take $M$ to be the
transposed companion matrix of the characteristic polynomial of
$\ess$, $v$ to be the vector $(\ess(k-1),\dots,\ess(0))^T$ of initial
terms of $\ess$ in reverse order, and $w$ to be the unit vector
$(0,\dots,0,1)^T$.  The characteristic roots of the LRS are precisely
the eigenvalues of $M$.


A linear recurrence sequence is called \emph{degenerate} if for some
pair of distinct characteristic roots $\lambda_1, \lambda_2$ of its
minimum-order recurrence, the ratio $\lambda_1/\lambda_2$ is a root of
unity, otherwise the sequence is \emph{non-degenerate}. As pointed out
in \cite{everest}, the study of arbitrary LRS can effectively be
reduced to that of non-degenerate LRS by partitioning the original LRS
into finitely many non-degenerate subsequences. Specifically, for a
given degenerate linear recurrence sequence $\ess$ with characteristic
roots $\lambda_i$, let $L$ be the least common multiple of the orders
of all ratios $\lambda_i/\lambda_j$ which are roots of unity. Then
consider the sequences \[\ess^{(j)}(n) = u^T A^{nL+j} v = u^T (A^L)^n
(A^jv)\] where $j\in\{0,\dots,L-1\}$. Each of these sequences has
characteristic roots $\lambda_i^L$ and is therefore non-degenerate,
because $(\lambda_1/\lambda_2)^{Lk} = 1$ implies $\lambda_1^L =
\lambda_2^L$. From the crude lower bound $\varphi(r)\geq\sqrt{r/2}$
on Euler's totient function, it follows that if $\alpha$ has degree $d$ and is a primitive
$r$-th root of unity, then $r\leq2d^2$. Thus, $L\in\ein{||A||}$,
so non-degeneracy can be ensured by considering at most exponentially
many subsequences of the original LRS.

\subsection{First-order theory of the reals}\label{sec: FOreals}
Let $x_1,\dots,x_m$ be first-order
variables ranging over $\mathbb{R}$, and suppose $\sigma(x_1,\dots,x_m)$ is a
Boolean combination of predicates of the form $g(x_1,\dots,x_m)\sim0$, where
$g\in\mathbb{Z}[x_1,\dots,x_m]$ is a polynomial and $\sim$ is $>$ or $=$. A 
\emph{sentence of the first-order theory of the reals} is a formula $\tau$ of
the form
\[ Q_1x_1\dots Q_mx_m \sigma(x_1,\dots,x_m) \]
where each $Q_i$ is one of the quantifiers $\exists$ and $\forall$. If all the
quantifiers are $\exists$, then $\tau$ is said to be a sentence of the 
\emph{existential} first-order theory of the reals. 

The decidability of the first-order theory of the reals was originally 
established by Tarski \cite{tarski}. Many refinements followed over the years,
culminating in the analysis of Renegar \cite{renegar}. We make use of the 
following result:

\begin{theorem}\label{thm: fo}
Suppose we are given a sentence $\tau$ of the form above using only
existential quantifiers. The problem of deciding whether $\tau$ holds
over the reals is in $\mathbf{PSPACE}$.  Furthermore, if
$M\in\mathbb{N}$ is a fixed constant and we restrict the problem to
formulae $\tau$ where the number of variables is bounded above by $M$,
then the problem is in $\mathbf{PTIME}$.
\end{theorem}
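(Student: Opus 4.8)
The plan is to obtain both assertions by extracting the relevant bounds from the complexity analysis of quantifier elimination over real closed fields, culminating in Renegar's work~\cite{renegar}; an alternative self-contained route to the $\mathbf{PSPACE}$ bound is Canny's dedicated decision procedure for the existential theory of the reals. First I would read off from the input sentence $\tau = \exists x_1\cdots\exists x_m\,\sigma(x_1,\dots,x_m)$ the four quantities that govern Renegar's bounds: the number $m$ of quantified variables, the number $s$ of distinct polynomials occurring in the atomic predicates of $\sigma$, a bound $d$ on their total degrees, and a bound on the bit-lengths of their integer coefficients; each of these is at most $\Vert\tau\Vert$. Deciding $\tau$ amounts to testing nonemptiness of the semialgebraic set defined by $\sigma$ over $\mathbb{R}$.

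For the $\mathbf{PSPACE}$ claim I would invoke that Renegar's procedure for a single existential block of quantifiers tests this nonemptiness using $(sd)^{O(m)}$ arithmetic operations on integers of bit-length $\Vert\tau\Vert\,d^{O(m)}$, and---this is the point that matters here---that the whole computation unfolds into a (uniform) Boolean circuit of depth $\pin{\Vert\tau\Vert}$ and size $\ein{\Vert\tau\Vert}$. Such a circuit can be evaluated in space of order its depth times the logarithm of its size, i.e.\ in space $\pin{\Vert\tau\Vert}$; hence the decision problem lies in $\mathbf{PSPACE}$.

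For the $\mathbf{PTIME}$ claim, suppose $m\le M$ for a fixed constant $M$. Then $d^{O(m)}\le d^{O(M)}$ is polynomial in $d$ and $(sd)^{O(m)}\le(sd)^{O(M)}$ is polynomial in $sd$, so Renegar's procedure performs $\pin{\Vert\tau\Vert}$ arithmetic operations on integers of bit-length $\pin{\Vert\tau\Vert}$, each of which costs $\pin{\Vert\tau\Vert}$ time; the overall running time is therefore $\pin{\Vert\tau\Vert}$.

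The one genuinely delicate step is the space bound, since one cannot simply run the sequential algorithm in polynomial space: when $m$ is of order $\Vert\tau\Vert$ the intermediate integers already have bit-length $\Vert\tau\Vert\,d^{O(m)}$, which is exponential in the input. What makes the argument go through is the low parallel depth of Renegar's (equivalently Canny's) algorithm together with the standard fact that a low-depth, exponential-size uniform Boolean circuit can be evaluated in polynomial space; verifying that the algorithm really does decompose into such a circuit is the substantive content, and is precisely what the cited analyses supply.
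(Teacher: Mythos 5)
Your proposal is correct and follows the same route as the paper, which gives no proof of its own but imports the result directly from Renegar's analysis (with Canny's procedure being the standard alternative citation for the $\mathbf{PSPACE}$ bound on the existential fragment). Your extraction of the parameter dependence $(sd)^{O(m)}$ with coefficient bit-length $\Vert\tau\Vert\,d^{O(m)}$, the reduction of the space bound to the low parallel depth of the algorithm, and the observation that fixing the number of variables makes these bounds polynomial is exactly the content supplied by the cited works, so there is nothing to add beyond the appeal to those analyses that you already make explicit.
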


\section{Technical lemmas}\label{app: baker}
\begin{theorem}\label{thm:baker}
(Baker and W\"ustholz \cite{baker})
Let $\alpha_1,\dots,\alpha_m$ be algebraic numbers other than
$0$ or $1$, and let $b_1,\dots,b_m$ be rational integers. Write
\[ \Lambda = b_1\log\alpha_1+\dots+b_m\log\alpha_m \]
Let $A_1,\dots,A_m,B\geq e$ be real numbers such that, for
each $j\in\{1,\dots,m\}$, $A_j$ is an upper bound for the height
of $\alpha_j$, and $B$ is an upper bound for $|b_j|$. Let $d$
be the degree of the extension field $\mathbb{Q}(\alpha_1,
\dots,\alpha_m)$ over $\mathbb{Q}$. If $\Lambda\neq0$, then
\[ \log|\Lambda|>-(16md)^{2(m+2)}\log(A_1)\dots\log(A_m)\log(B)\]
\end{theorem}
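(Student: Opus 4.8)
This is the Baker--W\"ustholz theorem on linear forms in logarithms, one of the deepest results in transcendental number theory; in the paper it is used purely as a black box, cited from \cite{baker}, and I would not attempt to reproduce a self-contained proof. What I can describe is the shape of the classical Gelfond--Baker argument that underlies it.

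The plan is to argue by contradiction: suppose $\Lambda\neq0$ but $\log|\Lambda|$ violates the asserted lower bound, so that $|\Lambda|$ --- equivalently $|\alpha_1^{b_1}\cdots\alpha_m^{b_m}-1|$ --- is astronomically small in terms of $m$, $d$, the $A_j$ and $B$. Assuming $b_m\neq0$ and setting $\beta_j=-b_j/b_m$, the quantity $\alpha_1^{\beta_1}\cdots\alpha_{m-1}^{\beta_{m-1}}$ is then extremely close to $\alpha_m$. The first step is to build, via Siegel's lemma (a counting argument producing small solutions to an underdetermined linear system over the number field $\mathbb{Q}(\alpha_1,\dots,\alpha_m)$), a not-identically-zero integer polynomial $P$ of controlled size so that the auxiliary exponential polynomial
\[ \Phi(z)=\sum_{\lambda_1,\dots,\lambda_m}P(\lambda_1,\dots,\lambda_m)\,\big(\alpha_1^{\beta_1}\cdots\alpha_{m-1}^{\beta_{m-1}}\big)^{\lambda_m z}\,\alpha_1^{\lambda_1 z}\cdots\alpha_{m-1}^{\lambda_{m-1}z} \]
(in practice a several-variable variant) vanishes, together with its derivatives up to some order, at every integer $z$ in an initial interval. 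Because the product above is so close to $\alpha_m$, replacing it by $\alpha_m$ perturbs each derivative value only negligibly, while the perturbed values are genuine algebraic numbers of bounded degree and height; a nonzero such value would contradict a Liouville-type lower bound (the norm of a nonzero algebraic integer has absolute value at least $1$), so those values must vanish as well.

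The crux, and the step I expect to be the main obstacle, is the \emph{extrapolation}: using the maximum-modulus principle together with growth estimates on the entire function $\Phi$, one shows that the high-order vanishing on the initial grid of integers forces vanishing on a substantially larger grid; iterating this amplification eventually produces more vanishing than an exponential polynomial with coefficient matrix of that rank can sustain (a Vandermonde / nonvanishing argument), contradicting the construction of $P$. Pushing this analytic machinery through while bookkeeping every constant is exactly what delivers the explicit bound $(16md)^{2(m+2)}\log A_1\cdots\log A_m\log B$; the W\"ustholz contribution is the sharpening of these constants and the conceptual recasting in terms of commutative group varieties and the analytic subgroup theorem. For the present paper it suffices to invoke the statement as quoted.
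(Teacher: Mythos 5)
The paper does not prove this result either: it is quoted verbatim from Baker and W\"ustholz \cite{baker} and used purely as a black box, which is exactly what you do, so your treatment matches the paper's. Your sketch of the underlying auxiliary-function/extrapolation machinery is a fair description of the classical argument but is not needed (nor attempted) in the paper.
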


\begin{theorem}\label{thm: skolem}
Suppose $\alpha,\beta,\gamma,A,B,C\in\mathbb{A}$ and the ratios of
$\alpha,\beta,\gamma$ (where they exist) are not roots of unity.  Let $\Vert
I\Vert = \Vert\alpha\Vert+\Vert\beta\Vert+\Vert\gamma\Vert+\Vert A\Vert+ \Vert
B\Vert+\Vert C\Vert$. Then there exist effective bounds $N_1\in\pin{\Vert
I\Vert}$ and $N_2\in\ein{\Vert I\Vert}$ such that
if $A\alpha^n + B\beta^n = 0$ then $n \leq N_1$,
and if $A\alpha^n + B\beta^n + C\gamma^n = 0$
or
$A\alpha^n + Bn\beta^{n-1} + C\beta^n = 0$
then $n\leq N_2$.
\end{theorem}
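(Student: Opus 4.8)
The plan is to handle the three displayed equations in turn, reading each as a zero-finding problem for a linear recurrence: the first for a recurrence of order $2$, the other two for recurrences of order $3$. Only the first calls for a \emph{polynomial} bound, so I would treat it first and most carefully; for the other two an exponential bound suffices. Throughout I would discard the degenerate sub-cases in which some coefficient $A,B,C$ or some root $\alpha,\beta,\gamma$ vanishes, each of which either collapses to an equation of strictly smaller order or, by inspection, has no solution or a single explicitly bounded one. For $A\alpha^{n}+B\beta^{n}=0$, note first that it has at most one solution: if it held for $n<n'$, dividing the two instances would give $(\alpha/\beta)^{n'-n}=1$, against the hypothesis that $\alpha/\beta$ is not a root of unity. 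To bound the unique solution $n_{0}$, rewrite the equation as $(\alpha/\beta)^{n_{0}}=-B/A$ and take absolute logarithmic Weil heights $h$: from $h(\xi^{k})=|k|\,h(\xi)$ we get $n_{0}=h(-B/A)/h(\alpha/\beta)$. Now $h(-B/A)$ is polynomial in $\Vert I\Vert$, while $h(\alpha/\beta)>0$ by Kronecker's theorem (as $\alpha/\beta$ is not a root of unity), and by a quantitative lower bound for heights of non-roots-of-unity --- Dobrowolski's theorem, or the older, weaker bounds of Mahler or Blanksby--Montgomery, all suffice --- one has $h(\alpha/\beta)\ge 1/\pin{\Vert I\Vert}$ since $\deg(\alpha/\beta)\le\pin{\Vert I\Vert}$. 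Hence $n_{0}\le N_{1}\in\pin{\Vert I\Vert}$, and $N_{1}$ is effective because $h(\alpha/\beta)$ is computable (Lemma~\ref{lem:operations on algebraic numbers}).

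For $A\alpha^{n}+B\beta^{n}+C\gamma^{n}=0$ I would split on how many of $|\alpha|,|\beta|,|\gamma|$ attain the maximum. If a single root, say $\alpha$, is strictly dominant, then $|A|\,|\alpha|^{n}\le(|B|+|C|)\mu^{n}$ with $\mu=\max(|\beta|,|\gamma|)<|\alpha|$, so $(|\alpha|/\mu)^{n}\le(|B|+|C|)/|A|$; since $|\alpha|/\mu>1$ is algebraic with polynomially-sized description, $1/\log(|\alpha|/\mu)\le\ein{\Vert I\Vert}$ (cf.\ Section~\ref{sec: algnum}), which forces $n\le\ein{\Vert I\Vert}$ with no appeal to Baker. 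If exactly two roots, say $\alpha$ and $\beta$, are dominant, rewrite the equation as $A(\alpha/\beta)^{n}+B=-C(\gamma/\beta)^{n}$, so $\bigl|A(\alpha/\beta)^{n}+B\bigr|=|C|\rho^{n}$ with $\rho=|\gamma|/|\beta|<1$; if $|A|\ne|B|$ the left-hand side is bounded below by $\bigl|\,|A|-|B|\,\bigr|\ge 1/\ein{\Vert I\Vert}$ and $n\le\ein{\Vert I\Vert}$ follows at once. If $|A|=|B|$, then $\zeta:=\alpha/\beta$ and $\zeta_{0}:=-B/A$ both lie on the unit circle, the identity forces $|\zeta^{n}-\zeta_{0}|=O(\rho^{n})$, and I would apply the Baker--W\"ustholz bound (Theorem~\ref{thm:baker}) to $\Lambda=n\log\zeta-\log\zeta_{0}-2k\log(-1)$, where $k$ is the integer making $|\Lambda|<\pi$ (so $|k|\le n$ and, via $\log(-1)=i\pi$, $|\Lambda|\asymp|\zeta^{n}-\zeta_{0}|$). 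The arguments $\zeta,\zeta_{0},-1$ are algebraic and distinct from $0$ and $1$ after the reductions: if $\zeta_{0}=1$ one drops that logarithm and the resulting two-term form is nonzero for every $n\ge 1$ (else $\zeta$ would be a root of unity); otherwise $\Lambda=0$ precisely when $(\alpha/\beta)^{n}=-B/A$, which is the order-two equation already bounded. Baker then gives $|\Lambda|\ge n^{-D}$ with $D\in\pin{\Vert I\Vert}$, and comparing with $O(\rho^{n})$ --- using that $1/\log(1/\rho)\le\ein{\Vert I\Vert}$ --- yields $n\le\ein{\Vert I\Vert}$. Finally, if all three moduli coincide, then $|\beta|^{n}\bigl|A(\alpha/\beta)^{n}+B\bigr|=|C||\gamma|^{n}=|C||\beta|^{n}$, i.e.\ $\bigl|A(\alpha/\beta)^{n}+B\bigr|=|C|$, so the unit-modulus number $(\alpha/\beta)^{n}$ lies on the intersection of the unit circle with a fixed circle, hence is one of at most two fixed algebraic numbers $w$ (every degenerate alignment of these circles again forces a lower-order equation); an exact identity $(\alpha/\beta)^{n}=w$ has at most one solution, bounded as in the order-two case. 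In every case $n\le N_{2}\in\ein{\Vert I\Vert}$.

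For $A\alpha^{n}+Bn\beta^{n-1}+C\beta^{n}=0$, rewrite $A\alpha^{n}=-\beta^{n-1}(Bn+C\beta)$ and take absolute values: $|A|\,|\alpha|^{n}=|\beta|^{n-1}\,|Bn+C\beta|$. If $|\alpha|\ne|\beta|$, one side outgrows the other even against the polynomial factor $n$, and the resulting inequality (of the shape $(|\alpha|/|\beta|)^{\pm n}\le c\,n$) bounds $n$ by $\ein{\Vert I\Vert}$, again with no appeal to Baker. If $|\alpha|=|\beta|$, the identity collapses to $|Bn+C\beta|=|A|\,|\beta|$, i.e.\ $|n+C\beta/B|=|A\beta/B|$; since $n$ is a real integer this gives $n=-\re(C\beta/B)\pm\sqrt{|A\beta/B|^{2}-\im(C\beta/B)^{2}}$, a real algebraic number of magnitude at most $\ein{\Vert I\Vert}$. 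So again $n\le N_{2}$.

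I expect the main obstacle to be the two-dominant-roots case of the order-three equation: assembling a linear form in two or three logarithms to which Theorem~\ref{thm:baker} applies --- handling the $2\pi i$ ambiguity of the complex logarithm through $\log(-1)$, keeping every argument away from $0$ and $1$, and carving out the exact-cancellation sub-case $(\alpha/\beta)^{n}=-B/A$, where $\Lambda=0$ and Baker says nothing, so it must be routed back to the order-two bound --- together with the uniform check that every quantity appearing in a denominator, such as $1/\log(|\alpha|/\mu)$, $\bigl|\,|A|-|B|\,\bigr|$, or $1/\log(1/\rho)$, is at least inverse-exponential in $\Vert I\Vert$ by the root-separation bound~(\ref{eq:root separation}). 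A secondary delicate point is the polynomiality of $N_{1}$, which rests on the quantitative height lower bound for non-roots-of-unity.
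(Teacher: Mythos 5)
The paper never actually proves Theorem~\ref{thm: skolem}: it is stated in the appendix of technical lemmas without argument, the bounds being imported from the authors' earlier work on the Orbit Problem, so there is no in-paper proof to compare against line by line. Your proposal is correct and follows what is essentially that standard route: for $A\alpha^n+B\beta^n=0$, uniqueness of the witness plus the height identity $h((\alpha/\beta)^n)=n\,h(\alpha/\beta)$ together with an explicit Lehmer-type lower bound (Dobrowolski or Blanksby--Montgomery) on $h(\alpha/\beta)$ gives the polynomial bound $N_1$; for the three-term equations, the case split on the number of dominant moduli, the reverse triangle inequality when $|A|\neq|B|$, and Baker--W\"ustholz (Theorem~\ref{thm:baker}) applied to $n\log(\alpha/\beta)-\log(-B/A)-2k\pi i$ in the two-dominant-root case, compared against the exponentially decaying term via $1/\log(1/\rho)\leq 2^{\Vert I\Vert^{O(1)}}$, gives $N_2$; and your purely elementary modulus analysis of $A\alpha^n+Bn\beta^{n-1}+C\beta^n=0$ is valid (and even avoids Baker there). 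The only caveats are cosmetic: the statement must implicitly exclude the totally degenerate data (e.g.\ $A=B=0$), as you note, and your appeals to ``root separation'' for lower bounds such as $\bigl||A|-|B|\bigr|\geq 2^{-\Vert I\Vert^{O(1)}}$ are more precisely instances of the standard lower bound on the absolute value of a nonzero algebraic number of bounded degree and height --- a routine substitution that does not affect correctness.
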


\begin{lemma} \label{lem:cosExpShrink}
Let $a,\lambda\in\mathbb{A}$ and $C,\chi\in\mathbb{A}\cap\mathbb{R}$ be given
where  $\lambda$ is not a root of unity and $|\chi|<|\lambda|=1$. Let
$\alpha=\arg(a)$ and $\varphi=\arg(\lambda)$.  Then there exists an effectively
computable
$N\in\mathbb{N}$ such that for all $n>N$, $|C+\cos(\alpha+n\varphi)|>|\chi|^n$.
Moreover, $N\in\ein{||I||}$ where $||I|| = ||\lambda||+||\chi||+||a||+||C||$.
\end{lemma}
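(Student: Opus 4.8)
The plan is to relate $|C+\cos(\alpha+n\varphi)|$ to how closely the sequence $n\varphi$ can approach (modulo $2\pi$) a zero of the function $g(\theta)=C+\cos\theta$, and then to invoke Baker's theorem (Theorem~\ref{thm:baker}) to convert a Diophantine lower bound on that approach into the required exponential separation. We may assume $a\neq0$, so that $\alpha=\arg a$ is defined.

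If $|C|>1$ then $|C+\cos(\alpha+n\varphi)|\geq|C|-1$ is a fixed positive constant, which exceeds $|\chi|^{n}$ (recall $|\chi|<1$) once $n$ passes an effective threshold of size $\ein{||I||}$; so assume from now on that $|C|\leq1$ and set $\theta^{\ast}=\arccos(-C)\in[0,\pi]$, so that the zero set of $g$ is $Z=\{\pm\theta^{\ast}+2\pi j:j\in\mathbb{Z}\}$. A routine analysis of the smooth $2\pi$-periodic function $g$ near its zeros --- which are simple when $|C|<1$ and double when $|C|=1$ --- yields an effectively computable $c>0$ with $c^{-1}\in\ein{||I||}$ such that $|g(\theta)|\geq c\cdot\mathrm{dist}(\theta,Z)^{2}$ for every real $\theta$. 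It therefore suffices to bound $d_{n}:=\mathrm{dist}(\alpha+n\varphi,Z)$ from below by $n^{-O(1)}$ for all large $n$.

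Writing $[x]$ for the distance from $x$ to the nearest integer multiple of $2\pi$, we have $d_{n}=\min\{[\,n\varphi+\alpha-\theta^{\ast}\,],\,[\,n\varphi+\alpha+\theta^{\ast}\,]\}$. Since complex conjugation fixes $\mathbb{A}$, the number $|a|=\sqrt{a\overline{a}}$ is algebraic, hence so is $\nu:=a/|a|$, with $|\nu|=1$ and $\arg\nu=\alpha$; similarly $\omega:=-C+i\sqrt{1-C^{2}}$ is algebraic with $|\omega|=1$ and $\arg\omega=\theta^{\ast}$. Fix a sign $\pm$ and put $\mu=\nu\omega^{\pm1}\in\mathbb{A}$, so $|\mu|=1$; by Lemma~\ref{lem:operations on algebraic numbers} a canonical description of $\mu$ of size $\pin{||I||}$ is computable. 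For each $n$ choose $k_{n}\in\mathbb{Z}$ with $n\varphi+\arg\mu+2\pi k_{n}\in(-\pi,\pi]$, so $|k_{n}|\leq n+1$, and set (principal branch, so $\log(-1)=i\pi$)
\[
\Lambda_{n}=n\log\lambda+\log\mu+2k_{n}\log(-1).
\]
Then $e^{\Lambda_{n}}=\lambda^{n}\mu$, which has modulus $1$, so $\mathrm{Re}\,\Lambda_{n}=0$ and $|\Lambda_{n}|=[\,n\varphi+\arg\mu\,]=[\,n\varphi+\alpha\pm\theta^{\ast}\,]$. If $\Lambda_{n}=0$ then $\lambda^{n}=\overline{\mu}$; since $\lambda$ is not a root of unity this has at most one solution $n_{0}$, which the two-term case of Theorem~\ref{thm: skolem} (roots $\lambda$ and $1$) bounds by some $N_{0}\in\pin{||I||}$. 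For every other $n$, $\Lambda_{n}$ is a nonzero $\mathbb{Z}$-linear form in the logarithms of $\lambda,\mu,-1$, all algebraic and different from $0$ and $1$ (should $\mu=1$, simply drop its term). Applying Theorem~\ref{thm:baker} with $m\leq3$, degree $[\mathbb{Q}(\lambda,\mu):\mathbb{Q}]\in\pin{||I||}$, height bounds at most $2^{\pin{||I||}}$, and $B=2(n+1)$, we get $\log|\Lambda_{n}|>-c_{1}\log(2n+2)$ for an effectively computable $c_{1}\in\pin{||I||}$, that is, $[\,n\varphi+\alpha\pm\theta^{\ast}\,]>(2n+2)^{-c_{1}}$.

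Taking the minimum over the two signs (each excluding its own exceptional exponent) gives $d_{n}>(2n+2)^{-c_{1}}$ for all $n$ beyond a threshold $\pin{||I||}$, whence $|C+\cos(\alpha+n\varphi)|\geq c\,d_{n}^{2}>c\,(2n+2)^{-2c_{1}}$. Since $c^{-1}\in\ein{||I||}$, $c_{1}\in\pin{||I||}$, and $1/|\ln|\chi||\in\ein{||\chi||}$, there is an effectively computable $N\in\ein{||I||}$ --- also taken to exceed the at most two exceptional exponents --- such that $c\,(2n+2)^{-2c_{1}}>|\chi|^{n}$ for all $n>N$, which proves the lemma. I expect the Baker step to be the one requiring care: setting up exactly the right linear form in logarithms (encoding the $2\pi i k_{n}$ term as $2k_{n}\log(-1)$ and verifying the form is purely imaginary with modulus equal to the quantity of interest), and especially ensuring it does not vanish by isolating --- via Theorem~\ref{thm: skolem} --- the unique exponent at which $C+\cos$ genuinely vanishes. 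The local quadratic estimate for $g$ and the verification that every constant produced has size $\ein{||I||}$ are routine.
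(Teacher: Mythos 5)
Your proof is correct and takes essentially the same route as the paper: the paper writes $C+\cos(\alpha+n\varphi)=2\cos\frac{\alpha+\beta+n\varphi}{2}\cos\frac{\alpha-\beta+n\varphi}{2}$ (with $C=\cos\beta$), excludes the finitely many exponents where a factor vanishes via Theorem~\ref{thm: skolem}, and lower-bounds each factor by $|\sin x|\geq|x|/\pi$ together with Baker's Theorem applied to the linear form in $\alpha,\beta,n\varphi,\pi$ --- precisely the combination you use, your quadratic distance-to-zero-set estimate being this product formula in disguise and your $\Lambda_n=n\log\lambda+\log\mu+2k_n\log(-1)$ being the same linear form made explicit. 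The only cosmetic differences are your direct handling of $|C|>1$ via the constant $|C|-1$ (the paper instead reduces it to the case $\beta=0$), which just requires the routine effective lower bound on $|C|-1$ for algebraic $C$.
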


\begin{proof}
Suppose that $|C| \leq 1$ and let $b=C+i\sqrt{1-C^2}=e^{i\beta}$, so that
$C=\cos(\beta)$.  Then $b$ is algebraic with $\deg(b)\in\pin{||I||}$,
$H_b\in\ein{||I||}$. It is clear that
\[ C+\cos(\alpha+n\varphi) = 
2
\cos\frac{\alpha+\beta+n\varphi}{2}
\cos\frac{\alpha-\beta+n\varphi}{2}
\]
Since $\lambda$ is not a root of unity, by Lemma \ref{thm: skolem}, there
exists an effective constant $N_1\in\pin{||I||}$ such that if $ab^{\pm
1}\lambda^n = -1$ then $n\leq N_1$. Therefore, for $n>N_1$, we have
$\cos(\alpha\pm\beta +n\varphi) \neq0$.  Let $k_n$ be the unique integer such
that $k_n\pi+(\alpha+\beta+n\varphi+\pi)/2\in[-\pi/2, \pi/2)$. Notice that
$|k_n|<2n$. Then
\begin{eqnarray*}
\left|\cos\frac{\alpha+\beta+n\varphi}{2}\right| & = &
\left|\sin\frac{\alpha+\beta+n\varphi+(2k_n+1)\pi}{2}\right| \\
& \geq &
\frac{\left|\alpha+\beta+n\varphi+(2k_n+1)\pi\right|}{2\pi} 
\end{eqnarray*}
by the inequality $|\sin(x)|\geq|x|/\pi$ for $x\in[-\pi/2,\pi/2]$. Note
that $\alpha$, $\beta$, $\varphi$ and $\pi$ are logarithms of algebraic numbers
with degree polynomial in $||I||$ and height exponential in $||I||$.
Then by from Baker's Theorem, there exist effective positive constants 
$p_1,p_2\in\pin{||I||}$ such that
\[ 
n>N_1 \Rightarrow \left|\cos\frac{\alpha+\beta+n\varphi}{2}\right| 
> (p_1n)^{-p_2} 
\]
By the same argument with $\beta$ replaced by $-\beta$, there exist effective
positive constants $N_2, p_3, p_4\in \pin{||I||}$ such that
\[ 
n>N_2 \Rightarrow \left|\cos\frac{\alpha-\beta+n\varphi}{2}\right| 
> (p_2n)^{-p_4} 
\]
However, since $\chi^n$ shrinks exponentially with $n$ and $|\chi^{-1}|
\in\ein{||I||}$, it follows that there exists an effective constant
$N_3\in\ein{||I||}$ such that for all $n>N_3$,
\[ (p_1n)^{-p_2}(p_3n)^{-p_4} > |\chi^n| \]
Then for all $n>\max\{N_1, N_2, N_3\}$, we have
\[ \left|C+\cos(\alpha+n\varphi)\right| > p_1p_3n^{-(p_2+p_4)} > |\chi^n| \]
as desired. 

The remaining case $|C|>1$ is easy. If $C>1$, we have
\[ 
C + \cos(\alpha + n\varphi) > 1 + \cos(\alpha+n\varphi) = 
\cos(0) + \cos(\alpha+n\varphi) 
\]
and the lemma follows by the above argument with $\beta = 0$. Similarly
when $C < - 1$.
\end{proof}

\begin{lemma}\label{lem: arcInequalities}
Suppose $a_1,\dots,a_m$ and $\lambda$ are all algebraic numbers on the 
unit circle and $\lambda$ is not a root of unity. Suppose also
$c_1,\dots,c_m\in\ra$. Let $\alpha_j=\arg(a_j)$ and
$\varphi=\arg(\lambda)$. Then it is decidable 
whether there exists an integer $n$ such that
\[ \cos(\alpha_j+n\varphi) \geq c_j \mbox{ for all $j=1,\dots,m$} \]
Moreover, the decision procedure's running time is $\pin{||I||}$ where
\[ ||I||=\sum_{j=1}^m\left(||a_j||+||c_j||\right)+||\lambda|| \]
\end{lemma}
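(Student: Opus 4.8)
The plan is to turn the question into a dichotomy governed by equidistribution, isolating a single genuinely number-theoretic sub-problem that is then handled with heights. Writing $e^{i\theta}$ for the point of the unit circle at argument $\theta$ and using $\cos(\alpha_j+\theta)=\re(a_j e^{i\theta})$, the set of feasible arguments is
\[ J=\{\theta\in\mathbb{R}/2\pi\mathbb{Z}:\re(a_j e^{i\theta})\geq c_j\text{ for all }j\}, \]
so, after discarding the vacuous constraints (those with $c_j\leq-1$) and returning NO if some $c_j>1$, $J$ is the intersection of the unit circle with finitely many halfplanes with algebraic data: a closed subset of the circle that is a finite union of closed arcs together with finitely many isolated points. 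The instance is positive iff the orbit $O=\{\lambda^n:n\in\mathbb{Z}\}$ meets $J$.

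Since $\lambda$ lies on the unit circle and is not a root of unity, $\varphi/2\pi$ is irrational, so $O$ is dense in the circle (Kronecker--Weyl). Hence the instance is positive whenever $J$ has nonempty relative interior, and negative when $J=\emptyset$; in the remaining case $J$ is a nonempty finite set of points and positivity is equivalent to $O$ meeting one of them. Deciding which of the three cases holds, and listing the isolated points of $J$ when they arise, is an elementary arrangement computation with the $\leq m$ lines $\{\re(a_j x)=c_j\}$ (identifying $x$ with $e^{i\theta}$): I would compute their algebraic intersection points with the circle, sort these $\leq 2m$ points cyclically --- comparing arguments of algebraic points on the unit circle reduces to sign tests on algebraic numbers, hence is polynomial by Lemma~\ref{lem:operations on algebraic numbers} --- and then, for each resulting cell and each vertex, test membership in every $A_j$ via the sign test $\re(e\,a_j)\geq c_j$. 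This runs in time $\pin{\|I\|}$.

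What remains is to decide, for a given algebraic point $z$ on the unit circle, whether $\lambda^n=z$ for some $n\in\mathbb{Z}$; the same subroutine also handles any constraint with $c_j=1$, which is literally $\lambda^n=\overline{a_j}$, by solving that equation first and then checking the (at most two) candidate exponents against all remaining constraints by repeated squaring. The key point is that, $\lambda$ not being a root of unity, its absolute logarithmic Weil height $h(\lambda)$ is strictly positive by Kronecker's theorem, while $h(\lambda^n)=|n|\,h(\lambda)$; thus $\lambda^n=z$ forces $|n|=h(z)/h(\lambda)$, a single value. Using computability of Weil heights together with an effective lower bound for $h(\lambda)$ (Dobrowolski's theorem), $h(z)/h(\lambda)$ is a quantity of known polynomial magnitude that can be approximated accurately enough in polynomial time to pin down the at most one non-negative integer it could be; one then checks $\lambda^{\pm n}=z$ exactly by repeated squaring in the fixed number field $\mathbb{Q}(\lambda)$.

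The main obstacle is precisely this last case. When $J$ degenerates to finitely many points the problem superficially resembles the Skolem problem --- does a fixed orbit ever hit a fixed target point? --- so it is not obvious a priori that it is decidable at all, let alone in polynomial time. What rescues it is that $\lambda$ and the target point both lie on the unit circle, so the multiplicative structure forces the exponent: $\lambda^n=z$ has at most one solution with $n\geq0$ and at most one with $n\leq0$, and heights make these effectively computable. The rest --- the reduction to $J$, the equidistribution trichotomy, and the arrangement computation --- is routine, and the polynomial bound follows since every step is a bounded-degree algebraic-number manipulation or a bounded-size arrangement.
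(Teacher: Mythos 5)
Your proposal is correct, and its skeleton is the same as the paper's: discard constraints with $c_j\leq-1$, reject if some $c_j>1$, view each remaining constraint as confining $\lambda^n$ to a closed arc with algebraic endpoints, and then exploit density of $\{\lambda^n\}$ on the circle (valid since $\lambda$ is not a root of unity) to get the trichotomy: nonempty interior $\Rightarrow$ positive, empty intersection $\Rightarrow$ negative, finitely many isolated points $\Rightarrow$ reduce to deciding whether $\lambda^n$ ever equals a given algebraic point $z$ on the circle. Where you genuinely diverge is in that last, number-theoretic step. The paper invokes its Baker-based Theorem~\ref{thm: skolem} to get a polynomial bound $N_1$ on any $n$ with $\lambda^n-z=0$ and then (implicitly) enumerates candidates up to $N_1$. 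You instead use Weil heights: $h(\lambda)>0$ by Kronecker since $\lambda$ is not a root of unity, $h(\lambda^n)=|n|\,h(\lambda)$, so $|n|=h(z)/h(\lambda)$ is forced, and an effective lower bound on $h(\lambda)$ (Dobrowolski for algebraic integers, or the trivial $\tfrac{\log 2}{d}$ bound otherwise) plus polynomial-precision approximation of the two heights pins down at most one candidate exponent, which is then verified by repeated squaring. This buys you something: no transcendence theory is needed for this lemma, the candidate exponent is unique rather than merely polynomially bounded, and the argument does not actually use that $\lambda$ or $z$ lies on the unit circle --- only that $\lambda$ is not a root of unity --- so your closing remark attributing the rescue to the unit-circle structure is slightly misplaced, though harmless. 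The paper's route, by contrast, reuses machinery (Theorem~\ref{thm: skolem}) that it needs elsewhere anyway, so it costs nothing extra in context. Both yield the claimed $\pin{||I||}$ running time, and your arrangement/sorting treatment of the feasible set (which correctly allows more than two isolated points, where the paper asserts at most two) is if anything a touch more careful.
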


\begin{proof}
Inequalities where $c_j \leq -1$ may be discarded, as they are
satisfied for all $n$, whereas the presence of inequalities with $c_j>1$
immediately makes the problem instance negative.
Now assuming $c_j\in(-1,1]$, each inequality 
\begin{equation}\label{eq:arcSituation}
	\cos(\alpha_j+n\varphi_1) \geq c_j
\end{equation}
defines an arc on the unit circle
which $\lambda^n$ must lie within. Specifically, (\ref{eq:arcSituation})
holds if and only if $\lambda^n$ lies on the arc $\mathcal{A}_j$ 
defined by
\[\mathcal{A}_j = \{ z\in\mathbb{C}:|z|=1 \mbox{ and } h(w_1,w_2,z)\leq0 \} \]
where
$ w_1=\overline{a_j}\left(c_j-i\sqrt{1-c_j^2}\right)$
and
$ w_2=\overline{a_j}\left(c_j+i\sqrt{1-c_j^2}\right)$
are the endpoints of the arc, and
\[ h(x,y,z) =
\left|\begin{array}{ccc}
	\re(x) & \im(x) & 1 \\
	\re(y) & \im(y) & 1 \\
	\re(z) & \im(z) & 1 \\
\end{array}\right|
\]
is the orientation function.\footnote{Recall that $h(x, y, z)$ is
positive if the
points $x,y,z$ (in that order) are arranged counter-clockwise 
on the complex plane, negative if they are arranged clockwise,
and zero if they are collinear.}

The endpoints of $\mathcal{A}_j$ are clearly algebraic and may be
computed explicitly in polynomial time in $||I||$. Then the intersection
$\mathcal{A}=\bigcap_j{\mathcal{A}_j}$ is also computable in polynomial time.
Since $\lambda$ is not a root of unity, the set
$\{\lambda^n:n\in\mathbb{N}\}$ is dense on the unit circle. 
If $\mathcal{A}$ is empty, then the problem instance is negative. If
$\mathcal{A}$ is a nontrivial arc on the unit circle, then by density,
the problem instance is positive. Finally, $\mathcal{A}$ could be a set of
at most two points $z_1,z_2$ on the unit circle. Then the problem instance is
positive if and only if there exists an exponent $n\in\mathbb{N}$ such that 
$\lambda^n=z_i$ for some $i$. A polynomial bound on $n$ then follows from 
Theorem \ref{thm: skolem}.
\end{proof}

\end{appendix}

\section{Extended Orbit Problem}
\label{app: EO3}

We now give the details of our decision procedure for the Extended 
Orbit Problem, as promised in Section~\ref{sec: EO}.
\subsection{A Master System}\label{subsec: EOprelims}

In \cite{orbitArxiv}, we show how to reduce the Orbit Problem
(determining whether there exists $n\in\mathbb{N}$ such that $A^nx$ lies in a
vector space $V$) to the \emph{matrix power problem}: determining whether there
exists $n\in\mathbb{N}$ such that $A^n$ lies in the span of $p_1(A), \dots,
p_d(A)$ for given polynomials $p_1,\dots,p_d\in\mathbb{Q}[x]$. The reduction
takes polynomial time, relies on standard linear algebra and is straightforward
to extend, \emph{mutatis mutandis}, in order to include linear inequalities on the coefficients which
witness membership of $A^nx$ in the target vector space.  Thus, we shall assume
that a problem instance of the Extended Orbit Problem is specified by matrices
$A\in\mathbb{Q}^{m\times m}$, $B\in(\ra)^{k\times d}$ and polynomials
$p_1,\dots, p_d\in\mathbb{Q}[x]$ such that $p_1(A),\dots,p_d(A)$ are linearly
independent, and we have to determine whether there exist
$n\in\mathbb{N}$ and $u=(u_1,\dots,u_d)\in\mathbb{Q}^d$ such that
\begin{equation}\label{eq: EOstatement}
A^n = u_1p_1(A) + \dots + u_dp_d(A)\mbox{ and } Bu \geq 0
\end{equation}

We now proceed to show a \emph{Master System} of equations, which is
equivalent to (\ref{eq: EOstatement}).  Let $f_A(x)$ be the minimal
polynomial of $A$ over $\mathbb{Q}$ and let $\alpha_1,\dots,\alpha_t$ be its
roots, that is, the eigenvalues of $A$.  These can be calculated in polynomial
time. Throughout this paper, for an eigenvalue $\alpha_i$ we will denote by
$\mathit{mul}(\alpha_i)$ the multiplicity of $\alpha_i$ as a root of the
minimal polynomial of the matrix.

Fix an exponent $n$ and coefficients $u_1,\dots,u_d$ and define the polynomials
$P(x)=\sum_{i=1}^du_ip_i(x)$ and $Q(x)=x^n$. It is easy to see that (\ref{eq: 
EOstatement}) is satisfied if and only if
\begin{equation}\label{eq: master}
Bu\geq0 \wedge 
P^{(j)}( \alpha_i) =Q^{(j)}( \alpha_i)
\end{equation}
for all $i\in\{1, \dots, t\}$, $j\in\{0,\dots,\mathit{mul}(\alpha_i)-1 \}$.
Indeed, $P-Q$ is zero at $A$ if and only if $f_A(x)$ divides $P-Q$, that is,
each $\alpha_i$ is a root of $P-Q$ with multiplicity at least
$\mathit{mul}(\alpha_i)$. This is equivalent to saying that each $\alpha_i$ is
a root of $P-Q$ and its first $\mathit{mul}(\alpha_i)-1$ derivatives.

Thus, in order to decide whether the problem instance is positive, it is
sufficient to solve the system of equations and inequalities (\ref{eq: master})
in the unknowns $n$ and $u_1,\dots,u_d$.  Each eigenvalue $\alpha_i$
contributes $\mathit{mul}(\alpha_i)$ equations which specify that $P(x)-Q(x)$ and
its first $\mathit{mul}(\alpha_i)-1$ derivatives all vanish at $\alpha_i$.

For example, if $f_A(x)$ has roots $\alpha_1$, $\alpha_2$, $\alpha_3$ with
multiplicities $\mathit{mul}(\alpha_i)=i$ and the target space is $span\left\{
p_1(A),p_2(A)\right\}$ then the system contains six equations, in addition to
the inequalities $Bu\geq0$:
\begin{eqnarray*}
\alpha_1^n & = & u_1p_1(\alpha_1)+u_2p_2(\alpha_1)\\
\alpha_2^n & = & u_1p_1(\alpha_2)+u_2p_2(\alpha_2)\\
n\alpha_2^{n-1} & = & u_1p_1'(\alpha_2)+u_2p_2'(\alpha_2)\\
\alpha_3^n & = & u_1p_1(\alpha_3)+u_2p_2(\alpha_3)\\
n\alpha_3^{n-1} & = & u_1p_1'(\alpha_3)+u_2p_2'(\alpha_3)\\
n(n-1)\alpha_3^{n-2} & = & u_1p_1''(\alpha_3)+u_2p_2''(\alpha_3)
\end{eqnarray*}
Notice also that we may assume without loss of generality that $0$ is not
an eigenvalue. Otherwise, its equations in the Master System
$0=u_1p_1^{(j)}(0)+\dots+u_dp_d^{(j)}(0)$ either yield a linear dependence
on $u_1,\dots,u_d$, allowing us to eliminate some $u_i$ and proceed inductively
by solving a lower-dimensional Master System, or are trivially satisfied by 
all $u_1,\dots,u_d$ and may be dismissed.

\subsection{Equivalence classes of $\sim$}

Next, we focus on the equivalence relation $\sim$ on the eigenvalues of the
input matrix defined by
\[\alpha\sim\beta\iff\alpha/\beta\mbox{ is a root of unity}\] 
The image of an equivalence class of $\sim$ under complex conjugation is also
an equivalence class of $\sim$. If a class is its own image under complex
conjugation, then it is called \emph{self-conjugate}. Classes which are not
self-conjugate are grouped into \emph{pairs of conjugate classes} which are
each other's image under complex conjugation.

If a class $\mathcal{C}$ is self-conjugate, then we can write it as
\[\mathcal{C}=\{\alpha\omega_1,\alpha\omega_2,\dots,\alpha\omega_s\}\] where
$\alpha$ is real algebraic and $\omega_1,\dots,\omega_s$ are roots of unity. This
representation is easily computable in polynomial time.  Similarly, if two
classes $\mathcal{C}_1,\mathcal{C}_2$ are each other's image under complex
conjugation, they can be written as
\[
\begin{array}{c}
\mathcal{C}_1=\left\{\alpha\omega_1,\alpha\omega_2,\dots
,\alpha\omega_s\right\}\\
\\
\mathcal{C}_2=\left\{\overline{\alpha\omega_1},\overline{\alpha\omega_2},\dots,
\overline{\alpha\omega_s}\right\}
\end{array}
\]
where $\alpha$ is algebraic and $\arg(\alpha)$ is not a rational
multiple of $2\pi$.  For an equivalence class $\mathcal{C}$ of $\sim$,
write $\mathit{Eq}(\mathcal{C}, j)$ for the set of $j$-th derivative
equations contributed to the Master System by eigenvalues in
$\mathcal{C}$. Define also the \emph{multiplicity} of $\mathcal{C}$ to
be the maximum multiplicity of an eigenvalue in $\mathcal{C}$.

In our work on the Orbit Problem \cite{orbitArxiv}, we analysed the equivalence
classes of $\sim$ in order to derive a bound on the exponent $n$.  We were able
to show that if $A$ has `sufficiently many' eigenvalues unrelated by $\sim$
then just the condition $A^n\in\spa\{p_1(A),\dots,p_d(A) \}$ on its own is
strong enough to bound the exponent, regardless of the linear inequalities
$Bu\geq0$ which the Extended Problem imposes on the coefficients
$u_1,\dots,u_d$. The following theorem will allow us to focus only on the cases
in which $\sim$ has `few' equivalence classes.
\begin{theorem}\label{thm: orbit}
Suppose we are given a problem instance $(A, B, p_1,\dots, p_d)$ with $d\leq3$ and
let $\sim$ be the relation on the eigenvalues of $A$ defined as above. Write
$\Vert I\Vert = \Vert A\Vert+\Vert p_1\Vert
+\dots+\Vert p_d\Vert$. Let $R$ be the sum of the multiplicities of the
equivalence classes of $\sim$. Then if $R\geq d+1$, then there exists an
effectively computable bound $N\in\ein{\Vert I\Vert}$ such that if
$A^n\in\spa(\{p_1(A),\dots,p_d(A)\})$, then $n\leq N$. Moreover, if
$d=1$, then $N\in\pin{\Vert I\Vert}$.
\end{theorem}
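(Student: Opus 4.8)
The plan is to follow the analysis of \cite{orbitArxiv}, of which this theorem is essentially a restatement: one argues that the equality part of the Master System of Section~\ref{subsec: EOprelims} (forgetting $Bu\ge0$ entirely) already over-determines $u_1,\dots,u_d$ enough to bound $n$. First I would eliminate the $u_i$'s. Recall that an eigenvalue $\alpha=\beta\gamma$ lying in the class with representative $\beta$, contributing its derivative-$j$ equation for $j=0,\dots,\mathit{mul}(\alpha)-1$, gives $\sum_k u_k p_k^{(j)}(\alpha)=n^{\underline j}\alpha^{\,n-j}$, where $n^{\underline j}=n(n-1)\cdots(n-j+1)$; expanding $n^{\underline j}$ as an integer-coefficient polynomial in $n$ shows this equation is linear in the quantities $v_{i,t}:=n^{t}\beta_i^{\,n}$ and linear in $u_1,\dots,u_d$. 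For each class $\mathcal{C}_i$, pick a maximal-multiplicity eigenvalue $\beta_i$ and take its $m_i$ equations: the coefficient matrix against $(v_{i,0},\dots,v_{i,m_i-1})$ is triangular with diagonal entries $\beta_i^{-j}\ne0$, hence invertible, so every $v_{i,t}$ — in particular $\beta_i^{\,n}=L_{i,0}(u)$ and $n\beta_i^{\,n}=L_{i,1}(u)$ when $m_i\ge2$ — is an $(\ra)$-linear functional of $u$. After fixing $n$ modulo the (singly-exponentially bounded) lcm of the orders of the ratios $\alpha/\beta$, every remaining eigenvalue of the minimal polynomial then contributes a purely linear constraint on $u$, since its $v_{i,t}$'s are already expressed through $u$.

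Next I would use $R\ge d+1$ to extract a vanishing exponential-polynomial. Collecting the $R=\sum_i m_i$ identities $v_{i,t}=L_{i,t}(u)$ (and whatever extra linear constraints on $u$ the non-representative eigenvalues give), we have more than $d$ linear relations on the $d$-dimensional $u$-space, so a suitable $(\ra)$-linear combination eliminates $u$ and yields
\[
\sum_{i} g_i(n)\,\beta_i^{\,n}=0,
\]
where the $\beta_i$ range over a subset of the representatives (pairwise ratios not roots of unity), each $g_i\in(\ra)[x]$ has $\deg g_i<m_i$, the left-hand side is not identically zero, and $\sum_i(\deg g_i+1)\le R$. Exploiting conjugate symmetry — a conjugate pair of classes contributes a single complex relation, because $u$ is rational — together with splitting $n$ modulo the relevant root-of-unity orders keeps the surviving relation short enough to be handled effectively when $d\le3$.

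Then I would dispatch the bounded list of resulting shapes. Dividing through by $\beta_i^{\,n}$ for a $\beta_i$ of largest modulus reduces each to one of: (i) a polynomial identity $q(n)=0$ with $q\in(\ra)[x]\setminus\{0\}$ of degree $<R$, giving a polynomially bounded set of candidate exponents; (ii) a relation $A\alpha^n+B\beta^n=0$, $A\alpha^n+Bn\beta^{n-1}+C\beta^n=0$, or $A\alpha^n+B\beta^n+C\gamma^n=0$, covered by Theorem~\ref{thm: skolem}, giving $N\in\pin{\Vert I\Vert}$ in the two-term case and $N\in\ein{\Vert I\Vert}$ otherwise; or (iii) a relation whose surviving $\beta_i$ share the dominant modulus, where, after factoring out that modulus, the non-dominant part shrinks exponentially while the dominant part is a short sum of unit-modulus exponentials with algebraic coefficients which, not being identically zero, stays above $n^{-O(1)}$ by Baker's Theorem in the style of Lemma~\ref{lem:cosExpShrink} — forcing $n$ below an effective $\ein{\Vert I\Vert}$ threshold. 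When $d=1$, only shape (i) with $q$ of degree at most $1$ and the two-term case of (ii) can occur, which is exactly why the bound improves to $\pin{\Vert I\Vert}$.

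The main obstacle is the case analysis itself: one must enumerate the multiplicity profiles $(m_1,\dots,m_\ell)$ with $\sum_i m_i=R\ge d+1$ and $d\le3$, and for each verify both that the eliminated relation $\sum_i g_i(n)\beta_i^{\,n}=0$ is genuinely non-trivial — otherwise the equality constraints do not pin $n$ down and one must feed in further eigenvalue equations — and that, after using conjugacy and modular splitting, it falls into one of the shapes (i)--(iii), so that only the two- and three-term Skolem/Baker machinery is needed. Keeping track that all intermediate algebraic data (the $\beta_i$, the functionals $L_{i,t}$, the coefficients of the $g_i$) remains of polynomial degree and singly-exponential height, so the resulting thresholds lie in $\ein{\Vert I\Vert}$ (respectively $\pin{\Vert I\Vert}$ for $d=1$) and are computable in polynomial space, is where the bookkeeping concentrates; this is carried out in detail in \cite{orbitArxiv}.
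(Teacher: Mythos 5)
The paper never proves Theorem~\ref{thm: orbit} in this document: it is imported wholesale from \cite{orbitArxiv}, and your outline does follow the same general strategy as that source (pass to the Master System, use $R\ge d+1$ to eliminate $u_1,\dots,u_d$, extract a non-trivially vanishing exponential polynomial in $n$, and bound its zeros with Baker-type results such as Theorem~\ref{thm: skolem}). Your elimination step is sound: the triangularity argument expressing each $n^{t}\beta_i^{\,n}$ as a linear functional of $u$, and the existence of a non-trivial dependency among $R\ge d+1$ functionals on a $d$-dimensional space, are both correct, and for $d\le 2$ the resulting two- or three-term relations are indeed covered by Theorem~\ref{thm: skolem}.

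The genuine gap is in how you dispose of the relation when $d=3$, where elimination can leave four terms $\sum_i g_i(n)\beta_i^{\,n}=0$, a shape not covered by Theorem~\ref{thm: skolem}. Your escape route (iii) asserts that Baker's Theorem gives an $n^{-O(1)}$ lower bound on any short, not-identically-zero sum of unit-modulus exponentials with algebraic coefficients; this is false in general. Baker bounds a \emph{single} linear form in logarithms, which handles one conjugate pair of dominant frequencies, possibly against a constant rewritten as the cosine of an algebraic angle as in Lemma~\ref{lem:cosExpShrink}, but with three or more genuinely distinct dominant frequencies no such effective pointwise bound is known --- that obstruction is exactly what keeps Skolem's Problem open at order $5$ and underlies the Diophantine-approximation hardness results of this very paper. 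Consequently the real content of the theorem is the case analysis you defer: one must check that, after splitting $n$ modulo $L$ and using conjugate symmetry (self-conjugate classes have real representatives, hence pairwise distinct moduli, so at most one real term can be dominant, while a conjugate pair of classes contributes a single conjugate pair of dominant frequencies), every multiplicity profile with $d\le3$ and $R\ge d+1$ collapses either to domination by a unique largest term, to the two- or three-term forms of Theorem~\ref{thm: skolem}, or to a $C+\cos(\alpha+n\varphi)$-type bound; similarly, for $d=1$ the dependency must be chosen among the $j=0$ equations (or inside one class) to obtain the claimed polynomial bound. As written, your argument asserts these reductions rather than establishing them, and the one general principle offered in their place does not hold.
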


\subsection{Case analysis on the residue of $n$}\label{subsec: caseAnalysis}

Let $L$ be the least common multiple of all the orders of the ratios of
eigenvalues of $A$ which are roots of unity. Notice that $L\in\ein{\Vert
I\Vert}$. In the two- and three-dimensional Extended Orbit Problem, we will
perform a case analysis on the residue of $n$ modulo $L$. We will show that for
each fixed residue of $n$, we can either solve the problem instance directly or
derive an effective bound $N$ such that any witness $n$ to the problem instance
must be bounded above by $N$.  Since $L$ is at most exponentially large, it may
be expressed using at most polynomially many bits. Thus, when the relation
$\sim$ has too few equivalence classes for Theorem \ref{thm: orbit} to apply,
our polynomial-space algorithm can guess the residue of $n$ modulo $L$. This
greatly simplifies the Master System and either allows us to solve it outright
or to reduce it to an instance of the Simultaneous Positivity Problem.  

We now
consider what happens to the equations in $\mathit{Eq}(\mathcal{C}, j)$ for a
fixed residue of $n$ modulo $L$.  Let $\mathcal{C}=\{\alpha\omega_1, \dots,
\alpha\omega_s\}$ and for simplicity consider $\mathit{Eq}(\mathcal{C}, 0)$:
\begin{align*}
	(\alpha\omega_1)^n  = & \sum_{i=1}^d u_ip_i(\alpha\omega_1) \\
	\dots & \\
	(\alpha\omega_s)^n  = & \sum_{i=1}^d u_ip_i(\alpha\omega_s) \\
\end{align*}
This set of equations is equivalent to 
\begin{equation}\label{ven: collapsing}
\alpha^n 
= \sum_{i=1}^d u_i\frac{p_i(\alpha\omega_1)}{\omega_1^n}
= \dots
= \sum_{i=1}^d u_i\frac{p_i(\alpha\omega_s)}{\omega_s^n}
\end{equation}
For a fixed residue of $n$ modulo $L$, we see 
$\omega_1^n,\dots,\omega_s^n$ are also fixed, so
each $p_i(\alpha\omega_j)/\omega_j^n$ is easily computable.
Observe that (\ref{ven: collapsing}) is equivalent to the conjunction
of an equation with a linear system:
\begin{equation}\label{ven: collapsedEq}
\alpha^n = 
\sum_{i=1}^d u_i\frac{p_i(\alpha\omega_s)}{\omega_s^n}
\mbox{ and } B'u = 0
\end{equation}
where $B'$ is an $(s-1)\times k$ matrix over $\mathbb{A}$ defined by
\[ 
B'_{j,i} = \frac{p_i(\alpha\omega_j)}{\omega_j^n} - \frac{p_i(\alpha\omega_{j+1})}{\omega_{j+1}^n}
\]
Writing $\varphi_i$ for $p_i(\alpha\omega_s)/\omega_s^n$ and considering 
separately the real and imaginary parts of $B'u=0$, we see that 
(\ref{ven: collapsedEq}) is equivalent to
\[
\alpha^n = \varphi_1u_1 + \dots + \varphi_du_d
\mbox{ and }
B''u = 0
\]
where 
\[
B'' =\left[\begin{array}{c}
\re(B') \\
\im(B') \\
\end{array}\right]
\] 
is a $2(s-1)\times k$ matrix over $\ra$. However, $u$ lies in the nullspace of
$B''$ if and only if $u$ is orthogonal to the column space of $B''$. Thus, if
$B''$ has non-zero column rank, then $u_1,\dots,u_d$ must have a non-trivial
linear dependence $\psi_1u_1+\dots+\psi_du_d=0$ for effectively computable 
$\psi_1, \dots, \psi_d\in\ra$. Therefore, we can eliminate some coefficient 
$u_i$, replacing all of its occurrences in the Master System (\ref{eq: master}),
and proceed inductively to solve a Master System with dimension $d-1$. 
Therefore, we can assume that the column rank of $B''$ is zero, so the 
constraint $B''u=0$ is satisfied by all vectors $u$.

Thus, for this particular residue of $n$ modulo $L$, the equations
$\mathit{Eq}(\mathcal{C}, 0)$ are equivalent to the single equation 
$\alpha^n = \varphi_1u_1 + \dots + \varphi_du_d$.
Further, if the equivalence class $\mathcal{C}$ is self-conjugate, then
$\alpha\in\ra$, so we may replace each $\varphi_i$ with its real part 
and assume $\varphi_i\in\ra$.
Similarly, for $j>0$ and a fixed residue of $n$ modulo $L$, the equations 
$\mathit{Eq}(\mathcal{C}, j)$ reduce to the equivalent single equation
\[
n(n-1)\dots (n-j+1)\alpha^{n-j} = 
\sum_{i=1}^d u_i\frac{p_i^{(j)}(\alpha\omega_s)}{\omega_s^{n-j}}
\]

\subsection{One-dimensional case of Extended Orbit}\label{app:EO1}

In the one-dimensional Extended Orbit Problem, we have to decide whether there
exists some $n\in\mathbb{N}$ such that $A^n$ is a non-negative multiple of
$p_1(A)$. We show this problem is in $\mathbf{PTIME}$.

Begin by observing that if $0$ is an eigenvalue of $A$, then its equations in
the Master System are either satisfied for all values of $n$, or for no values
of $n$. In the former case, they can be discarded, whereas in the latter case,
the problem instance is immediately negative. We will now perform a case 
analysis on the number of equivalence classes of $\sim$.

\textbf{Two or more equivalence classes.}
When the relation $\sim$ has at least two equivalence classes, by Theorem
\ref{thm: orbit}, there exists a computable bound $N\in\pin{||I||}$ on the
exponent $n$. It suffices to try all $n\leq N$, which can be done in polynomial
time.

\textbf{One equivalence class, all roots simple.}
The second case is when $\sim$ has only one equivalence class and the 
eigenvalues $\alpha_1,\dots,\alpha_s$ of $A$ are all simple in the 
minimal polynomial of $A$. 
The Master System is then equivalent to
\begin{align*}
u_1 & = \frac{\alpha_1^n}{p_1(\alpha_1)} \geq 0 \\
\left(\frac{\alpha_i}{\alpha_j}\right)^n & = \frac{p_1(\alpha_i)}{p_1(\alpha_j)}
\mbox{ for all $i, j$} \\
\end{align*}
Since all ratios $\alpha_i/\alpha_j$ are roots of unity,
each equation $(\alpha_i/\alpha_j)^n=p_1(\alpha_i)/p_1(\alpha_j)$ is either
unsatisfiable, making the problem instance immediately negative, or
equivalent to some congruence in $n$. If all equations are satisfiable,
then $A^n
\in\spa\{p_1(A)\}$ holds if and only if 
$n\equiv t_1 \mbox{ mod } t_2$, where $t_1, t_2$ are effectively computable
natural numbers. Moreover, since $\sim$ has only one equivalence class,
it must necessarily be self-conjugate, so $\alpha_1 = |\alpha_1|\omega$
for some root of unity $\omega$ which can be calculated easily. Since
$u_1 = \re(\alpha_1^n)/\re(p_1(\alpha_1))$, we can compute what
the sign of $\re(\alpha_1^n)$ should be to ensure $u_1\geq0$, that is, whether
$\omega^n$ must be $1$ or $-1$ for $n$ to be a witness. This leads to
another congruence in $n$ which we put in conjunction with 
$n\equiv t_1 \mbox{ mod } t_2$. The problem instance is positive iff
the two congruences have a common solution.

\textbf{One equivalence class, some repeated roots.}
As in the previous case, we take the ratios of all pairs of equations
$\alpha_i^n = u_1p_1(\alpha_i)$ and $\alpha_j^n = u_1p_1(\alpha_j)$, giving 
\begin{equation}\label{ven:allpairs}
\left(\frac{\alpha_i}{\alpha_j}\right)^n  = \frac{p_1(\alpha_i)}{p_1(\alpha_j)}
\mbox{ for all $i, j$}
\end{equation}
Additionally,
for each repeated root $\alpha_i$, we take the ratios of its first and
second equation, of its second and third equation, and so on, obtaining
\begin{equation}\label{ven:allrep}
\frac{\alpha_i}{n-j}=\frac{p_1^{(j)}(\alpha_i)}{p_1^{(j+1)}(\alpha_i)}
\mbox{ for all $j\in\{0,\dots,\mathit{mul}(\alpha_i)-1\}$}
\end{equation}
If the equations (\ref{ven:allrep}) point to different values of $n$, then
the problem instance is negative. If they point to the same value of $n$, 
but $n$ does not satisfy the congruence resulting from (\ref{ven:allpairs}),
then the problem instance is negative. Otherwise, the problem instance is 
positive if and only if $u_1=\alpha_1^n/p_1(\alpha_1)$ is positive. The
relation $\sim$ has only one equivalence class, so it must be self-conjugate,
so $\alpha_1=|\alpha_1|\omega$ for some computable root of unity $\omega$.
It is easy to check the sign of $\omega^n$, so the decision method is
complete.

\subsection{Two-dimensional case of Extended Orbit}

Now suppose we have a problem instance $(A,B,p_1,p_2)$ and we have to 
determine whether there exist an exponent
$n\in\mathbb{N}$ and coefficients $u=(u_1,u_2)\in\mathbb{Q}^2$ such that
\[
A^n = u_1p_1(A) + u_2p_2(A) 
\mbox{ and } 
Bu\geq0
\]
We will perform a case analysis on the equivalence classes of $\sim$. By
Theorem \ref{thm: orbit}, if the sum of the multiplicities of the equivalence
classes of $\sim$ is at least $3$, then there exists an effective bound
$N\in\ein{\Vert I\Vert}$ on $n$ such that for $n>N$, mere membership of $A^n$
in $\spa( \{p_1(A),p_2(A)\}$ is impossible, regardless of the constraints on
the coefficients $u_1,u_2$.  Then an exponent $n\leq N$ can be chosen 
nondeterministically and verified using a $\mathbf{PosSLP}$ oracle. We 
consider the remaining cases.

\textbf{One simple equivalence class}. Suppose $\sim$ has only one equivalence 
class and its eigenvalues are all simple in the minimal polynomial of the matrix.
We proceed by case analysis on the residue of $n$, as in Section \ref{subsec: 
caseAnalysis}. For a fixed residue, the Master System reduces to
\begin{equation}\label{eq:singleClassEq}
\alpha^n = u_1\varphi_1 + u_2\varphi_2 \mbox{ and } Bu\geq0
\end{equation}
where $\varphi_1,\varphi_2\in\ra$. 
Fix the parity of $n$ and therefore assume $\alpha > 0$ by including its sign 
into $\varphi_1,\varphi_2$. Now observe that either all values of $n$ satisfy 
(\ref{eq:singleClassEq}), or no value of $n$ does.
Indeed, if $n$ is a witness with coefficients $(u_1,u_2)$,
then $n+1$ and $n-1$ are also witnesses, with coefficients 
$(u_1\alpha,u_2\alpha)$ and $(u_1/\alpha,u_2/\alpha)$, respectively.
Therefore, it suffices to try $n=0$. This leads to a conjunction of 
the equation $1=u_1\varphi_1+u_2\varphi_2$ with inequalities in 
$u_1,u_2$, which is easy to solve.

\textbf{Two simple equivalence classes.} Suppose that $\sim$ has two
equivalence classes and all eigenvalues are simple in the minimal polynomial
of the matrix. Proceed by case analysis on the residue of $n$ as before and
reduce the Master System to
\begin{equation}\label{eq: EO2,11}
\left[
\begin{array}{c}
	\alpha^n \\
	\beta^n \\
\end{array}
\right]
=
\left[
\begin{array}{cc}
	\varphi_1 & \varphi_2 \\
	\varphi_3 & \varphi_4 \\
\end{array}\right]
\left[
\begin{array}{c}
	u_1 \\
	u_2 \\
\end{array}
\right]\mbox{ and } Bu\geq0
\end{equation}
If the equivalence classes are both self-conjugate, then $\varphi_1,\dots,\varphi_4,
\alpha,\beta$ are all real algebraic, otherwise $\varphi_3 = \overline{\varphi_1}$, $\varphi_4= 
\overline{\varphi_2}$ and $\alpha=\overline{\beta}$. If the $2\times2$ matrix
in (\ref{eq: EO2,11}) is invertible, then premultiplying by its inverse yields 
\[
\left[
\begin{array}{c}
	u_1 \\
	u_2 \\
\end{array}
\right]
=
\left[
\begin{array}{cc}
	\psi_1 & \psi_2 \\
	\psi_3 & \psi_4 \\
\end{array}\right]
\left[
\begin{array}{c}
	\alpha^n \\
	\beta^n \\
\end{array}
\right]\mbox{ and } Bu\geq0
\]
where either $\psi_1,\dots,\psi_4$ are real, or $\psi_2=\overline{\psi_1}$ and
$\psi_4 = \overline{\psi_3}$. Now observe that $u_1, u_2$ satisfy a linear 
recurrence formula with characteristic equation $(x-\alpha)(x-\beta)=0$. Then
$Bu$
is a vector of linear recurrence sequences over $\ra$.
Each sequence $\ess_i(n)$ has order at most 2 and is given by
\[ \ess_i(n)=a_i\alpha^n + b_i\beta^n \]
so they all satisfy the same shared recurrence formula. Further, observe that
these sequences are non-degenerate, since $\alpha/\beta$ is not a root of
unity. Therefore, for this particular residue of $n$, the problem instance
reduces to Simultaneous Positivity for sequences of order at most 2.  
Finally, if the $2\times2$ matrix in 
(\ref{eq: EO2,11}) is singular, then there is a non-trivial linear combination of the rows
which equates to zero. Then the same nontrivial combination of $\alpha^n,\beta^n$
equals zero. A bound on $n$ follows from Theorem \ref{thm: skolem}.

\textbf{One repeated equivalence class.} The last remaining case is when there
is only one equivalence class of $\sim$ and it contains at least one eigenvalue
repeated in the minimal polynomial of $A$. This reduces to
Simultaneous Positivity in the same way as the previous case, but the
resulting recurrence sequences have characteristic equation 
$(x-\alpha)^2=0$ and are given by $\ess_i(n) = (a_i+b_in)\alpha^n$.

\subsection{Three-dimensional case of Extended Orbit}

Now we consider an instance of the Extended Orbit Problem with a 
three-dimensional target space. For given $(A,B,p_1,p_2,p_3)$,
we need to determine whether there exist $n\in\mathbb{N}$ and
$u=(u_1,u_2,u_3)\in\mathbb{Q}^3$ such that
\[
A^n = u_1p_1(A) + u_2p_2(A) + u_3p_3(A) \mbox{ and }
Bu\geq0
\]
The strategy is again to show an effective bound $N$ such that if there is a
witness $(n,u_1,u_2,u_3)$ to the problem instance, then $n<N$. By Theorem
\ref{thm: orbit}, we need only bound $n$ in the cases when the multiplicities
of the equivalence classes sum to at most $3$. 

\textbf{Three simple equivalence classes.} If there are exactly three classes,
each of multiplicity 1, one must necessarily be self-conjugate whereas the
other two can be either self-conjugate or each other's conjugates. Either way,
this case is analogous to the case of two simple equivalence classes
in the two-dimensional version.
After performing a case analysis on the residue of $n$, we obtain
\begin{equation}\label{eq: 3dEO,111}
	\left[\begin{array}{c}
		\alpha^n \\
		\beta^n \\
		\gamma^n
	\end{array}\right]
	=
	Tu\mbox{ and } Bu\geq0
\end{equation}
where $T$ is a $3\times3$ matrix over $\ra$.  If $T$ is invertible, then we
multiply both sides of (\ref{eq: 3dEO,111}) by $T^-1$ and see that $u_1,u_2,u_3$ are
linear recurrence sequences over $\ra$ with characteristic roots
$\alpha,\beta,\gamma$. Then the left-hand side of each linear inequality $Bu\geq0$
is also an LRS over $\ra$ and has order $3$.  Thus the problem
instance reduces to Simultaneous Positivity for order-3 sequences. On the other
hand, if $T$ is singular, then a linear combination of its rows is zero, so the
same linear combination of $\alpha^n,\beta^n,\gamma^n$ is also zero. Noting
that no two of $\alpha,\beta,\gamma$ are related by $\sim$, we obtain a bound 
on $n$ from Theorem \ref{thm: skolem}.

\textbf{Two classes, one simple and one repeated.} Next, suppose $\sim$ has two
equivalence classes, one of multiplicity $1$ and the other of multiplicity $2$.
This is analogous to the previous case.  For a fixed residue of $n$ modulo $L$,
the Master System is equivalent to 
\begin{equation}\label{eq: 3dEO,21}
	\left[\begin{array}{c}
		\alpha^n \\
		n\alpha^{n-1} \\
		\beta^n
	\end{array}\right]
	=
	Tu\mbox{ and } Bu\geq0
\end{equation}
where $T$ is a $3\times3$ matrix over $\ra$.  Now if $T$ is invertible, then we
multiply both sides of (\ref{eq: 3dEO,21}) by $T^{-1}$ and see that each of
$u_1,u_2,u_3$ is a linear recurrence sequence over $\ra$ with characteristic
equation $(x-\alpha)^2(x-\beta)=0$. Substituting into the homogeneous linear
inequalities $Bu\geq0$, we now have an instance of the Simultaneous
Positivity Problem for LRS of order 3 with a repeated characteristic root. If
$T$ is singular, then a linear combination of $\alpha^n$, $n\alpha^{n-1}$ and
$\beta^n$ must equal zero, so a bound on $n$ follows from Theorem \ref{thm:
 skolem}, because the ratio of $\alpha$ and $\beta$ is not a root of unity.

\textbf{One simple equivalence class.} Suppose now that $\sim$ has only one
equivalence class and it has multiplicity $1$. The situation is analogous to
the same case in the two-dimensional version.  We have to find
$n,u_1,u_2,u_3$ such that 
\begin{equation}\label{eq: 3dEO,1}
\alpha^n = u_1\varphi_1 + u_2\varphi_2 + u_3\varphi_3 \mbox{ and } Bu\geq0
\end{equation}
Since
everything is real, we observe that either all $n$ are witnesses to the problem
instance, or none are, so it suffices to consider $n=0$, reducing the problem
to a conjunction of the linear inequalities $Bu\geq0$ with the equation
$1 = u_1\varphi_1 + u_2\varphi_2 + u_3\varphi_3$.

\textbf{Two equivalence classes, both simple.} 
Let $\sim$ have two equivalence classes, both of multiplicity $1$.
For a fixed residue of $n$ modulo $L$, the Master System is equivalent
to
\[
\left[
\begin{array}{c}
	\alpha^n \\
	\beta^n \\
\end{array}
\right]
=
Tu \mbox{ and } Bu\geq0
\]
where $T$ is a $2\times3$ matrix. All the numbers involved are algebraic.
There are two possibilities: either $\alpha$, $\beta$ and $T$ are in $\ra$, or
$\alpha=\overline{\beta}$ and the second row of $T$ is the complex conjugate
of the first row.

The dimension of the column space of $T$ is $0$,
$1$ or $2$. If the dimension of the column space is $0$, then the Master System
is unsatisfiable, since $T$ maps everything to zero, whereas $\alpha^n$ and
$\beta^n$ cannot be zero. If the dimension of the column space of $T$ is
$1$, then it is spanned by a single vector $(t_1, t_2)$. If at least one of
$t_1, t_2$ is zero, then the System is unsatisfiable, because $\alpha,
\beta\neq0$. Otherwise, we can conclude that $(\alpha/\beta)^n = t_1/t_2$.
Since $\alpha/\beta$ is not a root of unity, a bound on $n$ which is polynomial
in $\Vert I\Vert$ follows by Theorem \ref{thm: skolem}.

Assume therefore that the dimension of the column space of $T$ is $2$. We consider the
real and the complex cases separately. First, suppose $T,\alpha,\beta$ are real.
Each of the inequalities $Bu\geq0$
specifies that $(u_1,u_2,u_3)$ lies in a halfspace $\mathcal{H}_i$ of $\mathbb{R}^3$.
The image of each $\mathcal{H}_i$ under $T$ can be the entire plane 
$\mathbb{R}^2$, a half-plane, a line, or a half-line. Each of these images is
easy to calculate in polynomial time. If for some $i$, the image 
$T\mathcal{H}_i$ is a line or a half-line, with defining vector 
$(t_1, t_2)$, then by the same reasoning as 
above, we see $(\alpha/\beta)^n = t_1/t_2$ and hence obtain a bound on $n$ from
Theorem \ref{thm: skolem}. Otherwise, we can assume that for all $i$,
$T\mathcal{H}_i$ is a halfplane $\{ (x, y) : A_ix + B_iy \geq 0 \}$ with
effectively computable $A_i,B_i\in\ra$. We have to determine whether there
exists $n\in\mathbb{N}$ such that $(\alpha^n,\beta^n)$ lies in the intersection
of these halfplanes. Noting that $A_i\alpha^n + B_i\beta^n$ as a function of
$n$ is a linear recurrence sequence over $\ra$ which has order $2$, we see that
this is now an instance of the Simultaneous Positivity Problem, so 
we are done by Theorem \ref{thm: simposDecidability}.

Suppose now that $\alpha$ and $\beta$ are complex conjugates, and the second
row of $T$ is the complex conjugate of the first. We may freely assume that
$|\alpha|=|\beta|=1$, since if the inequalities are satisfied by $(u_1,u_2,
u_3)$, then they are also satisfied by $(u_1/|\alpha|^n, u_2/|\alpha|^n,
u_3/|\alpha|^n)$. The image under $T$ of each halfspace $\mathcal{H}_i$ 
is a homogeneous cone in the complex plane. The same is true of the 
intersection $G=\cap_i T\mathcal{H}_i$ of these cones, which 
may in fact be computed explicitly. We need to determine whether there exists
$n\in\mathbb{N}$ such that $\alpha^n\in G$. Notice that $\{ \alpha^n : 
n\in\mathbb{N}\}$ is dense on the unit circle.
The intersection of the unit circle with $G$ could be a single point, or an
arc. 

Representing real and imaginary parts with variables over $\mathbb{R}$, we
construct a sentence $\tau$ in the first-order theory of the reals which states
that the intersection of $G$ with the unit circle is a single point.  We check the
validity of $\tau$, this can be done in polynomial time by Theorem \ref{thm: fo}.
If $\tau$ is false, then $G$ intersects the unit circle in an
arc, so by the density of $\alpha^n$ on the unit circle, the Master System is
satisfiable.  Otherwise, the intersection is a single point $z\in\mathbb{C}$.
Moreover, this point is effectively computable -- Renegar's algorithm hinges on
quantifier elimination, and will produce a quantifier free formula containing
exactly the minimal polynomials of $\mathit{Re}(z)$ and $\mathit{Im}(z)$. The
procedure is polynomial-time, so $\Vert z \Vert\in\pin{\Vert I \Vert}$. Now the
Master System is satisfiable if and only if there exists $n\in\mathbb{N}$ such
that $\alpha^n=z$. As $\alpha$ and $z$ both have descriptions polynomial in the
input size and $\alpha$ is not a root of unity, we see there exists a
polynomial bound on $n$ from Theorem \ref{thm: skolem}.

\textbf{One repeated equivalence class.} Finally, suppose
$\sim$ has a single equivalence class and its multiplicity is $2$. Then
for a fixed residue of $n$ modulo $L$, the Master System is equivalent to
\[
\left[
\begin{array}{c}
	\alpha^n \\
	n\alpha^{n-1} \\
\end{array}
\right]
=
Tu \mbox{ and } Bu\geq0
\]
where $\alpha$ and $T$ are both real algebraic. This is now handled 
analogously to the previous case for a real $T$ and reduces to Simultaneous
Positivity for LRS with characteristic equation $(x-\alpha)^2=0$.




\bibliography{paper}

\begin{thebibliography}{10}

\bibitem{allender}
Eric Allender, Peter B\"{u}rgisser, Johan Kjeldgaard-Pedersen, and Peter~Bro
  Miltersen.
\newblock On the complexity of numerical analysis.
\newblock In {\em Proc. 21st ann. IEEE Conf. on Computational Complexity
  (CCC)}, pages 331--339, 2006.

\bibitem{baker}
Alan Baker and Gisbert W{\"u}stholz.
\newblock Logarithmic forms and group varieties.
\newblock {\em Jour. Reine Angew. Math.}, 442:19--62, 1993.

\bibitem{BG14}
A.~M. Ben-Amram and S.~Genaim.
\newblock Ranking functions for linear-constraint loops.
\newblock {\em Journal of the ACM (to appear)}, 2014.

\bibitem{Bra06}
M.~Braverman.
\newblock Termination of integer linear programs.
\newblock In {\em Proceedings of the 18th International Conference on Computer
  Aided Verification}, CAV, LNCS 4144, pages 372--385. Springer, 2006.

\bibitem{orbitArxiv}
Ventsislav Chonev, Jo{\"e}l Ouaknine, and James Worrell.
\newblock On the complexity of the orbit problem.
\newblock {\em CoRR}, abs/1303.2981, 2013.

\bibitem{COW13}
Ventsislav Chonev, Jo{\"e}l Ouaknine, and James Worrell.
\newblock The orbit problem in higher dimensions.
\newblock In {\em STOC}, pages 941--950. ACM, 2013.

\bibitem{cohen}
H.~Cohen.
\newblock {\em A Course in Computational Algebraic Number Theory}.
\newblock Springer, 1993.

\bibitem{lagrangeBook}
Thomas~W. Cusick and Mary~E. Flahive.
\newblock {\em The Markoff and Lagrange Spectra}.
\newblock American Mathematical Society, 1989.

\bibitem{everest}
Graham Everest, Alf van~der Poorten, Thomas Ward, and Igor Shparlinski.
\newblock {\em Recurrence Sequences}.
\newblock American Mathematical Society, 2003.

\bibitem{convex1}
Branko Gr{\"u}nbaum, Victor Klee, Micha~A Perles, and Geoffrey~Colin Shephard.
\newblock {\em Convex polytopes}.
\newblock Springer, 1967.

\bibitem{tucs}
V.~Halava, T.~Harju, M.~Hirvensalo, and J.~Karhum{\"a}ki.
\newblock Skolem's problem -- on the border between decidability and
  undecidability.
\newblock {\em TUCS Technical Report}, (683), 2005.

\bibitem{Har69}
M.~Harrison.
\newblock {\em Lectures on sequential machines}.
\newblock Academic Press, Orlando, 1969.

\bibitem{orbit}
R.~Kannan and R.~Lipton.
\newblock Polynomial-time algorithm for the orbit problem.
\newblock {\em Journal of the ACM}, 33(4):808--821, 1986.

\bibitem{KL80}
Ravindran Kannan and Richard~J. Lipton.
\newblock The orbit problem is decidable.
\newblock In {\em Proceedings of the twelfth annual ACM symposium on Theory of
  computing}, STOC, pages 252--261. ACM, 1980.

\bibitem{LY92}
D.~Lee and M.~Yannakakis.
\newblock Online minimization of transition systems (extended abstract).
\newblock In {\em Proceedings of the 24th annual ACM symposium on Theory of
  Computing}, STOC, pages 264--274. ACM, 1992.

\bibitem{convex2}
Peter McMullen and Geoffrey~Colin Shephard.
\newblock {\em Convex polytopes and the upper bound conjecture}, volume~3.
\newblock CUP Archive, 1971.

\bibitem{mignotteRootSep}
M.~Mignotte.
\newblock Some useful bounds.
\newblock {\em Computer Algebra}, pages 259--263, 1982.

\bibitem{OPW14}
J.~Ouaknine, J.~S. Pinto, and J.~Worrell.
\newblock On termination of integer linear loops.
\newblock In {\em Proceedings of SODA}. ACM-SIAM, 2015.

\bibitem{rp}
Jo{\"e}l Ouaknine and James Worrell.
\newblock Decision problems for linear recurrence sequences.
\newblock In Alain Finkel, J{\'e}r{\^o}me Leroux, and Igor Potapov, editors,
  {\em Reachability Problems}, volume 7550 of {\em Lecture Notes in Computer
  Science}, pages 21--28. Springer Berlin Heidelberg, 2012.

\bibitem{positivity}
Jo{\"e}l Ouaknine and James Worrell.
\newblock Positivity problems for low-order linear recurrence sequences.
\newblock {\em Proceedings of SODA}, 2014.

\bibitem{panApproximatingRoots}
V.~Pan.
\newblock Optimal and nearly optimal algorithms for approximating polynomial
  zeros.
\newblock {\em Computers {\&} Mathematics with Applications}, 31(12):97 -- 138,
  1996.

\bibitem{renegar}
James Renegar.
\newblock On the computational complexity and geometry of the first-order
  theory of the reals. part i: Introduction. preliminaries. the geometry of
  semi-algebraic sets. the decision problem for the existential theory of the
  reals.
\newblock {\em Journal of Symbolic Computation}, 13(3):255 -- 299, 1992.

\bibitem{SKOLEMREF}
T.~Tao.
\newblock {\em Structure and randomness: pages from year one of a mathematical
  blog}.
\newblock American Mathematical Society, 2008.

\bibitem{TV11}
S.~P. Tarasov and M.~N. Vyalyi.
\newblock Orbits of linear maps and regular languages.
\newblock In {\em Proc. Intern. Comp. Sci. Symp. in Russia (CSR)}, volume 6651
  of {\em LNCS}. Springer, 2011.

\bibitem{tarski}
Alfred Tarski.
\newblock A decision method for elementary algebra and geometry.
\newblock 1951.

\bibitem{convex3}
G{\"u}nter~M Ziegler.
\newblock {\em Lectures on polytopes}, volume 152.
\newblock Springer, 1995.

\end{thebibliography}
\end{document}